\title{Finite-memory Strategies for Almost-sure Energy-MeanPayoff Objectives in MDPs}
\titlerunning{Finite-memory Strategies for Almost-sure Energy-MeanPayoff in MDPs}
\author{Mohan Dantam}{School of Informatics, University of Edinburgh, UK}{}{}{}
\author{Richard Mayr}{School of Informatics, University of Edinburgh, UK}{}{}{}
\authorrunning{M.~Dantam and R.~Mayr}
\keywords{Markov decision processes, energy, mean payoff, parity, strategy complexity} 
\newcommand{\ie}{\emph{i.e.},\ }
\DeclareMathOperator{\parity}{\mathtt{Parity}}
\newcommand{\LimInf}{\operatorname{\mathtt{LimInf}}}
\newcommand{\LimSup}{\operatorname{\mathtt{LimSup}}}
\newcommand{\MP}{\operatorname{\mathtt{MP}}}
\newcommand{\energymove}[1]{\stackrel{#1}{\movesto}}
\newcommand{\en}{\mathtt{EN}}
\newcommand{\infix}{\mathtt{Infix}}
\newcommand{\Gain}{\mathtt{Gain}}
\newcommand{\Bailout}{\mathtt{Bailout}}
\newcommand{\tendsto}{\rightarrow}
\newcommand{\Q}{\mathbb{Q}}
\DeclarePairedDelimiter{\set}{\{}{\}}
\DeclarePairedDelimiter{\ceil}{\lceil}{\rceil}
\DeclarePairedDelimiter{\floor}{\lfloor}{\rfloor}
\newtheorem{fact}{Fact}
\newcommand{\+}[1]{\mathbb{#1}}
\newcommand{\?}[1]{\mathcal{#1}}
\newcommand{\N}{\+{N}}
\newcommand{\R}{\+{R}}
\newcommand{\x}{\times}
\newcommand{\Ocompl}{\?{O}}
\newcommand{\rsymbol}{\ocircle}
\newcommand{\zsymbol}{\Box}
\newcommand{\zstates}{\states_\zsymbol}
\newcommand{\rstates}{\states_\rsymbol}
\newcommand{\reachset}{T}
\newcommand{\abs}[1]{\lvert#1\rvert}
\newcommand{\size}[1]{\abs{#1}}
\newcommand{\eqby}[2][=]{\stackrel{\text{{\tiny{#2}}}}{#1}}
\newcommand{\eqdef}{\eqby{def}}
\renewcommand{\vec}[1]{\bm{{#1}}} 
\newcommand{\eps}{\varepsilon}
\newcommand{\NP}{\ensuremath{\mathsf{NP}}}
\newcommand{\coNP}{\ensuremath{\mathsf{coNP}}}
\newcommand{\problemx}[3]{
\par\noindent\underline{\sc#1}\par\nobreak\vskip.2\baselineskip
\begingroup\clubpenalty10000\widowpenalty10000
\setbox0\hbox{\bf INPUT:\ }\setbox1\hbox{\bf QUESTION:\ }
\dimen0=\wd0\ifnum\wd1>\dimen0\dimen0=\wd1\fi
\vskip-\parskip\noindent
\hbox to\dimen0{\box0\hfil}\hangindent\dimen0\hangafter1\ignorespaces#2\par
\vskip-\parskip\noindent
\hbox to\dimen0{\box1\hfil}\hangindent\dimen0\hangafter1\ignorespaces#3\par
\endgroup}
\NewDocumentCommand{\Prob}{O{} O{} m}{\ifthenelse{\isempty{#3}}{\mathcal{P}^{#1}_{#2}}{\mathcal{P}^{#1}_{#2}\lrc{#3}}}
\NewDocumentCommand{\expectation}{O{} O{} m}{{\mathcal{E}^{#1}_{#2}\lrc{#3}}}
\NewDocumentCommand{\valueof}{O{} O{} m}{{\mathtt{val}^{#1}_{#2}\lrc{#3}}}
\NewDocumentCommand{\limval}{O{} O{} m}{{\mathtt{Lval}^{#1}_{#2}\lrc{#3}}}
\NewDocumentCommand{\ST}{O{} O{}}{\ifthenelse{\isempty{#2}}{\ifthenelse{\isempty{#1}}{\mathtt{ST}}{\mathtt{ST}\lrc{#1}}}{\mathtt{ST}\lrc{#1,#2}}}
\NewDocumentCommand{\winset}{O{} O{}}{\ifthenelse{\isempty{#2}}{\ifthenelse{\isempty{#1}}{\mathtt{Win}}{\mathtt{Win}_{#1}}}{\ifthenelse{\isempty{#1}}{\mathtt{Win}\lrc{#2}}{\mathtt{Win}_{#1}\lrc{#2}}}}
\newcommand{\dist}{\mathcal{D}}
\newcommand{\supp}{{\sf supp}}
\newcommand{\always}{{\sf G}}
\newcommand{\eventually}{{\sf F}}
\renewcommand{\next}{{\sf X}}
\newcommand{\obj}{\mathtt{O}}
\newcommand{\hide}[1]{}
\newcommand{\lrc}[1]{\mleft(#1\mright)}
\newcommand{\lrd}[1]{\{#1\}}
\newcommand{\ignore}[1]{}
\newcommand{\nat}{\mathbb N}
\newcommand{\setcomp}[2]{\lrd{{#1} \mid {#2}}}
\newcommand{\given}{{\,\mid\,}}
\newcommand{\tuple}[1]{\lrc{#1}}
\newcommand{\mdp}{{\mathcal M}}
\newcommand{\mc}{{\mathcal A}}
\newcommand{\mdptuple}{\tuple{\states,\zstates,\rstates,\transition,\probp,\vec{r}}}
\newcommand{\states}{S}
\newcommand{\s}{s}
\newcommand{\transition}{{E}}
\newcommand{\movesto}{{\longrightarrow}}
\newcommand{\probp}{P}
\newcommand{\complementof}[1]{\overline{#1}}
\newcommand{\play}{\rho}
\newcommand{\playsof}[1]{{\it Runs}\lrc{#1}}
\newcommand{\partialplay}{\rho}
\newcommand{\zstrat}{\sigma}
\newcommand{\xstrat}{\tau}
\newcommand{\optzstrat}{\zstrat^*}
\newcommand{\zallstrats}[1]{\zstratset^{{#1}}}
\newcommand{\zfinstrats}[1]{\zstratset^{{#1}}_{\finite}}
\newcommand{\finite}{f}
\newcommand{\zstratset}{\Sigma}
\newcommand{\pz}{\zsymbol}
\newcommand{\memory}{{\sf M}}
\newcommand{\updatefun}{upd}
\newcommand{\memconf}{{\sf m}}
\newcommand{\memconfa}{{\sf a}}
\newcommand{\memconfset}{{\sf M}}
\newcommand{\memsuc}{{\sf nxt}}
\newcommand{\memup}{{\sf \updatefun}}
\newcommand{\memstrattuple}{\tuple{\memory,\initmem,\memup,\memsuc}}
\newcommand{\initmem}{\memconf_0}
\newcommand{\om}{\omega}
\newcommand{\probm}{{\mathcal P}}
\newcommand{\coloring}{{\mathit{C}ol}}
\mathchardef\mhyphen="2D %
\newcommand{\Buchi}{{B\"uchi}}
\newcommand{\F}{{\mathcal F}}
\newcommand{\EN}[1]{\mathsf{EN}(#1)}
\newcommand{\ES}[1]{\mathsf{ST}(#1)}
\newcommand{\AS}{\mathtt{AS}}
\newcommand{\successors}[1]{\mathsf{Succ}({#1})}
\begin{document}

\maketitle

\begin{abstract}
We consider finite-state Markov decision processes with 
the combined Energy-MeanPayoff objective.
The controller tries to avoid running out of energy while simultaneously attaining
a strictly positive mean payoff in a second dimension.

We show that \emph{finite memory} suffices for almost surely winning strategies for
the Energy-MeanPayoff objective.
This is in contrast to the closely related Energy-Parity objective, where
almost surely winning strategies require infinite memory in general.

We show that exponential memory is sufficient
(even for deterministic strategies)
and necessary
(even for randomized strategies)
for
almost surely winning Energy-MeanPayoff.
The upper bound holds even
if the strictly positive mean payoff part of the objective
is generalized to multidimensional strictly positive mean payoff.

Finally,
it is decidable in pseudo-polynomial time whether
an almost surely winning strategy exists.
\end{abstract}

\section{Introduction}\label{sec:intro}

\subparagraph{Background.}
Markov decision processes (MDPs) are a standard model for dynamic systems that
exhibit both stochastic and controlled behavior \cite{Puterman:book}.
MDPs play a prominent role in many domains, e.g., artificial intelligence and machine learning~\cite{sutton2018reinforcement,sigaud2013markov}, control theory~\cite{blondel2000survey,NIPS2004_2569}, operations research and finance~\cite{Sudderth:2020,Hill-Pestien:1987,bauerle2011finance,schal2002markov}, and formal verification~\cite{Flesch:JOTA2020,Sudderth:2020,FPS:2018,ModCheckHB18,ModCheckPrinciples08,chatterjee2012survey}.

An MDP is a directed graph where states are either controlled or random.
If the current state is controlled then the controller can choose a distribution over all possible successor states.
If the current state is random then the next state is chosen according to a
fixed probability distribution.
One assigns numeric rewards to transitions (and this can be generalized to
multidimensional rewards).
Moreover, priorities (aka colors), encoded by bounded non-negative numbers,
are assigned to states.
By fixing a strategy for the controller and an initial state, one obtains a probability space
of runs of the MDP. The goal of the controller is to optimize the expected value of
some objective function on the runs.

The \emph{strategy complexity} of a given objective is
the amount of memory (and randomization) needed for an optimal (resp.\ $\eps$-optimal)
strategy.
Common cases include memoryless strategies, finite-memory strategies, Markov
strategies (using a discrete clock, aka step counter), and general
infinite-memory strategies.

\subparagraph{Related work.}
The Parity, MeanPayoff and Energy objectives have been extensively studied in
the formal verification community.
A run satisfies the (min-even) {\em Parity objective} iff
the minimal priority that appears infinitely often in the run is even.
It subsumes all $\omega$-regular objectives, and in particular safety,
liveness, fairness, etc.
The {\em MeanPayoff objective} requires that the limit average reward per
transition along a run is positive (resp.\ non-negative in some settings).
MeanPayoff objectives go back to a 1957 paper by Gillette \cite{gillette1957stochastic} and have
been widely studied, due to their relevance for efficient control.
The {\em Energy objective} \cite{chakrabarti2003resource} requires that the
accumulated reward at any time in a run stays above some finite threshold
(typically $0$).
The intuition is that a controlled system has some finite initial energy level
that must never become depleted.

Combinations of these objectives have also been studied, where the runs need to
satisfy several of the above conditions simultaneously.

The existence of almost surely winning strategies for \emph{MeanPayoff-Parity}
in MDPs is decidable in polynomial time \cite{CD2011}.
These strategies require only finite memory for MeanPayoff $>0$ \cite{Gimbert2011ComputingOS},
but infinite memory for MeanPayoff $\ge 0$ \cite{CHJ2005}.

The existence of almost surely winning strategies for \emph{Energy-Parity}
in MDPs is decidable in $\NP \cap \coNP$ and in pseudo-polynomial time
\cite{MSTW2017}. (The $\NP \cap \coNP$ upper bound holds even for turn-based
stochastic games \cite{MSTW2021}.)
Almost surely winning strategies in MDPs require only finite memory in the special
case of Energy-B\"uchi \cite{CD2011}, but infinite memory for
Energy-co-B\"uchi and thus for Energy-Parity \cite{MSTW2017}.
However, $\eps$-optimal strategies for Energy-Parity require only finite
(at most doubly exponential) memory, and the value can be effectively approximated in doubly exponential
time (even for turn-based stochastic games) \cite{Dantam-Mayr:LIPIcs.MFCS.2023.38}.

The \emph{Energy-MeanPayoff} objective is similar to Energy-Parity, but
replaces the Parity part by a MeanPayoff objective for a second reward
dimension.
I.e., one considers an MDP with 2-dimensional transition rewards, where the
Energy condition applies to the first dimension and the MeanPayoff condition
applies to the second dimension. (It can be generalized to higher dimensions $d$,
where the MeanPayoff condition applies to all dimensions $2,3,\dots,d$.)
This might look like a direct generalization of the Energy-Parity objective, since
Parity games are reducible to MeanPayoff games \cite{Puri1995,Jurdzinski1998}.
However, this reduction does not work in the context of these combined
objectives when one considers stochastic systems like MDPs; see below.
Non-stochastic Energy-MeanPayoff games have been studied in \cite{bruyere_et_al:LIPIcs.CONCUR.2019.21}.

A sightly different objective has been studied in
\cite{Clemente-Raskin:2015} who
consider MDPs with $d$-dimensional rewards, where $d=d_1+d_2$.
The objective requires a strictly positive MeanPayoff
\emph{surely} in the first $d_1$ dimensions,
and \emph{almost surely} in the remaining $d_2$ dimensions.
This objective is strictly stronger than Energy-MeanPayoff.
E.g., a MeanPayoff of zero in the first dimension may or may not satisfy the Energy objective,
but it never satisfies the objective in \cite{Clemente-Raskin:2015}.

The objective studied in \cite{BKN2016} aims to maximize the expected
MeanPayoff (rather than the probability of it being strictly positive)
while satisfying the energy constraint.
However, unlike in our work, the reward function has a single dimension
(i.e., both criteria apply to the same value) and $\eps$-optimal strategies
can require infinite memory.

\subparagraph{Our contribution.}
We consider the Energy-MeanPayoff objective in MDPs with $d$-dimensional rewards.
The first dimension needs to satisfy the Energy condition (never drop below
$0$), while each other dimension needs to have a \emph{strictly} positive MeanPayoff.
We show that almost surely winning strategies for Energy-MeanPayoff require
only \emph{finite} memory.
\footnote{
Our results do \emph{not} carry over to Energy-MeanPayoff objectives
with \emph{non-strict} inequalities where
one just requires a MeanPayoff $\ge 0$ almost surely.
This needs infinite memory even for the case of $d=2$,
i.e., one energy-dimension and one MeanPayoff-dimension.
It suffices to modify the counterexample for Energy-co-B\"uchi from
\cite[Page 4]{MSTW2017} such that a visit to a state with unfavorable color
incurs a reward of $-1$ in the MeanPayoff-dimension.
}
This is in contrast to the Energy-Parity objective where almost surely winning
strategies require infinite memory in general \cite[Page 4]{MSTW2017}
(even for the simpler Energy-co-B\"uchi objectives).
This also shows that Energy-Parity is not reducible to Energy-MeanPayoff in
MDPs, unlike the reduction from Parity to MeanPayoff in
\cite{Puri1995,Jurdzinski1998}.

We show that almost surely winning strategies for Energy-MeanPayoff,
if they exist, can be chosen as deterministic strategies with
an exponential number of memory modes.
The crucial property is that it suffices to remember the stored energy only up to some
exponential upper bound.
A small counterexample shows the corresponding exponential lower bound.
Even for randomized strategies, an exponential number of memory modes is required,
and this holds even for the case of small transition rewards in $\{-1,0,+1\}$.

Although almost surely winning strategies are `exponentially large'
in this sense, 
their existence is still decidable in pseudo-polynomial time; cf.~\Cref{sec:conclusion}.

\section{Preliminaries}\label{sec:prelim}

A \textit{probability distribution} over a countable set $S$ is a function
$f\!: S \to [0,1]$ with $\sum_{s \in S} f(s) = 1$.
$\supp(f) \eqdef$ \mbox{$\setcomp{s}{f(s)>0}$} denotes the support of
$f$ and $\dist(S)$ is the set of all probability distributions over $S$.
Given an alphabet $\Sigma$,
let $\Sigma^{\om}$ and $\Sigma^{*}$ ($\Sigma^+$) denote the set of infinite
and finite (non-empty) sequences over $\Sigma$, respectively.
Elements of $\Sigma^{\om}$ or $\Sigma^*$ are called words.

\subparagraph{MDPs and Markov chains.}
A \emph{Markov Decision Process} (MDP) is a controlled stochastic directed graph $\mdp \eqdef \mdptuple$
where the set of vertices $\states$ (also called states) is partitioned into the states $\zstates$ of
the player $\pz$ (\emph{Maximizer}),
and chance vertices (aka random states) $\rstates$.
Let $\transition \subseteq \states \x \states$ be the transition relation.
We write $\s \movesto \s'$ if $\tuple{\s,\s'} \in \transition$ and
assume that
$\successors{\s} \eqdef \{\s' \mid \s\transition{}\s'\} \neq \emptyset$
for every state $\s$.
The \emph{probability function}~$\probp$
assigns each random state $\s \in \rstates$ a distribution over
its successor states, i.e., $\probp(\s) \in \dist(\successors{\s})$.
We extend the domain of $\probp$ to
$\states^*\rstates$ by $\probp(\partialplay\s) \eqdef \probp(\s)$
for all $\partialplay\s \in \states^+\rstates$.
A \emph{Markov chain} is an MDP with only random states,
i.e., $\zstates = \emptyset$.
In this paper we consider finite-state MDPs, i.e., the set of states $S$ is finite.

\subparagraph{Strategies.}
A \textit{run} is an infinite sequence $\s_0\s_1 \ldots \in \states^{\omega}$
such that $\s_i \movesto \s_{i+1}$ for all $i \ge 0$.
A \textit{path} is a finite prefix of a run.
Let $\playsof{\mdp} \eqdef \set*{\play = \lrc{q_i}_{i \in \N} \, | q_i \movesto q_{i+1}}$
denote the set of all possible runs.
A strategy of the player $\pz$ is a function
$\zstrat\! : \states^* \zstates \to \dist(\states)$
that assigns to every path
$w\s \in \states^* \zstates$
a probability distribution over the successors of $\s$.
If these distributions are always Dirac then the strategy is called
\emph{deterministic} (aka pure), otherwise it is called \emph{randomized}
(aka mixed).
The set of all strategies of player $\pz$ in
$\mdp$ is denoted by $\zallstrats{\mdp}$.
A run/path $\s_0\s_1 \ldots$ is compatible with a strategy
$\zstrat$ if $\s_{i+1} \in \supp(\zstrat(\s_0 \ldots \s_i))$
whenever $\s_i \in \zstates$.
Finite-memory strategies are a subclass of
strategies using a finite set $\memconfset$ 
of memory modes.
A function $\memsuc : \memconfset \x \zstates \mapsto \dist(\states)$
chooses a (distribution over) successor states based on the current memory
mode and state
and
$\memup : \memconfset \x \transition \mapsto \dist(\memconfset)$
updates the memory mode upon observing a transition.
Let $\zstrat[\memconf]$ denote the finite-memory strategy $\zstrat$ starting
in memory mode $\memconf$.
The set of all finite-memory strategies in $\mdp$ is denoted by $\zfinstrats{\mdp}$.
Strategies with memory $|\memconfset|=1$ are called \emph{memoryless}.
Memoryless deterministic (resp.\ randomized) strategies are called MD (resp.\ MR).
By fixing some finite-memory strategy $\zstrat$ from some initial state in a finite-state MDP
$\mdp$, we obtain a finite-state Markov chain, denoted by $\mdp^\zstrat$. 

\subparagraph{Measure.} An MDP $\mdp$ with initial state $\s_0$ and strategy
$\zstrat$ yields a probability space
$(\s_0\states^{\om},\F_{\s_0}, \Prob[\mdp][\zstrat,\s_0]{})$
where $\F_{\s_0}$ is the $\sigma$-algebra generated by the cylinder sets
$\s_0\s_1\ldots\s_n\states^{\om}$ for $n \ge 0$.
The probability measure $\Prob[\mdp][\zstrat,\s_0]{}$ is
first defined on the cylinder sets.
For $\partialplay = \s_0\ldots\s_n$, let
$\Prob[\mdp][\zstrat,\s_0]{\partialplay} \eqdef 0$ if
$\partialplay$ is not compatible with $\zstrat$ and
otherwise 
$\Prob[\mdp][\zstrat,\s_0]{\partialplay\states^{\om}} \eqdef
\prod_{i=0}^{n-1} \xstrat(\s_0\ldots\s_i)(\s_{i+1}) $ where $\xstrat$
is $\zstrat$ or $\probp$ depending on
whether $\s_i \in \zstates$ or $\rstates$, respectively. 
If $\mdp$ is a Markov chain then there is only a single strategy, and we simply
write $\Prob[\mdp][\s_0]{}$.
By Carath\'eodory's extension
theorem~\cite{billingsley2008probability}, this defines a unique probability
measure on the $\sigma$-algebra.
Given some reward function $v: \s_0\states^{\om} \to \mathbb{R}$,
we write $\expectation[][]{.}$
for the expectation w.r.t.~$\probm$ and $v$.

\subparagraph{Objectives.}
General objectives are defined by real-valued measurable functions.
However, we mostly consider indicator functions of measurable sets.
Hence, our objectives can be described by measurable subsets
$\obj \subseteq \states^{\om}$ of runs starting at a given initial state.
By $\Prob[\mdp][\zstrat,\s]{\obj}$ we denote the
payoff under $\zstrat$, i.e., 
the probability that runs from $\s$ belong to $\obj$.
The value of a state is defined as
$\valueof[\mdp][\obj]{\s} \eqdef \sup_{\zstrat \in \zallstrats{\mdp}} \Prob[\mdp][\zstrat,\s]{\obj}$.
For $\eps > 0$ and state $\s$, a strategy $\zstrat \in \zallstrats{\mdp}$
is $\eps$-optimal iff $\Prob[\mdp][\zstrat,\s]{\obj} \ge \valueof[\mdp][\obj]{\s} - \eps$.
A $0$-optimal strategy is called \emph{optimal}.
An MD/MR strategy is called \emph{uniformly} $\eps$-optimal (resp.\ uniformly optimal)
if it is so from every start state.
An optimal strategy from $\s$ is called \emph{almost surely winning} if $\valueof[\mdp][\obj]{\s}=1$.
By $\AS\lrc{\obj}$ (resp.\ $\AS_{\finite}\lrc{\obj}$)
we denote the set of states that have an almost surely winning
strategy (resp.\ an almost surely winning finite-memory strategy)
for objective $\obj$. For ease of presentation, we drop subscripts and superscripts 
wherever possible if they are clear from the context.

We use the syntax and semantics of the LTL operators~\cite{CGP:book}
\mbox{$\eventually$ (eventually)}, $\always$ (always) and $\next$ (next)
to specify some conditions on runs.
A reachability objective is defined by a set of target states
$\reachset \subseteq \states$. A run $\play = \s_0s_1 \ldots$
belongs to $\eventually\,\reachset$ iff $\exists i \in \nat\, \s_i \in \reachset$.
Similarly, $\play$ belongs to $\eventually^{\le n} \reachset$
(resp.\ $\eventually^{\ge n} \reachset$) iff
$\exists i \le n$ (resp.\ $i \ge n$) such that $\s_i \in \reachset$.
Dually, the safety objective
$\always\,\reachset$ consists of all runs
which never leave $\reachset$. We have $\always\,\reachset = \neg\eventually\neg\reachset$.

\subparagraph{Energy/Reward/Counter-based objectives.}
Let $r: E \to \set{-R,\dots,0,\dots,R}$ be a bounded function that assigns rewards
to transitions. Depending on context, the sum of these rewards in a path
can be viewed as energy, cost/profit or a counter.
If $\s \movesto \s'$ and $r((\s,\s')) = c$, we write $\s \energymove{c}
\s'$. Let $\play = \s_0 \energymove{c_0} \s_1 \energymove{c_1} \ldots$ be a run.
We say that $\play$ satisfies
\begin{enumerate}
\item
  the $k$-\textit{energy} objective $\EN{k}$ iff $\lrc{k + \sum_{i=0}^{n-1} c_i} \geq 0$ for all $n \ge 0$.
\item
  the \emph{$l$-storage condition} $\infix\lrc{l}$ if $l+\sum_{i=m}^{n-1} c_i \ge 0$
holds for every infix $s_m \energymove{c_m} \s_{m+1}\ldots s_n$ of the run.
Let $\ES{k,l}$ denote the set of runs that satisfy both the $k$-energy
and the $l$-storage condition. Let $\ES{k} \eqdef \bigcup_l \ES{k,l}$. Clearly, $\ES{k} \subseteq \EN{k}$.
\item
      \textit{Mean payoff} $\MP\lrc{\rhd\,c}$ for some constant $c \in \R$
      iff $\lrc{\liminf_{n \tendsto \infty}\frac{1}{n}\sum_{i=0}^{n-1} c_i } \rhd\; c$
      for $\rhd \in \set{<,\le,=,\ge,>}$.
\end{enumerate}
A different way to consider the energy objective is to encode the energy level
(the sum of the transition weights so far) into the state space and
then consider the obtained infinite-state game with a safety objective.

An objective $\obj$ is called \textit{shift-invariant}
iff for all finite paths $\partialplay$ and plays $\play' \in \states^{\omega}$,
we have $\partialplay \play' \in \obj \iff \play' \in \obj$.
Mean payoff objectives are shift-invariant, but energy and
storage/infix objectives are not.

\subparagraph{Multidimensional reward-based objectives.} Let $\N,\Q,\R$ denote the set of positive integers, rationals and reals respectively. For a $d$-dimensional real vector $\vec{\mu}$, let $\mu_i$ 
denote the $i^{th}$ component of $\vec{\mu}$ for $1 \leq i \leq d$. Given two vectors $\vec{\mu},\vec{\nu} \in \R^d$, $\sim \in \set{<,\leq,>,\geq,=}$ we say $\vec{\mu} \sim \vec{\nu}$ if $\mu_i 
\sim \nu_i$ for every $i$. In particular, $\vec{\mu} > \vec{0}$ means that
\emph{every} component of $\vec{\mu}$ is strictly greater than $0$.
For a multidimensional reward function
$\vec{r}: \transition \to [-R,R]^d$,
we can consider any boolean combination of reward based objectives using any components of $\vec{r}$. For instance, 
$\obj_1 = \en_1(k) \, \cap \MP_2\lrc{> 0}$ denotes the objective that contains all runs that satisfy $\en(k)$ in the $1^{st}$ dimension and $\MP\lrc{> 0}$ in the $2^{nd}$ one.
We denote conjunctions of the same objective across different dimensions in
vectorized form, with the dimension information in the subscript. Therefore, 
$\vec{\en_{[a,b]}(\vec{k})} \, \cap \, \vec{\MP_{[c,d]}\lrc{> \vec{x}}}$
denotes the runs where the $\en_i(k_i)$ objective is satisfied for each $i \in [a,b]$
and the $\MP_j\lrc{> x_j}$ objective is satisfied for each $j \in [c,d]$. 
Given an infinite run $\play =  \s_0 \energymove{\vec{c_0}} \s_1 \energymove{\vec{c_1}} \ldots$,
let $X_n\lrc{\play} \eqdef \s_n$ denote the $n$-th state.
Let $\vec{Y_n}$ be the sum of the rewards in the first $n$ steps, i.e., $\vec{Y_n}\lrc{\rho} \eqdef \sum_{i=0}^{n-1} \vec{c_i}$.  
These become random variables once an initial distribution and a strategy are fixed.

\subparagraph{Size of an instance.}
Given an MDP $\mdp = \mdptuple$ with reward function
$\vec{r}: \transition \to [-R,R]^d$, its size $|\mdp|$
is the number of bits used to describe it.
Similarly for $|\probp|$.
Transition probabilities and rewards can thus be stored in binary.
We call a size pseudo-polynomial in $|\mdp|$ if it is polynomial for the
case where $R$ is `small', i.e., if $R$ is given in unary.

\section{The Main Result}\label{sec:result}

\begin{theorem}\label{thm:main}
  Let $\mdp = \mdptuple$ be an MDP with $d$-dimensional rewards on the edges
  $\vec{r}: \transition \to [-R,R]^d$.
  For the multidimensional Energy-MeanPayoff objective
  $\en_1(k)\,\cap\,\vec{\MP_{[2,d]}\lrc{> 0}}$ the following properties hold.
  \begin{enumerate}
  \item
    The existence of an almost-surely winning strategy implies
    the existence of an almost-surely winning finite-memory strategy. \label{res:inf_str ==> fin_str}
  \item
    Moreover, a deterministic strategy with an exponential number of memory modes is
    sufficient. \label{res:mem_bound}
  \item
    An exponential (in $|P|$) number of memory modes is necessary in general,
    even for randomized strategies,
    even for
    $|\states|=5$, $d=2$ and $R=1$.
    \label{res:no_fixed_mem_suffices}
    \end{enumerate}
\end{theorem}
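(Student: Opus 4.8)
The plan is to derive parts (1) and (2) from a single construction and then establish the lower bound (3) by an explicit family of MDPs. Since part (1) is immediate from part (2), I would concentrate on the exponential upper bound and the matching lower bound.

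\textbf{Upper bound (parts 1 and 2).} The starting point is to upgrade almost-sure winning for the plain energy objective $\en_1(k)$ to almost-sure winning for the \emph{storage} version $\ES{k,l}$ for some $l$ that is exponential in $|\mdp|$. Concretely, I would show that if any almost-surely winning strategy for $\en_1(k)\,\cap\,\vec{\MP_{[2,d]}\lrc{>0}}$ exists, then the energy level never needs to climb above some threshold $B$ that is exponential in $|\mdp|$ and $|\probp|$. Granting this, encoding the energy level capped at $B$ into the state space reduces the original infinite-state problem to a \emph{finite} MDP, in which the first conjunct becomes a safety condition (never fall out through the bottom of the window) and the second conjunct is unchanged. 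For the conjunction of a safety objective with strictly positive multidimensional mean payoff in a finite MDP, almost-sure winning admits finite-memory deterministic strategies by standard results on mean-payoff MDPs. Composing such a strategy with the capped-energy counter yields a deterministic strategy whose memory is (finitely many control modes) $\times$ (energy level in $\set{0,\dots,B}$), hence exponentially many modes in total, which gives both (2) and, a fortiori, (1).

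\textbf{Main obstacle.} The nontrivial point is proving that capping the energy at the exponential bound $B$ preserves almost-sure satisfaction of \emph{both} conjuncts simultaneously. Capping is sound for the energy floor because discarding surplus above $B$ never lowers a future partial sum below $0$; the real content is that the controller can still win after forgetting that surplus. I would establish a \emph{recharge lemma}: from every winning configuration there is a finite-memory scheme that, in a bounded number of steps in expectation, returns the energy to a bounded window while keeping it non-negative and making strictly positive mean-payoff progress. The bound $B$ must be large enough to absorb the worst-case energy dip incurred during such a stretch, and this worst case is governed by the fluctuations of the transition probabilities $\probp$, which is where the exponential size of $B$ enters. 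For the mean-payoff dimensions I would use that strict positivity gives a uniform positive drift, so by a law-of-large-numbers argument the partial sums $\vec{Y_n}$ grow linearly almost surely, unaffected by the bounded perturbations introduced by the recharging phases.

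\textbf{Lower bound (part 3).} I would exhibit a family of MDPs $\mdp_n$ with $|\states|=5$, $d=2$, $R=1$, in which a single random state implements a biased $\pm 1$ energy step whose bias is encoded by probabilities of denominator $2^n$, so that $|\probp| = O(n)$. The gadget is arranged so that the correct controller action --- play \emph{greedy} (gaining mean payoff but risking energy) versus play \emph{safe} (recharging energy at the cost of mean payoff) --- depends on whether the current energy lies below or above a threshold of order $2^n$. Any almost-surely winning strategy must therefore distinguish $\Omega(2^n)$ distinct energy levels: conflating two levels that straddle the threshold would, with positive probability, either violate the energy floor in dimension $1$ or fail to secure strictly positive mean payoff in dimension $2$. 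Since the energy level is a per-run (sure) quantity and the contradiction is obtained by a pumping/coupling argument on runs rather than on expectations, randomization cannot offload this distinction, so at least $\Omega(2^n)$ memory modes are required even for randomized strategies, which is exponential in $|\probp| = O(n)$.
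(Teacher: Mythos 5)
Your overall architecture --- truncate the tracked energy at an exponential bound $B$, encode the truncated counter into the state space, and solve multidimensional mean payoff in the resulting finite MDP, plus a small five-state family for the lower bound --- matches the paper's. The genuine gap is in the one step you yourself flag as the ``main obstacle,'' and the recharge lemma you propose does not close it. First, its statement is wrong in one respect: the recharging phases do \emph{not} make ``strictly positive mean-payoff progress'' in dimensions $2,\dots,d$; they focus exclusively on dimension~$1$ and may lose up to $R$ per step in every other dimension. Second, the size of $B$ is not governed by ``the worst-case energy dip incurred during such a stretch'': there is no worst-case bound on downward excursions, only a probabilistic one. The real difficulty is that once the tracked energy is capped at $B$, the strategy can no longer certify that it will never need another recharge, so recharges recur \emph{infinitely often} with positive probability; one must show they recur rarely enough that their (a priori unbounded, negative) contribution to dimensions $2,\dots,d$ is dominated by the gains made between them. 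Your appeal to a law of large numbers with ``bounded perturbations'' begs exactly this question. The paper's proof of this step is its technical core: a martingale/Azuma--Hoeffding analysis of first-passage times for the truncated reward process, showing that the expected length of a Gain phase grows like $1/\delta$ (exponentially in the cap) while the expected length, and expected mean-payoff loss, of a Bailout phase grow only logarithmically in $1/\delta$, whence $v^1_i+v^2_i>0$ for an exponential cap. Without a quantitative statement of this kind the reduction to the finite MDP is unjustified. (You also do not address the corner case of end components in which the energy cannot be pumped up at all and only fluctuates in a bounded window; the paper treats these separately.)

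For the lower bound your gadget is in the right family, but ``any winning strategy must distinguish $\Omega(2^n)$ energy levels, by a pumping/coupling argument'' is not a proof: a finite-memory strategy is under no obligation to track energy, only to win, and in the paper's instance the decision threshold is energy $=0$, not a threshold of order $2^n$. What actually forces the blow-up is a frequency argument on the induced finite Markov chain: with $m$ memory modes every reachable BSCC has at most $5m$ states; if some cycle of a BSCC avoids the safe state $s_r$ then the energy floor is violated with positive probability (a recurrent $\pm 1$ component has unbounded downward excursions), and otherwise $s_r$ is hit at least once every $5m$ steps, so the dimension-$2$ mean payoff is at most $5m\delta-1<0$ for $\delta=1/(6m)$. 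An argument of this kind is what you would need to make item~3 rigorous, in particular for randomized strategies.
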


In the following three sections we prove items 1.,2.,3. of \Cref{thm:main},
respectively.

Here we sketch the main idea for the upper bound.
Except in a special corner case where the energy fluctuates only in a bounded
region, almost-surely winning strategies for Energy-MeanPayoff
can be chosen among some particular strategies 
that alternate between two modes, playing two different memoryless strategies.
This alternation keeps the balance between the Energy-part and the 
MeanPayoff-part of the objective.
This is similar to almost-surely winning strategies for the Energy-Parity objective
in \cite{MSTW2017}.
In one mode, one plays a randomized memoryless strategy that almost surely yields a positive
mean payoff in all dimensions (in case of Energy-Parity, instead of mean payoff
it satisfies Parity almost surely). This is called the \emph{Gain} phase.
Whenever the energy level (the cumulative reward in dimension 1)
gets dangerously close to zero, one switches to the other mode and plays
a different memoryless strategy
that focuses exclusively on getting the energy level up again,
while temporarily neglecting the other part of the objective
(Parity or Mean payoff, respectively). This is called a \emph{Bailout}.
Once the energy level is sufficiently high, one switches back to the Gain
phase again.
The crucial property is that, except in a null set, only finitely many
Bailouts are required, and thus the temporary neglect of the second part of
the objective does not matter in the long run.
Such a strategy uses infinite memory, because it needs to remember the
unbounded energy level.
For Energy-Parity (and even Energy-co-B\"uchi) this cannot be avoided
and finite-memory strategies do not work \cite{MSTW2017}. 
However, for Energy-MeanPayoff one can relax the requirements somewhat.
Suppose that one records the stored energy only up to a certain bound $b$,
i.e., one forgets about potential excess energy above $b$.
In that case, one might have to do infinitely many Bailouts
with high probability, most of which are unnecessary (but one does not know
which ones). However, for a sufficiently large bound $b$, these superfluous
Bailouts occur so infrequently that they do not compromise the
MeanPayoff-part of the objective.
The critical part of the proof is to show this property and
an upper bound on $b$.
Once this is established, one obtains a finite-memory strategy,
because it suffices to record the energy level only in the 
range $[0,b]$ (plus one extra bit of memory to record the current phase, Gain
or Bailout).

Note that the argument above is different from the one that justifies
finite-memory $\eps$-optimal strategies for \emph{Energy-Parity} in
\cite{Dantam-Mayr:LIPIcs.MFCS.2023.38}.
These also record the energy only in a bounded region, but stop doing Bailouts
after the upper bound has been visited.
I.e., they do too few Bailouts, and thus incur an
$\eps$-chance of losing.
In contrast, our almost-surely winning strategies for 
Energy-MeanPayoff rather do too many Bailouts, but sufficiently infrequently
such that they don't compromise the objective.

\section{Proof of \texorpdfstring{\Cref{res:inf_str ==> fin_str}}{finite memory strategy existence}} \label{sec:inf_str ==> fin_str}

W.l.o.g, we assume that every state in
$\mdp$ has an almost surely winning strategy for Energy-MeanPayoff for some
initial energy level. (Otherwise, consider a suitably restricted sub-MDP.)
For conciseness, we denote the objective by
$\obj\lrc{k} \eqdef \en_1(k)\,\cap\,\vec{\MP_{[2,d]}\lrc{> 0}}$.
Let 
$$\winset\lrc{\s} \eqdef \setcomp{k}{\s \in \AS\lrc{\obj\lrc{k}}},\quad i_{\s} \eqdef \min\lrc{\winset\lrc{\s}}$$ 
denote the possible initial energy levels and the minimum initial energy level
such that one can win almost surely from state $\s$.
In particular, $i_{\s}$ is well defined by our assumption on $\mdp$.

Towards a contradiction, 
assume that not all configurations are winnable with a finite-memory strategy.
I.e., let $\winset_{\finite}\lrc{\s} \eqdef \setcomp{k}{\s \in
  \AS_f\lrc{\obj\lrc{k}}}$ denote the energy levels 
from which one can win almost surely with a \emph{finite-memory} strategy from $\s$,
and assume that there is a state $\s^{\dagger}$ such that 
$i_{\s^\dagger} \notin \winset_{\finite}\lrc{\s^{\dagger}}$.
We then construct a finite-memory winning strategy from $\s^{\dagger}$
for $\obj\lrc{i_{\s^\dagger}}$, leading to a contradiction.
Similar to $i_\s$, let $f_\s$ denote the 
minimal $k$ such that $k \in \winset_{\finite}\lrc{\s}$ and $\infty$ if there
is no such $k$.

\begin{definition}
We construct a new MDP $\mdp^\ast$ which abstracts away all the $\winset_{\finite}$ configurations. 
At every state $\s$, the player gets the option to enter a
winning sink state if the energy level is sufficiently large to win with
finite memory, i.e., if the current energy level is at least $f_\s$.
The states of the MDP $\mdp^\ast$ will have two copies of each state $\s$ of
$\mdp$, namely $\s$ and $\s^\prime$. Moreover, we add a new state 
$\s_\textrm{win}$.
All states $\s^\prime$ are controlled by $\pz$ and every step
$\s_1 \movesto \s$ in the original MDP $\mdp$ is now mapped to a step
$\s_1 \movesto \s^\prime$ with the same reward
(and the same probability if $\s_1$ was a random state).
In $\s^\prime$, the player has two choices: he can either go to $\s$ with
reward $\vec{0}$ or go to $\s_\textrm{win}$ with reward
$\tuple{-f_\s,\vec{0}}$. The latter choice is only available if $f_\s < \infty$.
$\s_\textrm{win}$ is a winning sink where
$\s_{\textrm{win}} \movesto \s_{\textrm{win}}$ with reward $\vec{1}$, i.e.,
reward $+1$ in all dimensions.
\end{definition}


The following lemma shows that the existence of almost surely winning (finite-memory) strategies
coincides in $\mdp^\ast$ and $\mdp$.

\begin{lemma}\label{lem:mdp_equiv_mdp^ast}
  Let $\s \in \states$ and $k \in \N$, and consider the objective $\obj\lrc{k}$.
  There exists an almost surely winning
  strategy $\zstrat^\ast$ from $\s$ in $\mdp^\ast$ if and only if
  there exists an almost surely winning
  strategy $\zstrat$ from $\s$ in $\mdp$.
  Moreover, if $\zstrat^\ast$ is finite-memory then $\zstrat$ can be chosen as
  finite-memory, and vice-versa.
\end{lemma}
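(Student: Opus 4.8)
The plan is to prove the two directions of the equivalence by showing how to transfer a strategy between $\mdp$ and $\mdp^\ast$, carefully tracking that the objective $\obj\lrc{k}$ is preserved and that finite memory is preserved in both directions. The construction of $\mdp^\ast$ is essentially an augmentation of $\mdp$ with an early-exit option: whenever the accumulated energy reaches $f_\s$ at state $\s$, the player may cash out to the winning sink $\s_\textrm{win}$, paying a one-time energy cost of $f_\s$ and thereafter collecting reward $+1$ in every dimension forever. The key semantic fact I would establish first is that reaching $\s_\textrm{win}$ with a nonnegative residual energy (after subtracting $f_\s$) guarantees winning $\obj\lrc{k}$: the sink satisfies $\MP\lrc{>0}$ in all MeanPayoff dimensions trivially (reward $+1$), and the residual energy, being nonnegative upon entry and strictly increasing thereafter, satisfies $\en(k)$ in dimension~$1$. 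Thus the exit to $\s_\textrm{win}$ is sound precisely because $f_\s \in \winset_f\lrc{\s}$ means energy level $f_\s$ at $\s$ is winnable (with finite memory) in $\mdp$ — so cashing out loses no winning power.

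For the direction from $\mdp$ to $\mdp^\ast$: given an almost surely winning $\zstrat$ in $\mdp$, I would simply mimic it in $\mdp^\ast$, always choosing the $\s^\prime \movesto \s$ branch (never exiting). Runs in $\mdp^\ast$ that avoid $\s_\textrm{win}$ project bijectively onto runs in $\mdp$ with identical rewards (the $\s^\prime$ states add reward $\vec{0}$ steps, which I must argue do not affect the energy or MeanPayoff conditions — the MeanPayoff is unchanged since interleaving zero-reward steps scales the denominator by a constant factor~$2$ and preserves the sign of the $\liminf$, and the energy/storage conditions are preserved since a $\vec{0}$ step never decreases accumulated energy). Hence $\zstrat^\ast$ inherits the almost-sure winning property, and it has the same memory as $\zstrat$, so finite memory is preserved.

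For the converse direction from $\mdp^\ast$ to $\mdp$, which I expect to be the main obstacle: given an almost surely winning $\zstrat^\ast$ in $\mdp^\ast$, I would decompose runs by whether and when they enter $\s_\textrm{win}$. On runs that never enter $\s_\textrm{win}$, the projection back to $\mdp$ works as above. The difficulty is the runs that do exit to $\s_\textrm{win}$: in $\mdp$ there is no such sink, so I must \emph{replace} the exit at state $\s$ (with energy at least $f_\s$) by the finite-memory winning strategy witnessing $f_\s \in \winset_f\lrc{\s}$. Concretely, I would build $\zstrat$ as a composition: follow $\zstrat^\ast$'s moves (projected to $\mdp$) until $\zstrat^\ast$ chooses to cash out at some $\s^\prime$, and from that point switch permanently to the finite-memory strategy for $\obj\lrc{f_\s}$ from $\s$. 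The soundness hinges on the fact that when $\zstrat^\ast$ chooses to exit, the current energy in dimension~$1$ is at least $f_\s$ (this must be an invariant I extract from the fact that $\zstrat^\ast$ is almost surely winning and the exit edge subtracts $f_\s$ while preserving $\en(k)$), so the spliced-in strategy can win from that configuration. Since $f_\s < \infty$ exactly when exiting is available, the spliced strategy is always finite-memory.

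The finite-memory preservation in this converse direction is the delicate bookkeeping step. If $\zstrat^\ast$ is finite-memory, then $\zstrat$ uses the finitely many memory modes of $\zstrat^\ast$ during the simulation phase, and upon cashing out switches to one of finitely many finite-memory strategies (one per state~$\s$ with $f_\s < \infty$); the disjoint union of these finite memory sets, together with a flag recording whether the switch has occurred, gives a finite-memory strategy for $\zstrat$. The one subtlety I would check is that the switch is triggered deterministically by an observable event (the choice of the exit edge in $\mdp^\ast$, which corresponds to a detectable configuration in $\mdp$), so no unbounded memory is needed to decide when to switch. Combining both directions, along with the observation that memorylessness/finite-memory is symmetric under the projection, yields the full equivalence including the "vice-versa" clause.
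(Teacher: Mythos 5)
Your proposal is correct and follows essentially the same route as the paper's proof: mimic the strategy across the two MDPs, never exiting in the $\mdp \to \mdp^\ast$ direction (noting the zero-reward primed steps only halve the mean payoff and so preserve its sign), and in the $\mdp^\ast \to \mdp$ direction splice in the finite-memory witness for $\obj\lrc{f_\s}$ at the moment $\zstrat^\ast$ would exit, using sure energy-safety to conclude the energy is then at least $f_\s$. The only cosmetic difference is that the paper triggers the switch as soon as the exit edge receives \emph{non-zero probability} (to cover randomized $\zstrat^\ast$), a detail your phrasing leaves implicit but which does not affect the argument.
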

\begin{proof}
  Towards the `only if' direction, let $\zstrat^\ast$ be a strategy from $\s$
  in $\mdp^\ast$ that is almost surely winning for $\obj\lrc{k}$.
  We define a strategy $\zstrat$ from $\s$ in $\mdp$ that plays as follows.
  First $\zstrat$ imitates the moves of $\zstrat^\ast$ until (if ever)
  $\zstrat^\ast$ chooses a move $\s^\prime_1 \to \s_\textrm{win}$ with non-zero
  probability at some state $\s^\prime_1$.
  This is possible, since any finite path in $\mdp^\ast$ that does not contain $\s_\textrm{win}$
  can be bijectively mapped to a path in $\mdp$. The only difference is that
  paths in $\mdp^\ast$ contain extra steps via primed states, which are
  skipped in the paths in $\mdp$. Moreover, the transition probabilities at
  random states coincide in $\mdp^\ast$ and $\mdp$.
  If $\zstrat^\ast$ chooses a move $\s^\prime_1 \to \s_\textrm{win}$ with non-zero
  probability at some state $\s^\prime_1$ then the current energy level must
  be $\geq f_{\s_1}$, because $\zstrat^\ast$ satisfies the energy objective
  almost surely (and thus even surely). 
  Thus, in $\mdp$, there exists an almost surely winning finite-memory 
  strategy $\hat{\zstrat}$ for $\obj\lrc{f_{\s_1}}$ from $\s_1$.
  In this situation $\zstrat$ continues by playing $\hat{\zstrat}$ from $\s_1$.
  Therefore, $\zstrat$ satisfies the energy objective surely.
  Moreover, by shift invariance and the properties of $\hat{\zstrat}$,
  it also satisfies the Mean payoff objective almost surely.
  Thus, $\zstrat$ satisfies $\obj\lrc{k}$ almost surely.
Finally, if $\zstrat^\ast$ is finite-memory then so is $\zstrat$,
because $\hat{\zstrat}$ is also finite-memory.

Towards the `if' direction, let $\zstrat$ be a strategy from $\s$
in $\mdp$ that is almost surely winning for $\obj\lrc{k}$.
We define a strategy $\zstrat^\ast$ from $\s$ in $\mdp^\ast$ that 
imitates the moves of $\zstrat$. Moreover, at primed states
$q'$ it always goes to $q$ (and never to $\s_\textrm{win}$).
Since the probabilities at random states coincide in $\mdp^\ast$ and $\mdp$,
also the probabilities of the induced paths coincide.
The only difference is that the runs in $\mdp^\ast$ contain extra steps via
primed states and these extra steps carry reward zero.
Thus, the mean payoff of a run in $\mdp^\ast$ is $1/2$ the mean payoff of the
corresponding run in $\mdp$. However, this does not affect the property that
the mean payoff is $>0$ almost surely in either MDP.
Thus, $\zstrat^\ast$ satisfies $\obj\lrc{k}$ almost surely.
Finally, if $\zstrat$ is finite-memory then so is $\zstrat^\ast$.
\end{proof}

\ignore{
The following two technical lemmas establish properties of $\mdp^\ast$
that are used in the proof of \Cref{lem:md_bailout_strategy}.

\begin{lemma}\label{lem:no-win-M-star}
    In $\mdp^\ast$, there is no almost surely winning strategy for $\en_1(i_{\s^\dagger}) \, \cap\, \eventually\,\s_{\textrm{win}}$ when starting from $\s^\dagger$.
\end{lemma}
\begin{proof}
Recall that $\s^{\dagger}$ has been chosen such that $i_{\s^\dagger} \notin \winset_{\finite}\lrc{\s^{\dagger}}$.
Towards a contradiction, assume that there exists
a strategy $\zstrat_1$ that is almost surely winning for the given objective
in $\mdp^\ast$ from $\s^\dagger$.
We then construct an almost surely winning finite-memory strategy $\zstrat_2$ 
for $\obj\lrc{i_{\s^\dagger}}$ from $\s^\dagger$ in $\mdp$.
This implies that $i_{\s^\dagger} \in \winset_{\finite}\lrc{\s^{\dagger}}$, a
contradiction.

First we observe that $\zstrat_1$ can be chosen as finite-memory, since even
for the slightly more general Energy-\Buchi\ objectives,  
almost surely winning strategies can be chosen as finite-memory \cite{MSTW2017,CD2011}.
Consider the following finite-memory strategy $\zstrat_2$ in $\mdp$ from $\s^{\dagger}$.
It behaves like $\zstrat_1$ until it is just about to enter
$\s_\textrm{win}$ from some state $\s^{\prime}$,
at which point it switches to one of the finite-memory almost surely winning strategies for 
$\obj\lrc{f_\s}$ from $\s$.
This is possible, because $\zstrat_1$ is energy safe
and the step to $\s_\textrm{win}$ carries a reward of $-f_\s$. 
Thus, $\zstrat_2$ satisfies the Energy objective $\en_1(i_{\s^\dagger})$.
Moreover, the switch almost surely happens, since 
$\zstrat_1$ almost surely enters $\s_\textrm{win}$.
By the shift invariance of $\vec{\MP_{[2,d]}\lrc{> 0}}$,
$\zstrat_2$ also satisfies this part of the objective and thus satisfies
$\obj\lrc{i_{\s^\dagger}}$ almost surely.
\end{proof}

\Cref{lem:no-win-M-star} shows that any strategy which satisfies
$\en_1(i_{\s^\dagger})$ almost surely cannot guarantee to visit
$s_{\textrm{win}}$ in $\mdp^{\ast}$.
In other words, looking at it in terms of runs in $\mdp$, when playing an
almost surely winning strategy for $\obj\lrc{i_{\s^\dagger}}$ from $s^\dagger$,
there is always a non-zero chance that one never hits an energy level
$\ge f_\s$ at state $\s$.
}

The next lemma shows that, in $\mdp^{\ast}$, it is impossible to
satisfy Energy-MeanPayoff from $\s$ with arbitrarily high probability,
unless one also allows arbitrarily large fluctuations in the energy level,
or $f_\s = i_\s$. (Recall that $f_\s, i_\s$ are defined relative to $\mdp$.)

\begin{lemma}\label{lem:fluctuate}
For every state $\s$ with $f_\s > i_\s$ and every
$\ell \in \N$, there exists a $\delta_\ell > 0$ such that
$\valueof[\mdp^\ast][\obj\lrc{i_{\s}} \, \cap \, \infix_1\lrc{\ell}]{\s}
\leq 1 - \delta_\ell$.
\end{lemma}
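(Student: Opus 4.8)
The plan is to turn the bounded-fluctuation energy objective into a finite-state safety condition and then invoke the theory of strict multi-dimensional mean payoff on finite MDPs. The key observation for \emph{Step 1} is that, although the energy level is unbounded, the conjunction $\en_1(i_\s)\cap\infix_1\lrc{\ell}$ only constrains a bounded window: $\infix_1\lrc{\ell}$ forces the dimension-$1$ energy to stay within $\ell$ of its running maximum, and once that running maximum exceeds $\ell$ the lower bound demanded by $\en_1$ is automatic. Hence I would build a deterministic finite monitor $\A_\ell$ (with $O(R+\ell)$ modes) reading the dimension-$1$ rewards that tracks the exact energy while it is below $\ell$, and afterwards only the gap to the running maximum, rejecting as soon as the energy would drop below $0$ or the gap would exceed $\ell$ (a large drop, such as the reward $-f_t$ on an edge to $\s_\textrm{win}$, simply triggers rejection unless $f_t\le\ell$). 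Taking the synchronous product $\mdp'\eqdef\mdp^\ast\times\A_\ell$ yields a finite MDP in which a run satisfies $\en_1(i_\s)\cap\infix_1\lrc{\ell}$ iff it stays in the non-rejecting set $\safeset$, while the rewards of dimensions $2,\dots,d$ are untouched. Thus $\valueof[\mdp^\ast][\obj\lrc{i_\s}\,\cap\,\infix_1\lrc{\ell}]{\s}$ equals the value of $\always\safeset\,\cap\,\vec{\MP_{[2,d]}\lrc{>0}}$ in $\mdp'$ from the corresponding initial state.

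For \emph{Step 2}, suppose towards a contradiction that no $\delta_\ell>0$ works; since the value is a supremum bounded by $1$, this means the value above equals $1$. I would show that value $1$ for $\always\safeset\,\cap\,\vec{\MP_{[2,d]}\lrc{>0}}$ on a finite MDP entails a finite-memory almost-surely winning strategy. Since this objective is contained in the pure safety objective $\always\safeset$, its value is at most the safety value; as safety value $1$ holds exactly on the almost-sure safe region of $\mdp'$, the initial state lies in that region, and restricting the MDP to it makes safety free, leaving the shift-invariant objective $\vec{\MP_{[2,d]}\lrc{>0}}$. I would then apply the end-component decomposition: almost surely the set of states visited infinitely often forms an end component, which (being contained in the safe region) is safe, and $\vec{\MP_{[2,d]}\lrc{>0}}$ is satisfiable almost surely inside it iff its achievable mean-payoff polytope contains a strictly positive vector (this is where strictness matters; cf.~\cite{Gimbert2011ComputingOS}). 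Consequently the value equals the probability of almost-surely reaching the union of such winning end components, so value $1$ means this union is reachable almost surely, and a finite-memory strategy that reaches it and then plays the finite-memory mean-payoff strategy inside wins almost surely.

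In \emph{Step 3}, such a finite-memory strategy in $\mdp'$ projects, using the monitor modes as additional finite memory, to a finite-memory strategy in $\mdp^\ast$ that wins $\obj\lrc{i_\s}\,\cap\,\infix_1\lrc{\ell}$ almost surely, hence also wins the weaker objective $\obj\lrc{i_\s}$ almost surely. By \Cref{lem:mdp_equiv_mdp^ast} there is then a finite-memory almost-surely winning strategy for $\obj\lrc{i_\s}$ in $\mdp$, so $i_\s\in\winset_{\finite}\lrc{\s}$ and thus $f_\s\le i_\s$. Since $\winset_{\finite}\lrc{\s}\subseteq\winset\lrc{\s}$ always gives $f_\s\ge i_\s$, we get $f_\s=i_\s$, contradicting the hypothesis $f_\s>i_\s$; this establishes the existence of the required $\delta_\ell$.

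The main obstacle is Step 2: the bridge from value $1$ to an actual finite-memory almost-surely winning strategy. Safety is handled cleanly through the almost-sure safe region, but the mean-payoff part needs the satisfaction semantics of strict multi-dimensional mean payoff — in particular the fact that, within an end component, a strictly positive mean payoff in \emph{all} dimensions is either almost-surely achievable (value $1$) or has value $0$, with no intermediate value arising from boundary behavior of the frequency polytope. Assembling this through the end-component decomposition together with almost-sure reachability, and checking that finite memory suffices throughout, is the crux; the monitor construction of Step 1 is the enabling idea but is otherwise routine.
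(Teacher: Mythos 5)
Your proposal is correct and follows essentially the same route as the paper: assume the value is $1$, encode the bounded energy window into a finite product MDP (the paper truncates the energy at $i_\s+\ell$ and adds a losing sink, which plays the role of your monitor), obtain a finite-memory almost-surely winning strategy for $\vec{\MP_{[2,d]}\lrc{>0}}$ there, pull it back to $\mdp^\ast$ and then to $\mdp$ via \Cref{lem:mdp_equiv_mdp^ast}, and derive $f_\s=i_\s$, contradicting $f_\s>i_\s$. The only difference is that you spell out the bridge from value $1$ to an almost-surely winning strategy via end-component decomposition, whereas the paper delegates this to \Cref{lem:mr_gain_strategy} (i.e., to \cite{brazdil2014markov}).
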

\begin{proof}
  Towards a contradiction, assume
  that $\valueof[\mdp^\ast][\obj\lrc{i_{\s}} \, \cap \, \infix_1\lrc{\ell}]{\s} = 1$ for some $\ell$.

 $\obj\lrc{i_{\s}}\, \cap \, \infix_1\lrc{\ell} =
 \en_1(i_{\s})\,\cap\,\vec{\MP_{[2,d]}\lrc{> 0}} \, \cap \, \infix_1\lrc{\ell}
 = \ST_1\lrc{i_{\s},\ell}\, \cap \, \vec{\MP_{[2,d]}\lrc{> 0}}$.
 Therefore, we have $\valueof[\mdp^\ast][\s]{\ST_1\lrc{i_{\s},\ell}\, \cap \, \vec{\MP_{[2,d]}\lrc{> 0}}} = 1$.
 Below we prove that this objective has a finite-memory almost-surely winning strategy $\zstrat$ in $\mdp^\ast$. 
 Consider a modified MDP $\mdp^\ast_1$ that encodes the 
 energy level up to $i_\s + \ell$ in the states.
 A step exceeding the upper energy bound $i_\s + \ell$ results in a truncation
 to $i_\s + \ell$, while a step leading to a negative energy leads to a losing sink.
 There exists a memoryless randomized (MR) strategy $\zstrat_1$ in
 $\mdp^\ast_1$ from state $(\s, i_\s)$
 that wins $\vec{\MP_{[2,d]}\lrc{> 0}}$ almost surely, by \Cref{lem:mr_gain_strategy}. 
 We can then carry $\zstrat_1$ back to $\mdp^\ast$ as a finite-memory strategy $\zstrat$
 with $i_\s + \ell + 1$ memory modes such that
 $\Prob[\mdp^\ast][\zstrat,\s]{\ST_1\lrc{i_{\s},\ell}\, \cap \, \vec{\MP_{[2,d]}\lrc{> 0}}} = 1$.
 By set inclusion, $\Prob[\mdp^\ast][\zstrat,\s]{\obj\lrc{i_{\s}}} = 1$.
 By \Cref{lem:mdp_equiv_mdp^ast}, there also exists
 a finite-memory strategy from $\s$ in $\mdp$ that is almost surely winning for
 $\obj\lrc{i_{\s}}$.
 This implies $f_\s = i_\s$, a contradiction to our assumption $f_\s > i_\s$.
 Hence, we obtain $\delta_{\ell} \eqdef 1-
 \valueof[\mdp^\ast][\obj\lrc{i_{\s}} \, \cap \, \infix_1\lrc{\ell}]{\s} > 0$.
\end{proof}

The following three lemmas show that almost surely winning strategies for
Energy-MeanPayoff can be found by combining two different
memoryless strategies for the simpler $\Bailout$ and $\Gain$ objectives.

First, we define the objective $\Bailout(k) \eqdef \en_1(k) \, \cap\, \MP_1\lrc{> 0}$.
Let $i^{\Bailout}_{\s}$ denote the minimal energy value $k$ with which one can
almost surely satisfy $\Bailout(k)$ when starting from
state $\s$ (or $\infty$ if it does not exist).

\begin{lemma}~\cite[Lemma 3]{BKN2016} \label{lem:md_bailout_strategy}
Let $\mdp$ be an MDP.
If $\s \in \AS(\Bailout(k))$ for some $k \in \N$   
then $i^{\Bailout}_{\s} \leq 3 \cdot \size{\mdp} \cdot R$.
Moreover, there exists a uniform MD strategy $\optzstrat_{\Bailout}$
which is almost surely winning $\Bailout(k)$
from every state $\s \in \AS(\Bailout(k))$.
\end{lemma}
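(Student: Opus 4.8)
The plan is to prove both claims together by analysing the end-component structure of $\mdp$ under the two constraints, which crucially live in the \emph{same} (first) reward dimension. First I would reduce the almost-sure energy requirement to a worst-case one: since $\en_1(k)$ is a safety property (every prefix must keep the partial sum $\geq -k$), any finite path that violates it spans a cylinder of positive probability. Hence a strategy satisfies $\en_1(k)$ almost surely from $\s$ if and only if it keeps the energy nonnegative \emph{surely} along its support, i.e.\ iff $\s$ lies in the winning region of the two-player energy game on $\mdp$ in which the random states are treated adversarially. By the finite-credit property of energy games, this region admits a memoryless (MD) witness whose minimal initial credit is at most $(\size{\states}-1)\cdot R$, and this already controls the energy contribution to the bound.

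Next I would treat the mean-payoff part via the maximal end component (MEC) decomposition. A run satisfies $\MP_1\lrc{>0}$ almost surely exactly when it is eventually trapped in a MEC, so I would call a MEC $C$ \emph{good} if, within the sub-MDP on $C$ and among the energy-safe choices identified above, a strictly positive expected mean payoff is attainable; since $\MP_1\lrc{>0}$ almost surely forces a strictly positive \emph{expected} mean payoff there, the hypothesis $\s\in\AS(\Bailout(k))$ guarantees that $\s$ can reach some good MEC while staying energy-safe. Using classical results for mean-payoff MDPs (optimal MD strategies, cf.~\cite{Puterman:book}), the almost-sure reachability of a good MEC can be realised by an MD strategy, and the extra credit spent on the connecting path is at most $\size{\states}\cdot R$.

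The heart of the argument, and the step I expect to be the main obstacle, is to exhibit a \emph{single} MD strategy that inside a good MEC simultaneously (i) keeps the energy nonnegative on every run and (ii) yields mean payoff $>0$ almost surely, since maximising mean payoff and staying energy-safe can a priori pull in opposite directions. The key structural fact to establish is that every reachable cycle of an energy-safe MD strategy has nonnegative weight, so strict positivity of the mean payoff is equivalent to the induced bottom strongly connected components each containing a strictly positive cycle that is visited with positive frequency; I would then show that such an MD strategy exists whenever positive mean payoff is achievable at all, essentially because a strictly positive drift turns energy-safety into a purely transient (bounded-credit) concern that a memoryless strategy can absorb with credit at most $\size{\states}\cdot R$.

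Finally I would glue the pieces into the uniform strategy $\optzstrat_{\Bailout}$: outside good MECs it plays the reach-a-good-MEC MD strategy, and inside them it plays the combined energy-safe/positive-mean-payoff MD strategy. Summing the three linear contributions — reaching a good MEC, surviving the internal transient, and absorbing the fluctuations, each bounded by $\size{\states}\cdot R$ — and using $\size{\mdp}\geq\size{\states}$ yields $i^{\Bailout}_{\s}\leq 3\cdot\size{\mdp}\cdot R$. The delicate quantitative point throughout will be to keep each contribution linear in $\size{\states}$ and $R$ rather than pseudo-polynomial, which is precisely what the finite-credit bound for energy games and the cycle analysis in good MECs deliver; this follows the approach of \cite{BKN2016}.
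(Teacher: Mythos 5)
First, a point of comparison: the paper does not prove this lemma at all --- it is imported verbatim from \cite{BKN2016} --- so there is no in-paper argument to measure your proposal against; I can only assess it on its own terms. Your overall architecture is the right one and matches the cited source in spirit: almost-sure satisfaction of $\en_1(k)$ is equivalent to sure satisfaction on the support, hence to winning a two-player energy game with the random states adversarial (with the standard $(\size{\states}-1)\cdot R$ credit bound and MD witnesses); the mean-payoff part is handled by an end-component decomposition; and your structural observation that every cycle reachable in the Markov chain induced by a surely energy-safe MD strategy has nonnegative weight is correct and essential.

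The genuine gap sits exactly at the step you yourself flag as the heart of the argument, and the justification you offer there points in the wrong direction. You claim that inside a good MEC an MD strategy that is simultaneously surely energy-safe and almost surely positive-mean-payoff exists ``essentially because a strictly positive drift turns energy-safety into a purely transient (bounded-credit) concern that a memoryless strategy can absorb.'' Sure energy-safety is a worst-case property over the entire support: an MD strategy with strictly positive mean payoff in every BSCC may still have a reachable negative cycle in its induced chain, and then for every $n$ some positive-probability run traverses that cycle $n$ times consecutively, violating $\en_1(k)$ for every $k$. So positive drift buys nothing for the sure part. What is actually needed is an MD strategy whose induced chain contains \emph{no} negative cycle yet contains a strictly positive cycle in each BSCC, and producing such a strategy from the two separate witnesses --- an energy-safe MD strategy, all of whose cycles may have weight exactly $0$, and a mean-payoff-optimal MD strategy, which may have negative cycles --- is the real content of the lemma and is missing from your sketch. (Compare \Cref{lem:WEC_Type-I => MP_1 > 0} in \Cref{app:mem_bounds}, which performs precisely this kind of surgery, splicing a positive cycle into a zero-drift energy-safe strategy; even there the resulting witness is randomized and finite-memory rather than MD, so an extra argument is needed to reach the MD conclusion claimed here.) A second, smaller gap: you define ``good'' MECs relative to ``the energy-safe choices identified above,'' i.e.\ edges that are locally safe with respect to the minimal-credit function, but a winning strategy for $\Bailout(k)$ may exploit accumulated surplus energy to cross edges that are not locally safe, so you must separately argue that this restriction does not destroy every positive-mean-payoff end component reachable from $\s$. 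Until both points are settled, the $3\cdot\size{\mdp}\cdot R$ bookkeeping at the end cannot be carried out.
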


We define the $\Gain$ objective as $\MP_{[1,d]}\lrc{> 0}$.
The following lemma shows that an almost surely winning strategy
$\optzstrat_{\Gain}$ for this objective can be chosen as memoryless randomized.

\begin{lemma}~\cite[Proposition 5.1]{brazdil2014markov}\label{lem:mr_gain_strategy}
  There is a uniform MR strategy $\optzstrat_{\Gain}$ which is almost surely
  winning for $\Gain$ (or any subset of dimensions) from 
  all states $\s \in \AS(\Gain)$.
\end{lemma}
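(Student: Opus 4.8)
The plan is to build on the standard theory of multidimensional mean payoff in MDPs, via the \emph{end-component} decomposition and the convex geometry of achievable long-run edge frequencies. I would start from the fundamental fact that under any strategy almost every run eventually remains inside a single end-component and visits exactly its states infinitely often; consequently the $\liminf$ mean payoff in each dimension is governed entirely by the end-component in which the run gets trapped. Call an end-component $\mathcal{U}$ \emph{gainful} if there is a long-run frequency distribution on its edges that yields a strictly positive expected reward in every dimension $1,\dots,d$ simultaneously. Since a state $\s \in \AS(\Gain)$ must, under a winning strategy, end up in such an end-component almost surely, the winning region $\AS(\Gain)$ is exactly the set of states from which the union $\reachset$ of all gainful end-components is reachable almost surely.

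For a single gainful end-component $\mathcal{U}$, I would extract the required memoryless behaviour from a linear program over edge-frequency variables $x_e \ge 0$ (for edges $e \in \transition$ inside $\mathcal{U}$): impose flow conservation at every state (incoming frequency equals outgoing frequency, with the outgoing split at random states forced to be proportional to $\probp$), normalise the total frequency to $1$, and require $\sum_{e} x_e\, r_i(e) > 0$ for each dimension $i \in \{1,\dots,d\}$. Feasibility of this LP is precisely the definition of gainful, and a solution $\{x_e\}$ induces an MR strategy that at each Maximizer state $\s \in \zstates$ plays successor $\s'$ with probability proportional to $x_{(\s,\s')}$. Since $\mathcal{U}$ is strongly connected and the support of this strategy covers it, $\mathcal{U}$ becomes a single bottom strongly connected component of the induced Markov chain; the ergodic theorem then upgrades the \emph{expected} per-step reward, which equals $\sum_{e} x_e\, r_i(e) > 0$, to an almost-sure limit of the empirical averages. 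Hence $\MP_i\lrc{> 0}$ holds almost surely in every dimension.

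To obtain a single \emph{uniform} strategy I would combine these in-component strategies with a memoryless almost-sure reachability strategy toward $\reachset$. Almost-sure reachability of $\reachset$ from $\AS(\Gain)$ admits a memoryless deterministic (hence MR) strategy, while inside each gainful end-component the corresponding gain strategy is memoryless and keeps the run inside $\mathcal{U}$ forever, because end-components are closed under the probabilistic successors of random states and under the supported choices at Maximizer states. Defining $\optzstrat_{\Gain}$ state-by-state, playing the gain frequencies at states lying in gainful end-components and the reachability moves elsewhere, yields a genuinely memoryless randomized strategy. From any $\s \in \AS(\Gain)$ it almost surely reaches some gainful $\mathcal{U}$, is then trapped there, and attains $\MP_{[1,d]}\lrc{> 0}$ almost surely. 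Running the same argument with the LP constraints restricted to any subset of dimensions proves the parenthetical claim.

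The main obstacle is the \emph{simultaneity} across dimensions: it does not suffice that each coordinate can be made positive in isolation; one needs a single frequency vector that is positive in all $d$ dimensions at once. This is exactly what the LP feasibility encodes, and its solvability on a gainful end-component rests on the convexity of the achievable-frequency polytope together with the membership $\s \in \AS(\Gain)$ (which guarantees that some, a priori memory-using, strategy realises all-positive mean payoff, so a feasible point exists). The second delicate point is converting an average over frequencies into an almost-sure statement, which is why the construction forces the gainful end-component to be a single BSCC so that the ergodic theorem applies; and one must verify that the patched memoryless strategy never drifts out of a gainful end-component, which follows from the closure property of end-components.
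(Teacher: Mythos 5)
The paper itself offers no proof of this lemma; it is imported verbatim as \cite[Proposition 5.1]{brazdil2014markov}, and your reconstruction follows exactly the route that the cited work (and the paper's own appendix, cf.\ the LP of Figure~3) takes: end-component decomposition plus a linear program over long-run edge frequencies. So in spirit you are on the right track. However, there is one genuine gap at the step where you pass from a feasible frequency vector to an almost-surely winning MR strategy. You assert that ``the support of this strategy covers'' the gainful end-component $\mathcal{U}$, so that $\mathcal{U}$ is a single BSCC of the induced chain and the ergodic theorem applies. This is false for an arbitrary feasible solution $\{x_e\}$: the LP may put zero frequency on some edges, the induced chain may then have several BSCCs inside $\mathcal{U}$, and the positivity $\sum_e x_e r_i(e) > 0$ only controls the \emph{mixture} of their mean payoffs. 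Concretely, take an end-component containing two disjoint simple cycles with per-step rewards $(2,-1)$ and $(-1,2)$ joined by zero-reward controlled edges; the frequency vector putting weight $1/2$ on each cycle and $0$ on the connectors is feasible and has average $(1/2,1/2)>\vec{0}$, yet the induced MR strategy traps every run in one of the two cycles, where one coordinate of the mean payoff is $-1$. So the strategy you extract is winning in expectation but not almost surely, which is precisely the distinction the lemma is about.

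The standard repair, which you should add, is a convexity/mixing step: since $\mathcal{U}$ is an end-component, there is a feasible frequency vector $u$ with full support on all edges of $\mathcal{U}$ (e.g.\ the stationary frequencies of uniform randomization inside $\mathcal{U}$), and the set of feasible frequency vectors is convex, so $x' \eqdef (1-\eps)x + \eps u$ is feasible, has full support, and still satisfies $\sum_e x'_e r_i(e) > 0$ in every dimension for small $\eps$. The MR strategy induced by $x'$ then makes $\mathcal{U}$ irreducible, $x'$ (normalized) is its unique stationary edge distribution by flow conservation, and only now does the ergodic theorem upgrade the positive expectation to an almost-sure statement. With that insertion the rest of your argument --- characterizing $\AS\lrc{\Gain}$ via almost-sure reachability of gainful (maximal) end-components, patching with an MD reachability strategy outside them, and using closure of end-components to stay trapped --- is sound; just note that you should work with \emph{maximal} end-components so that the state-by-state definition of $\optzstrat_{\Gain}$ is unambiguous.
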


A difference between $\mdp^\ast$ and $\mdp$ is that if one can 
almost surely win Energy-MeanPayoff in $\mdp^\ast$ then one can
also push the energy level arbitrarily high.
This does not always hold in $\mdp$.
(Consider, e.g., a single-state Markov chain with a single loop with reward $0$
in the $1^{st}$ dimension and $+1$ in all other dimensions.)
The difference comes from the loop at state $\s_{\textrm{win}}$ in $\mdp^\ast$
which has a strictly positive reward in all dimensions.
Thus, the following lemma only holds for $\mdp^\ast$.

\begin{lemma}\label{lem:gain_bailout_existence}
  In $\mdp^{\ast}$,
    there are two uniform memoryless strategies 
    $\optzstrat_{\Bailout}$ and $\optzstrat_{\Gain}$
    which,
    starting from any state $\s\in \AS(\obj\lrc{k})$, 
    almost surely satisfy $\Bailout(k)$ 
    and $\Gain$, respectively.
\end{lemma}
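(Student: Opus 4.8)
The plan is to deduce the lemma from \Cref{lem:md_bailout_strategy,lem:mr_gain_strategy} by proving, in $\mdp^{\ast}$, the two set inclusions $\AS\lrc{\obj\lrc{k}} \subseteq \AS(\Bailout(k))$ and $\AS\lrc{\obj\lrc{k}} \subseteq \AS(\Gain)$. Indeed, \Cref{lem:md_bailout_strategy} already hands us a single uniform MD strategy $\optzstrat_{\Bailout}$ that wins $\Bailout(k)$ almost surely from all of $\AS(\Bailout(k))$, and \Cref{lem:mr_gain_strategy} a single uniform MR strategy $\optzstrat_{\Gain}$ that wins $\Gain$ almost surely from all of $\AS(\Gain)$; once the inclusions are established, these are precisely the two strategies required. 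So the whole content is to show that almost-sure winnability of $\obj\lrc{k}$ in $\mdp^{\ast}$ already forces almost-sure winnability of the two simpler objectives, the point being the $\vec{1}$-loop at $\s_{\textrm{win}}$.

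To prove each inclusion I would fix $\s \in \AS\lrc{\obj\lrc{k}}$ and an almost-surely winning strategy $\zstrat$ for $\obj\lrc{k}$, and analyse the limit behaviour of $\zstrat$. By the end-component theorem almost every $\zstrat$-run eventually stays inside a single end component $C$, and almost-sure winning forces every such $C$ to be \emph{good}: inside $C$ one keeps the energy non-negative while almost surely realising $\vec{\MP_{[2,d]}\lrc{>0}}$, so in particular $\MP_1 \ge 0$ in $C$. I would also use the invariant that along almost every $\zstrat$-run the energy at every visited state $t$ is at least $i_t$ (otherwise the suffix could not win), hence the energy available at every primed state $w'$ is $\ge i_w$; this is what will make the escape edge to $\s_{\textrm{win}}$ affordable.

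The core is a dichotomy on the dimension-$1$ mean payoff that $\zstrat$ \emph{realises} inside $C$, which by energy-safety is almost surely $\ge 0$. On the event that it is $>0$, the accumulated dimension-$1$ reward tends to $+\infty$: for $\Gain$ this already yields $\MP_{[1,d]}\lrc{>0}$, so the states of $C$ lie in $\AS(\Gain)$; for $\Bailout(k)$ one follows $\zstrat$ until the energy exceeds $3\cdot\size{\mdp}\cdot R\ge i^{\Bailout}_w$ (which happens almost surely) and then switches to $\optzstrat_{\Bailout}$, staying energy-safe and securing $\MP_1\lrc{>0}$. On the complementary event that the realised mean payoff is $0$, I would show that the energy fluctuation inside $C$ is almost surely bounded by some $\ell$; then from the current state $w$ at current energy $e$ the suffix witnesses $\ST_1\lrc{e,\ell}\,\cap\,\vec{\MP_{[2,d]}\lrc{>0}}$, which by the finite-memory construction in the proof of \Cref{lem:fluctuate} (transported through \Cref{lem:mdp_equiv_mdp^ast}) is winnable with finite memory, so $f_w\le e$. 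Hence the escape edge $w'\to\s_{\textrm{win}}$ of cost $f_w$ exists, and since the energy available at $w'$ is $e\ge f_w$ it is affordable and leaves the energy non-negative; the $\vec{1}$-loop at $\s_{\textrm{win}}$ then delivers $\MP_1\lrc{>0}$ and $\MP_{[1,d]}\lrc{>0}$ simultaneously. Thus in both events $\zstrat$-runs almost surely reach a state of $\AS(\Gain)$, giving $\s\in\AS(\Gain)$ by shift-invariance, and in both events the described strategy wins $\Bailout(k)$, giving $\s\in\AS(\Bailout(k))$.

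The step I expect to be the main obstacle is exactly the bounded-fluctuation claim in the mean-payoff-$0$ case: that almost-sure energy-safety together with dimension-$1$ mean payoff equal to $0$ forces the energy to stay in a bounded window. This is a recurrence statement -- the energy increments between successive returns to a fixed reference state of $C$ form a zero-mean process, and a non-degenerate zero-mean walk would almost surely drive the energy below any bound infinitely often (Chung--Fuchs), contradicting safety, so the increments must be degenerate. Turning this into a statement about the memoryful strategy $\zstrat$ rather than a memoryless stationary one, and splitting the run space measurably into the two events, are the delicate points; the remaining reachability and energy-bookkeeping steps are routine given \Cref{lem:md_bailout_strategy,lem:mr_gain_strategy,lem:fluctuate,lem:mdp_equiv_mdp^ast}.
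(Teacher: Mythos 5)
Your top-level reduction (prove the two inclusions $\AS\lrc{\obj\lrc{k}} \subseteq \AS(\Bailout(k))$ and $\AS\lrc{\obj\lrc{k}} \subseteq \AS(\Gain)$ and then invoke \Cref{lem:md_bailout_strategy,lem:mr_gain_strategy}) is exactly the paper's. But the core of your argument has a genuine gap: the claim that on the event where the realised dimension-$1$ mean payoff equals $0$, energy-safety forces the fluctuation to be bounded. This is false for the arbitrary (memoryful, non-stationary) witness strategy $\zstrat$ you fixed. Take a single controlled state with two self-loops of reward $(+1,+1)$ and $(-1,+1)$: a deterministic infinite-memory strategy that schedules the loops so that the dimension-$1$ partial sums grow like $\sqrt{n}$ is surely energy-safe from energy $0$, satisfies $\vec{\MP_2}\lrc{>0}$ surely, has realised $\MP_1=0$ surely, and has unbounded fluctuation. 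Chung--Fuchs recurrence applies to walks with i.i.d.\ (or at least stationary) increments; here the increments are whatever $\zstrat$ makes them, and nothing forces them to be degenerate. So your dichotomy is not exhaustive, and the branch you rely on to make the escape edge to $\s_{\textrm{win}}$ affordable can simply fail to occur. (Two smaller issues in the other branch: switching to $\optzstrat_{\Bailout}$ once the energy exceeds $3\size{\mdp}R \ge i^{\Bailout}_w$ presupposes $w\in\AS(\Bailout(\cdot))$, which is part of what is being proved; and realising $\MP_{[1,d]}\lrc{>0}$ on a sub-event of positive probability does not by itself place the visited states in $\AS(\Gain)$.)

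The paper avoids the tail-event dichotomy altogether. Its key claim (\Cref{lem:Hit_swin_or_pos_gain_with_pos_prob}) is a \emph{finite-horizon} statement: from any $q$, playing the winning strategy $\zstrat^\sharp_q$, within some $n_q$ steps and with some probability $p_q>0$, either the energy \emph{strictly} exceeds the running profile $i_{X_j}-i_q$, or it reaches $f_{X_j}-i_q$ so that the edge to $\s_{\textrm{win}}$ is affordable. The negation of this is not ``mean payoff zero'' but ``the energy equals $i_{X_j}-i_q$ exactly at all times'' (energy-safety already gives $\ge$), which forces \emph{surely} bounded fluctuation and lets \Cref{lem:fluctuate} conclude $f_q=i_q$, a contradiction. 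The positive-probability, bounded-horizon gain is then converted into a strictly positive $\MP_1$ by a phase-restart strategy (reset every $\le n_q$ steps; each phase nets $\ge +1$ with probability $\ge p_q$ and never nets a loss relative to the profile), and the $\Gain$ inclusion is obtained by an explicit two-phase alternation with an expected-mean-payoff computation in the induced BSCCs. Your $\sqrt{n}$-type counterexample falls harmlessly into the ``strict gain with positive probability'' branch of that claim. If you want to salvage your outline, you would need to replace the mean-payoff dichotomy by this finite-horizon one; as written, the central step does not hold.
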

\begin{proof}
Let $\s\in \AS(\obj\lrc{k})$.  
We show that $\s \in \AS(\Bailout(k))$ and
$\s \in \AS\lrc{\Gain}$. 
The existence of the memoryless strategies $\optzstrat_{\Bailout}$ and $\optzstrat_{\Gain}$
then follows from \Cref{lem:md_bailout_strategy} and
\Cref{lem:mr_gain_strategy}, respectively.

We assumed that all states $\s$ in $\mdp$ admit an almost surely winning strategy
for Energy-MeanPayoff.
By \Cref{lem:mdp_equiv_mdp^ast}, this also holds for all states $q$ in
$\mdp^\ast$.
Let $\zstrat^\sharp_q$ denote an almost surely winning strategy from $q$
for $\obj\lrc{i_q}$ in $\mdp^\ast$
(without restrictions on memory).

Recall from \Cref{sec:prelim} that the random variable $X_t$ denotes the state 
at time $t$, and $Y_t$ denotes the ($d$-dimensional) sum of the rewards until
time $t$.

\begin{claim}\label{lem:Hit_swin_or_pos_gain_with_pos_prob}
  For every state $q \in \mdp^\ast$ there exists some number of
  steps $n_q \in \N$ and a probability $p_q > 0$ such that
        \[
          \Prob[\mdp^\ast][\zstrat^\sharp_q,q]{\bigcup_{j=0}^{n_q} ((Y_j)_1 > i_{X_j} - i_q)\, \cup\, ((Y_j)_1 \geq f_{X_j} - i_q)} \geq p_q.
        \]
    \end{claim}
    \begin{claimproof}
      Towards a contradiction, assume that for all $m$
      \[
        \Prob[\mdp^\ast][\zstrat^\sharp_q,q]{\bigcup_{j=0}^{m} ((Y_j)_1 >
          i_{X_j} - i_q)\, \cup\, ((Y_j)_1 \geq f_{X_j} - i_q)}=0.
        \]
        Due to the second part of the union, this implies that never
        $(Y_j)_1 + i_q \geq f_{X_j}$.
        Since $\zstrat^\sharp_q$ satisfies $\en_1(i_q)$ almost surely, it can
        never choose the step to $\s_\textrm{win}$.
        This implies $\Prob[\mdp^\ast][\zstrat^\sharp_q,q]{\eventually \s_\textrm{win}}= 0$, i.e., $X_j$ is always different from
        $\s_\textrm{win}$.  
        (The values $f_\s$ were initially defined with respect to states $\s$ of the original MDP $\mdp$,
        but the definition is naturally extended to the MDP $\mdp^\ast$, by
        giving the primed states the same value, i.e., $f_{\s'} = f_{\s}$.
        The state $\s_\textrm{win}$ does not appear in $\mdp$, but only in $\mdp^\ast$.
        We can extend the definition by having $f_{\s_\textrm{win}}=0$.
        However, this is not strictly required. The $f_{X_j}$ is already
        defined, since $X_j$ is always different from $\s_\textrm{win}$.)
                
        Since $\zstrat^\sharp_q$ satisfies $\en_1(i_q)$ almost surely, 
        all runs always satisfy $(Y_j)_1 \geq i_{X_j} - i_q$ for all $j$.
        On the other hand, our assumption yields
        $\Prob[\mdp^\ast][\sigma^\sharp_q,q]{\bigcup_{j=0}^{m} (Y_j)_1 > i_{X_j} - i_q} = 0$.
        This implies that $(Y_j)_1 = i_{X_j} - i_q$ for all $j$.
        Hence, in all runs the energy fluctuates by at most $\ell \eqdef 2\max_q i_q$.
        Thus, $\Prob[\mdp^\ast][\zstrat^\sharp_q,q]{\obj\lrc{i_{q}} \, \cap \, \infix_1\lrc{\ell}}=1$.
        Then \Cref{lem:fluctuate} implies that $f_q = i_q$. Since $X_0 = q$ we
        have $f_{X_0} = f_q$ and thus $(Y_0)_1 \geq f_{X_0} - i_q = 0$.
        This contradicts our assumption, since the second part of the union is
        surely satisfied.
    \end{claimproof}

    For any state $q$, let $n_q$, $p_q$ denote the values from
    \Cref{lem:Hit_swin_or_pos_gain_with_pos_prob}.
    
    Now we show that $\s \in \AS(\Bailout(k))$.
    Define a strategy $\zstrat_{\Bailout}$ which plays in phases, 
    separated by resets.
    It remembers the number of steps $t \geq 0$ since last reset, 
    the (under-approximated) sum of rewards $Q_t$
    and the current state $X_t$.
    The first phase starts at state $\s$ and $\zstrat_{\Bailout}$ plays like $\zstrat^\sharp_\s$ 
    until one of the following events occur.
    \begin{enumerate}
    \item
      There is enough energy such that it is safe to move to
      $\s_\textrm{win}$, i.e., $\lrc{Q_t \geq f_{X_t} - i_\s}$, or \label{itm:Hit_swin}
    \item
      The current energy level is strictly greater than the minimal required
      energy level of the current state, i.e., $\lrc{Q_t > i_{X_t} - i_\s}$, or \label{itm:Pos_gain}
    \item
      $n_\s$ steps have elapsed, i.e., $\lrc{t = n_\s}$. \label{itm:timeout}
    \end{enumerate}
    If at any point \Cref{itm:Hit_swin} happens, then the strategy simply goes
    to $\s_\textrm{win}$.
    If it is the case that \Cref{itm:Pos_gain} occurs before $t = n_\s$, let's
    say at some time $t'$, then the phase ends at $t'$. The sum of the rewards
    in the phase, between the last reset (where $t=0$) and the current time is $\geq i_{X_{t'}}-i_\s + 1$. 
    If neither \Cref{itm:Hit_swin} nor \Cref{itm:Pos_gain} occurs before
    $t=n_\s$,
    then the phase ends and we let $t' \eqdef t =n_\s$.
    The sum of the rewards in this phase is then exactly $i_{X_{t'}} - i_\s$.
    At the end of the phase $\zstrat_{\Bailout}$ resets the number of steps
    $\lrc{t = 0}$,
    and $Q_t$ to $0$.
    In the following phase it moves according to $\zstrat^\sharp_{X_{t'}}$ until the next reset.

    $\zstrat_{\Bailout}$ clearly satisfies $\en_1(k)$ as it is a mix of energy
    safe strategies $\lrc{\zstrat^{\sharp}_q}_{q \in \states^{\ast}}$
    and since we are starting from a safe energy level.
    By \Cref{lem:Hit_swin_or_pos_gain_with_pos_prob}, there is a positive
    probability (lower-bounded by $\min_q p_q >0$)
    that either \Cref{itm:Hit_swin} or \Cref{itm:Pos_gain} happens in each phase.
    
    Hence, unless event \Cref{itm:Hit_swin} occurs, \Cref{itm:Pos_gain} occurs
    infinitely often almost surely. Moreover, since the length of phases is
    upper bounded by $\max_q n_q$, it occurs frequently. 
    We obtain $\Prob[\mdp^\ast][\zstrat_{\Bailout},\s]{\MP_1 \geq
      \min_{q}\lrc{\frac{p_q}{n_q}} >0 \given \neg\eventually \s_\textrm{win}} = 1$.
    On the other hand, if $\s_\textrm{win}$ is reached, then $\MP_1$ holds by
    shift invariance and the definition of the positive rewards in the loop at
    $\s_\textrm{win}$.
    Therefore, $\Prob[\mdp^\ast][\zstrat_{\Bailout},\s]{\en_1(i_\s) \, \cap \, \MP_1\lrc{> 0}} = 1$.

    Now we show that $\s \, \in \, \AS\lrc{\Gain}$.
    We make use of the following strategies.
    \begin{itemize}
        \item $\zstrat^\sharp_q$ which satisfies
          $\en_1(k)\,\cap\,\vec{\MP_{[2,d]}\lrc{> 0}}$ almost surely from $q$ for every $k \geq i_q$.
        \item a uniform MD strategy $\optzstrat_{\MP_1}$ which satisfies
          $\MP_1\lrc{> 0}$ almost surely from every state. It exists since
          $\AS\lrc{\MP_1\lrc{> 0}} = \states^{\ast}$
          (where $\states^{\ast}$ is the set of states of $\mdp^\ast$),
          because $\Prob[\mdp^\ast][\zstrat_{\Bailout},\s]{\en_1(i_\s) \, \cap \, \MP_1\lrc{> 0}} = 1$.
    \end{itemize}

    From the former, we get probabilistic bounds on the achievable mean payoff
    in all the dimensions, \ie for all states $\s$, and $0 \leq \eps < 1$,
    there is a $d-1$ dimensional vector $\vec{\nu_{\eps}} > \vec{0}$ such that
    $\Prob[\mdp^\ast][\zstrat^\sharp_\s,\s]{\vec{\MP_{[2,d]}} \geq \vec{\nu_{\eps}}}
    \geq 1-\frac{\eps}{2}$.
    This follows from the fact that for any sequence of decreasing vectors
    $\vec{\nu_n} \tendsto \vec{0}$ in $\R^{d-1}$, $\vec{\MP_{[2,d]}\lrc{> 0}}
    = \bigcup_n \vec{\MP_{[2,d]}\lrc{\geq \vec{\nu_n}}}$ and continuity of
    measures. 
    Furthermore, denoting by $\vec{Y_t}$ the sum of rewards in all dimensions
    until time $t$,
    there exists a sufficiently large bound $n_{\eps} \in \N$
    such that $\Prob[\mdp^\ast][\zstrat^\sharp_\s,\s]{\frac{\lrc{Y_t}_j}{t}
      \geq \frac{\lrc{\nu_{\eps}}_j}{2}} \geq 1-\eps$
    in each of the dimensions $j \in [2,d]$ for all $t \geq n_{\eps}$ steps.
    This can be shown by observing that $\MP_j\lrc{\geq \lrc{\nu_{\eps}}_j} =
    \bigcap_{k=1}^{\infty}\bigcup_{n=1}^{\infty} \bigcap_{t=n}^{\infty}
    \lrc{\frac{\lrc{Y_t}_j}{t} \geq \lrc{\nu_{\eps}}_j \cdot
      \lrc{1-\frac{1}{2^k}}}$ and using continuity of measures.

    Similarly, there exists a bound $n^*_{\eps} \in \N$ and value $\nu^*_{\eps}>0$
    such that $\Prob[][\optzstrat_{\MP_1},\s]{\frac{\lrc{Y_t}_1}{t} \geq
      \frac{\nu^*_{\eps}}{2}} \geq 1-\eps$ after $t \geq n^*_{\eps}$ steps for
    every state $\s$.

    Now consider the following strategy $\zstrat_{\Gain}$, which switches between two phases.
    \begin{description}
        \item[Phase 1:] If the current state is $q$, it moves according to
          $\zstrat^\sharp_q$ for
          some number $\alpha > n_{\eps}$ of steps. Then it switches to phase 2.
        \item[Phase 2:] It moves according to $\optzstrat_{\MP_1}$ for some
          number $\beta > n^*_\eps$ of steps, and then switches back to phase 1.
        \end{description}
    The strategy $\zstrat_{\Gain}$ is a finite-memory strategy, since the
    lengths of the alternating phases are bounded by $\alpha$ and $\beta$,
    respectively. (Even if $\zstrat^\sharp_q$ is an infinite-memory strategy,
    it can only use bounded memory in each phase.)
    
    We fix $\zstrat_{\Gain}$ from the start state $\s$ and obtain a
    finite-state Markov chain. In every BSCC of this Markov chain, the expected mean payoff in the $1^{st}$ dimension will be 
    $$\geq \frac{-i^\sharp + \beta \cdot \lrc{1-\eps} \cdot \lrc{\frac{\nu^*_{\eps}}{2}} - \beta \cdot \eps \cdot R }{\alpha + \beta}.$$
    where $i^\sharp = \max_\s i_\s$ denotes the maximum (over all states) minimal safe energy. 

    Similarly, in every BSCC, the expected mean payoff in the $j^{th}$ dimension for $j \geq 2$ can be lower-bounded by 
    $$\geq \frac{\alpha \cdot \lrc{\lrc{1-\eps} \cdot \lrc{\frac{\lrc{\nu_{\eps}}_j}{2}}-\eps \cdot R} - \beta \cdot R}{\alpha + \beta}.$$

    By choosing $\eps$ sufficiently small, $\beta$ sufficiently large to make
    the first term positive and $\alpha \gg \beta$ sufficiently large to make
    the second term positive, we can get positive expected mean payoff in all
    dimensions. Since this holds in every BSCC of the induced finite Markov
    chain, the objective $\Gain$ is satisfied almost surely.
\end{proof}

The following lemma shows the converse of \Cref{lem:gain_bailout_existence}.
In $\mdp^{\ast}$, it is always possible to win $\obj\lrc{i_{\s}}$ almost
surely from $\s$ by playing a particular strategy $\optzstrat_{\mathtt{alt,Z_b,Z_g}}$
which combines the two uniform memoryless strategies 
$\optzstrat_{\Bailout}$ and $\optzstrat_{\Gain}$.
Let $Z_b$ denote the minimal universally safe energy level for
$\Bailout$, i.e., $Z_b \eqdef \max_\s \min\{k\mid \s \in \AS(\Bailout(k))\}$.
Moreover, let $Z_g > Z_b$ be a larger energy level at which our strategy
switches from $\optzstrat_{\Bailout}$ to $\optzstrat_{\Gain}$.

Similarly to \cite{MSTW2017}, we define an infinite-memory strategy $\optzstrat_{\mathtt{alt,Z_b,Z_g}}$
that always records the current energy level and operates by switching between two phases.
It starts by playing $\optzstrat_{\Gain}$ (Gain-phase)
if our starting energy level is sufficiently high $(\geq Z_b + R)$,
and otherwise starts by playing $\optzstrat_{\Bailout}$ (Bailout-phase).
In the $\Bailout$-phase, the primary goal is to pump the energy level up until
it is $\ge Z_g$, and then it switches to the $\Gain$-phase.
It enters the $\Bailout$-phase again if the energy level drops below $Z_b + R$
(in which case it will still be $\ge Z_b$).

\begin{lemma}\label{lem:reverse_gain_bailout_existence}
  There exists a $Z_g \in \N$ such that for every $\s$ in $\mdp^*$
  the strategy $\optzstrat_{\mathtt{alt,Z_b,Z_g}}$ is almost surely
winning for $\obj\lrc{i_{\s}}$ from $\s$.
\end{lemma}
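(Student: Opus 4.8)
The plan is to show that the alternating strategy $\optzstrat_{\mathtt{alt,Z_b,Z_g}}$ wins almost surely by arguing that (a) the energy objective $\en_1(i_\s)$ is satisfied \emph{surely}, and (b) the MeanPayoff objective $\vec{\MP_{[2,d]}\lrc{> 0}}$ is satisfied \emph{almost surely}, for a suitable choice of the upper switching level $Z_g$. The energy part is easy: since we start from an energy level $\ge i_\s$, and since both $\optzstrat_{\Bailout}$ and $\optzstrat_{\Gain}$ are energy-safe from levels $\ge Z_b$ (by \Cref{lem:gain_bailout_existence} and the definition of $Z_b$), and since we only switch to $\Gain$-phase once the energy is $\ge Z_g > Z_b + R$ and switch back to $\Bailout$ before dropping below $Z_b$, the energy never goes negative. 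The key is that the buffer $Z_g - (Z_b + R)$ absorbs at least one bad transition per phase boundary.

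For the MeanPayoff part I would set up the argument around the number of Bailouts. First I would establish that during each $\Bailout$-phase, the strategy $\optzstrat_{\Bailout}$ almost surely raises the energy level from near $Z_b$ up to $Z_g$ in a bounded-in-expectation number of steps, because $\optzstrat_{\Bailout}$ satisfies $\MP_1\lrc{> 0}$ almost surely (so the energy has positive drift). The crucial quantitative point is that, unlike the infinite-memory strategy of \cite{MSTW2017} which records the unbounded energy and thus performs only finitely many Bailouts almost surely, here one must bound the \emph{frequency} (mean-payoff contribution) of the time spent in Bailout phases. I would analyze the induced finite Markov chain obtained after fixing $\optzstrat_{\mathtt{alt,Z_b,Z_g}}$ (which is finite-state because the energy is effectively confined to $[Z_b, Z_g + \text{const}]$ once we are above $Z_b$), and compute the stationary expected mean payoff in each dimension $j \in [2,d]$ over its BSCCs. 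In the $\Gain$-phase the payoff in dimensions $[2,d]$ has strictly positive drift (by $\optzstrat_{\Gain}$), whereas the $\Bailout$-phase may incur negative payoff in those dimensions, bounded below by $-R$ per step.

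The main obstacle is to show that by taking $Z_g$ large enough, the long-run fraction of time spent in the $\Bailout$-phase becomes small enough that the net mean payoff in each dimension $[2,d]$ stays strictly positive. The intuition is: a single excursion into Bailout costs an expected number of steps that is bounded independent of $Z_g$ (the energy rises with positive drift, so the expected hitting time of $Z_g$ from $Z_b$ grows only linearly in $Z_g$), but the expected time spent in the \emph{intervening} $\Gain$-phase before the energy again drops to the $\Bailout$ threshold grows without bound as $Z_g \tendsto \infty$. This is because in the $\Gain$-phase the energy performs a random walk (a martingale or, more carefully, a walk with drift governed by $\MP_1$ under $\optzstrat_{\Gain}$, which is $\ge 0$ since $\optzstrat_{\Gain}$ wins $\Gain \supseteq \MP_1\lrc{> 0}$); starting from $\ge Z_g$, the expected time to return down to $Z_b + R$ tends to infinity with $Z_g$. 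Hence the ratio (time in Gain)/(time in Bailout) can be made arbitrarily large, so the positive contribution from the Gain-phase dominates the bounded negative contribution per Bailout.

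Concretely, I would fix a BSCC $\bscc$ of the induced Markov chain and write the expected mean payoff in dimension $j$ as a weighted average of the two phases' contributions, weighted by expected sojourn times $L_g(Z_g)$ and $L_b$ respectively; using $L_g(Z_g) \tendsto \infty$ while $L_b = O(Z_g)$ grows only linearly (with a slope controlled by the positive drift of $\optzstrat_{\Bailout}$), I would show the ratio $L_g(Z_g)/L_b(Z_g) \tendsto \infty$. This forces the Gain-phase's strictly positive drift to dominate, yielding a strictly positive expected mean payoff in every dimension $j\in[2,d]$ in each BSCC, and hence $\vec{\MP_{[2,d]}\lrc{> 0}}$ almost surely. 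Combined with the sure satisfaction of $\en_1(i_\s)$, this gives $\Prob[\mdp^\ast][\optzstrat_{\mathtt{alt,Z_b,Z_g}},\s]{\obj\lrc{i_\s}} = 1$ for all $\s$, proving the lemma. The delicate step I expect to require the most care is the hitting-time estimate showing $L_g(Z_g)/L_b(Z_g) \tendsto \infty$, which relies on the precise drift structure of the two memoryless strategies and may need a renewal-theoretic or Azuma/optional-stopping argument rather than a routine computation.
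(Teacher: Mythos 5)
There is a genuine gap in the central step. Your mean-payoff argument hinges on analysing ``the induced finite Markov chain obtained after fixing $\optzstrat_{\mathtt{alt,Z_b,Z_g}}$'', claiming the energy is confined to $[Z_b, Z_g + \text{const}]$. That is false for this strategy: $\optzstrat_{\mathtt{alt,Z_b,Z_g}}$ is the \emph{infinite-memory} strategy that records the true, unbounded energy level, and during a $\Gain$-phase the energy can grow without bound (it is only the later strategy $\optzstrat_{\mathtt{alt,Z_b,Z_g,b}}$ that truncates the energy at $b$). So the induced chain is infinite-state, there are no BSCCs with stationary distributions to average over, and the renewal-reward computation with expected sojourn times $L_g(Z_g)$ and $L_b$ is not available. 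Worse, for $Z_g$ large enough one has $L_g(Z_g) = \infty$ --- precisely because, starting from energy $Z_g$ with the strictly positive drift of $\MP_1\lrc{>0}$ under $\optzstrat_{\Gain}$, there is a fixed probability $p>0$ of \emph{never} returning to the $\Bailout$ threshold --- so the ratio $L_g/L_b$ is not a well-defined quantity to send to infinity. Your hedge that the drift in dimension $1$ under $\optzstrat_{\Gain}$ is only ``$\ge 0$'' also undercuts the argument: with zero drift every $\Gain$-phase would end almost surely and the phase structure would recur forever; the strict positivity of $\MP_1$ in the $\Gain$ objective is what makes a $\Gain$-phase last forever with positive probability.

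The paper's actual proof is qualitatively different and much shorter: using the exponential-decay bound on the probability that a positive-drift reward sum ever falls by $k$ (\Cref{lem:exp_decay}), choose $Z_g$ so large that after each $\Bailout$-phase there is a fixed probability $p>0$ that no further $\Bailout$ is ever needed. Since each $\Bailout$-phase terminates almost surely, a geometric-trials argument gives that, except on a null set, only finitely many Bailouts occur; hence almost every run has a suffix that plays $\optzstrat_{\Gain}$ forever, and by shift-invariance of $\vec{\MP_{[2,d]}\lrc{> 0}}$ together with the almost-sure winning property of $\optzstrat_{\Gain}$, the mean-payoff part holds almost surely. (Your energy-safety argument is essentially fine, modulo the remark that $\optzstrat_{\Gain}$ itself is not energy-safe --- safety comes from the switching rule, which hands control back to $\optzstrat_{\Bailout}$ while the energy is still $\ge Z_b$.) What you have sketched --- bounding the long-run \emph{fraction} of time spent in Bailout via per-phase expected rewards and sojourn times in a finite chain --- is in substance the proof of the later, harder result for the truncated strategy (\Cref{lem:sufficiency_condition_pos_mp} and \Cref{lem:lower_bounds_expected_sums_gain_bailout}), where infinitely many Bailouts genuinely can occur and that quantitative analysis is unavoidable; it is not a proof of the present lemma.
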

\begin{proof}
The parameter $Z_g$ is chosen sufficiently large such that there is a fixed non-zero
probability that after every Bailout-phase one never needs another $\Bailout$. 
(Thus, except in a null set there are only finitely many Bailouts.) 
The existence of such a finite $Z_g$ is guaranteed by the fact that
$\lim_{k \tendsto \infty} \Prob[][\optzstrat_{\Gain},\s]{\obj\lrc{k}} = 1$.~(\Cref{lem:exp_decay}).
Eventually, except in a null set, $\optzstrat_{\mathtt{alt,Z_b,Z_g}}$ plays
$\Gain$ forever, thus satisfying $\obj\lrc{i_{\s}}$ almost
surely from $\s$.
\end{proof}

Some combined objectives like Energy-Parity really require infinite
memory for almost surely winning strategies \cite{MSTW2017}.
However, we show that a sufficiently large \emph{finite} memory is enough to
win Energy-MeanPayoff almost surely. 
The idea is to modify the strategy $\optzstrat_{\mathtt{alt,Z_b,Z_g}}$ such
that it remembers the current energy only in the interval $[0,b]$, for some
sufficiently large $b > Z_g$, and ignores any possible excess energy above $b$.
This modified strategy is denoted by $\optzstrat_{\mathtt{alt,Z_b,Z_g,b}}$, and
it has a finite set of memory modes $[0,b] \x \{0,1\}$. The $\{0,1\}$ part
is used to remember the current phase ($\Gain=0$ or $\Bailout=1$).
Then $\optzstrat_{\mathtt{alt,Z_b,Z_g,b}}[(u,x)]$ denotes the strategy
$\optzstrat_{\mathtt{alt,Z_b,Z_g,b}}$ with current memory mode $(u,x) \in [0,b] \x \{0,1\}$. 

The finite bound $b$ on the remembered energy has the effect that
$\optzstrat_{\mathtt{alt,Z_b,Z_g,b}}$ can no
longer guarantee a fixed positive probability of not needing another Bailout
after each Bailout-phase. Thus, one might have infinitely many Bailouts with
positive probability. (Most of these are unnecessary, but one cannot be sure
which ones).
Unlike for Energy-Parity, where using infinitely many 
$\Bailout$ phases can compromise the objective,
the nature of the $\vec{\MP_{[2,d]}\lrc{> 0}}$ objective allows us to use
infinitely many Bailouts with non-zero probability, provided that they happen
sufficiently infrequently.

By its construction, the strategy $\optzstrat_{\mathtt{alt,Z_b,Z_g,b}}[(i_\s,x)]$ is
energy-safe from every state $\s$, every initial energy $\ge i_s$ and $x \in \{0,1\}$.
It remains to show that it 
also satisfies $\vec{\MP_{[2,d]}\lrc{> 0}}$ almost surely.
Since $\optzstrat_{\mathtt{alt,Z_b,Z_g,b}}$ is finite-memory, it suffices to
consider the induced finite Markov chain $\mc$ and show that the expected mean
payoff is strictly positive in every BSCC.
I.e., we prove that
$\expectation[][\optzstrat_{\mathtt{alt,Z_b,Z_g,b}},\s]{\vec{\MP_{[2,d]}}} >\vec{0}$
  for a sufficiently large $b$.
To this end, we consider the finite Markov chains   
$\mc^{\Gain}$ and $\mc^{\Bailout}$ obtained by fixing the memoryless
strategies $\optzstrat_{\Gain}$ and $\optzstrat_{\Bailout}$ in $\mdp^\ast$,
respectively.
The application of $\optzstrat_{\mathtt{alt, Z_b, Z_g, b}}$ can then be seen
as alternating between these two Markov chains based on hitting certain energy
levels. 

Let $T^{\Gain}$ denote the random variable that measures the length of a
Gain-phase, when starting at energy level $Z_g$ and assuming that the energy
it truncated at $b$.
Similarly, $T^{\Bailout}$ is the random variable that measures the length of a
Bailout-phase when starting at energy level $Z_b$.
(Here it does not matter that the energy is truncated at $b$,
since the Bailout-phase ends when the energy reaches $Z_g < b$.)
Since $R$ can be $>1$, the Bailout-phase might actually start at a slightly higher
energy level $u \in [Z_b,Z_b+R-1]$, and thus $T^{\Bailout}$ over-approximates the
actual length of the Bailout-phase, which is conservative for our analysis.
Similarly, 
the Gain phase might start with an energy slightly higher than $Z_g$, and
$T^{\Gain}$ under-approximates the length of the Gain-phase, which is again
conservative.
The random variables $(Y_{T^{\Gain}})_i$ and $(Y_{T^{\Bailout}})_i$
then measure the sum of the rewards the $i^{th}$ dimension obtained during the
Gain and Bailout phases, respectively.

The following lemma shows that the strategy $\optzstrat_{\mathtt{alt, Z_b, Z_g, b}}$
can attain a strictly positive mean payoff in all dimensions $i \in [2,d]$,
provided that the expected reward during the Gain-phase is sufficiently large
(positive) and the expected reward during the Bailout-phase (though possibly
negative) is not too small.

\begin{lemma}\label{lem:sufficiency_condition_pos_mp}
  If there are constants $v^1_i>0$ and $v^2_i$ such that, for all $i \in [2,d]$ and
  states $q$
    \begin{align*}
        \expectation[\mdp^\ast][\optzstrat_{\mathtt{alt, Z_b,Z_g,b}}[(Z_g,0)],q]{\lrc{Y_{T^{\Gain}}}_i} &\geq v^1_i\\
        \expectation[\mdp^\ast][\optzstrat_{\mathtt{alt, Z_b, Z_g, b}}[(Z_b,1)],q]{\lrc{Y_{T^{\Bailout}}}_i} &\geq v^2_i\\
        v^1_i + v^2_i > 0
    \end{align*}
    then $\expectation[\mdp^\ast][\optzstrat_{\mathtt{alt, Z_b, Z_g,b}}[\memconf],\s]{\MP_i} > 0$
    for all $\s$ and $\memconf \in [i_\s,b] \x \{0,1\}$.
  \end{lemma}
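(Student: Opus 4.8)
The plan is to fix the finite-memory strategy $\optzstrat_{\mathtt{alt, Z_b, Z_g, b}}[\memconf]$ in $\mdp^\ast$, obtaining a finite Markov chain $\mc$, and to reduce the claim to a statement about the bottom strongly connected components (BSCCs) of $\mc$. Since the mean payoff $\MP_i$ is shift-invariant, its value is almost surely constant on each BSCC and equals the expected mean payoff there; moreover $\expectation[\mdp^\ast][\optzstrat_{\mathtt{alt, Z_b, Z_g, b}}[\memconf],\s]{\MP_i}$ is a convex combination (weighted by reachability probabilities) of these per-BSCC values. Hence it suffices to show that in every BSCC $\bscc$ reachable from $(\s,\memconf)$ the value of $\MP_i$ is strictly positive almost surely.

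First I would dispose of the degenerate BSCCs. If $\bscc = \{\s_{\textrm{win}}\}$, then every step earns reward $+1$ in all dimensions, so $\MP_i = 1 > 0$. If $\bscc$ contains no $\Bailout$-configuration, then every run in $\bscc$ follows $\optzstrat_{\Gain}$ forever; since every state of $\mdp^\ast$ wins $\Gain = \vec{\MP_{[1,d]}\lrc{> 0}}$ almost surely under $\optzstrat_{\Gain}$ (\Cref{lem:mr_gain_strategy} together with the proof of \Cref{lem:gain_bailout_existence}), the mean payoff is strictly positive in all dimensions, in particular dimension $i$. The energy truncation at $b$ is irrelevant here, as it only affects the value recorded in dimension $1$.

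The main case is a BSCC $\bscc$ that contains both $\Gain$- and $\Bailout$-configurations. By recurrence of the finite chain, both kinds of configurations, and hence the phase switches, are visited infinitely often almost surely, so $\Gain$- and $\Bailout$-phases alternate forever, and each individual phase is almost surely finite with finite expected length (a finite irreducible chain reaches any target set in finite expected time). I would then pick the successive starts of $\Gain$-phases as renewal epochs, so that each inter-renewal cycle consists of exactly one $\Gain$-phase followed by one $\Bailout$-phase. Applying the renewal-reward theorem gives, almost surely,
\[
\MP_i = \frac{\expectation[][]{\lrc{Y_{T^{\Gain}}}_i + \lrc{Y_{T^{\Bailout}}}_i}}{\expectation[][]{T^{\Gain} + T^{\Bailout}}}.
\]
The denominator $\expectation[][]{T^{\Gain}} + \expectation[][]{T^{\Bailout}}$ is finite and strictly positive. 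For the numerator, each phase of the cycle begins at some state $q$ of $\mdp^\ast$, and the phase dynamics coincide with those defining $T^{\Gain}$ and $T^{\Bailout}$; since the hypotheses bound the expected per-phase reward uniformly over all $q$, and the definitions of $T^{\Gain},T^{\Bailout}$ under-/over-approximate the true phase lengths in the conservative direction (as already noted when they were introduced), the expected cycle reward is at least $v^1_i + v^2_i > 0$. Therefore the value of $\MP_i$ is strictly positive in $\bscc$.

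The main obstacle is the rigorous justification of the renewal-reward identity in this Markov-modulated setting, where the reward and length of a cycle are \emph{not} i.i.d.\ but depend on the (varying) state at which each phase begins. I would handle this by applying the ergodic theorem to the finite Markov chain obtained by observing $\mc$ only at the renewal epochs: the Ces\`aro averages of the per-cycle rewards and lengths converge almost surely to their stationary expectations, each bounded below by the uniform hypotheses, and the $\liminf$ defining $\MP_i$ then equals the stated ratio. Collecting the three cases, the value of $\MP_i$ is strictly positive almost surely in every reachable BSCC, whence $\expectation[\mdp^\ast][\optzstrat_{\mathtt{alt, Z_b, Z_g, b}}[\memconf],\s]{\MP_i} > 0$ for all $\s$ and $\memconf \in [i_\s,b] \times \{0,1\}$, as required.
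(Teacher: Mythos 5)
Your proposal is correct and follows essentially the same route as the paper: fix the finite-memory strategy to get a finite Markov chain, reduce to BSCCs, split into the no-Bailout case (pure $\Gain$ behaviour) and the infinitely-many-Bailouts case, and in the latter bound the mean payoff by $(v^1_i+v^2_i)/(\expectation[][]{T^{\Gain}}+\expectation[][]{T^{\Bailout}})>0$. Your treatment is in fact somewhat more careful than the paper's, which compresses the renewal-reward/ergodic argument for the Markov-modulated cycles into a single appeal to ``linearity of expectations.''
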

  \begin{proof}
  By fixing the finite-memory strategy $\optzstrat_{\mathtt{alt, Z_b,Z_g,b}}$,
  we obtain a finite Markov chain. Consider any BSCC in this Markov chain.
  In this BSCC, except for a null set of runs, either no Bailouts happen or infinitely many.
  In the former case, this BSCC behaves like playing $\optzstrat_{\Gain}$
  forever, which attains a strictly positive mean payoff in all dimensions almost
  surely, and thus a strictly positive expected mean payoff in each dimension $i$.
  In the second case, almost surely there happen infinitely many Bailouts,
  each starting at an every level $\ge Z_b$.
  Then, by the finiteness of the BSCC, we obtain that
  $\expectation[][]{T^{\Gain}} < \infty$.
  Moreover, by the definition of $\optzstrat_{\Bailout}$, the expected
  duration of the Bailout-phase is always finite, i.e.,
  $\expectation[][]{T^{\Bailout}} < \infty$.
  Thus, by linearity of expectations, 
  $\expectation[\mdp^\ast][\optzstrat_{\mathtt{alt, Z_b, Z_g,b}},\s]{\MP_i}
  \ge (v^1_i + v^2_i)/(\expectation[][]{T^{\Gain}} +
  \expectation[][]{T^{\Bailout}}) > 0$.
\end{proof}


The following technical \Cref{lem:lower_bounds_expected_sums_gain_bailout} (proof in \Cref{app:bounds}) shows that the
constants $v^1_i,v^2_i$ from \Cref{lem:sufficiency_condition_pos_mp} exist.
Recall that the finite Markov chains   
$\mc^{\Gain}$ and $\mc^{\Bailout}$ are obtained by fixing the memoryless
strategies $\optzstrat_{\Gain}$ and $\optzstrat_{\Bailout}$ in $\mdp^\ast$,
respectively.
Let $x_{\min,1}$ and $x_{\min,2}$ denote the minimal occurring
non-zero probabilities in these two Markov chains, respectively.
(They come from solutions of linear programs and can be chosen as only
exponentially small, i.e., described by a polynomial number of bits; cf.~\Cref{app:bounds}).
The proof works by applying general results about expected first passage times in
truncated Markov chains 
to the induced Markov chains $\mc^{\Gain}$ and $\mc^{\Bailout}$.
The general idea is that in the Gain-phase one has a general up drift in all
dimensions, and in particular in the first (energy) dimension.
It is thus unlikely to go down very far in the energy dimension,
even if the energy is truncated at $b$.
Thus, for a sufficiently large truncation point $b$ (actually $b=Z_g+1$ suffices),
the expected time spent in the Gain-phase is very large relative to the
expected time spent in the Bailout phase.
More exactly, the former increases exponentially in $b$, while the latter
is polynomial in $b$.
For a sufficiently large $b$ (exponential in $|\mdp^\ast|$),
the condition $v^1_i + v^2_i > 0$ is met.

\begin{restatable}{lemma}{constantsvonevtwo}\label{lem:lower_bounds_expected_sums_gain_bailout}
    Let $\mu_i > 0$ denote the lower bound on the mean payoff in dimension
    $i$ in any BSCC in the Markov chain $\mc^{\Gain}$ with corresponding
    computable constants $c_i,\,g_{\Gain},\,h_{\Gain}$
    $\lrc{\eqref{def:c},\eqref{def:g},\eqref{def:h}}$,
    and let $\mu$ denote the
    lower bound on the mean payoff in the $1^{st}$ dimension in any BSCC of
    $\mc^{\Bailout}$ with the corresponding constants
    $g_{\Bailout},h_{\Bailout}$.
    All the above constants, except $c_i$,
    can be chosen as
    at most exponential in $\size{\mdp^\ast}$ and
    $1/(1-c_i) \in \Ocompl\lrc{\exp\lrc{\exp\lrc{\size{\mdp^\ast}^{\Ocompl\lrc{1}}}}}$.
    
    Then there are constants $0 < C_1 < 1$, $C_2 > 0$, $C_3 > 0$, $C_4 > 0$,
    $C_5 > 0$, all exponential in $\size{\mdp^\ast}$
    and dependent only on $\mdp$,
    such that for  $k \eqdef \frac{2 \cdot
      \size{\states^\ast}}{x_{\min,1}^{\size{\states^\ast}}} \in
    \Ocompl\lrc{\exp\lrc{\size{\mdp^\ast}^{\Ocompl\lrc{1}}}}$, any
    $\delta \in (0,1)$ sufficiently small such that $\lrc{\size{\states^\ast}+1} \cdot \lrc{\frac{1}{\delta}-1} + \ceil{ \lrc{\log_{c_i}(\delta \cdot (1-c_i))}} \geq \frac{h_\Gain}{\mu_i}$ for all $2 \leq i \leq d$, one can choose \\
    $Z_g \eqdef Z_b\, +\, R\, +\, k \cdot R\, +\, \max_{i}\lrc{R \cdot \ceil{\log_{c_i}\lrc{\delta \cdot \lrc{1-c_i}}} - R + 1, h_\Gain}\, \in \Ocompl\lrc{e^{\size{\mdp^\ast}^{\Ocompl\lrc{1}}}\cdot \log\lrc{1/\delta}}$ and $b \eqdef Z_g + 1$ so that
    \begin{align*}
        \expectation[\mdp^\ast][\optzstrat_{\mathtt{alt, Z_b, Z_g, b}}[(Z_g,0)],q]{\lrc{Y_{T^{\Gain}}}_i} &\geq 
         C_1 \cdot \frac{1}{\delta} - C_2 \log_2 \lrc{\frac{1}{\delta}} - C_3
      & \eqdef v^1_i \\
        \expectation[\mdp^\ast][\optzstrat_{\mathtt{alt, Z_b, Z_g,
      b}}[(Z_b,1)],q]{\lrc{Y_{T^{\Bailout}}}_i} &\geq - C_4 \log_2
                                                  \lrc{\frac{1}{\delta}} - C_5
      & \eqdef v^2_i
    \end{align*}
    In particular, in order to satisfy the condition $v^1_i + v^2_i > 0$,
    it suffices to choose
    $1/\delta \in \Ocompl\lrc{\max(1/C_1,\max_{2\le j \le 5} C_j)^{\Ocompl\lrc{1}}}$.
    Since the constants $C_j$ are exponential in $\size{\mdp^{\ast}}$,
    and by the conditions on the other constants above, 
    the value $Z_g$, and hence the overall bound $b=Z_g+1$, can be chosen such that
    $b \in \Ocompl\lrc{\exp\lrc{\size{\mdp^\ast}^{\Ocompl\lrc{1}}}}$.
\end{restatable}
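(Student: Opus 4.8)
The plan is to reduce both expectations to expected first-passage-time estimates in the two finite Markov chains $\mc^{\Gain}$ and $\mc^{\Bailout}$ induced by the memoryless strategies $\optzstrat_{\Gain}$ and $\optzstrat_{\Bailout}$, and to exploit the asymmetry between the two phases. In both chains the energy (dimension $1$) is a Markov-modulated additive process with strictly positive per-step drift ($\mu_1 > 0$ in $\mc^{\Gain}$ and $\mu > 0$ in $\mc^{\Bailout}$). During the Gain-phase the energy starts at the high level $Z_g$ and the phase lasts until it falls the large gap down to $Z_b + R$; since this descent runs \emph{against} the drift, the expected first-passage time is \emph{exponential} in the gap. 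During the Bailout-phase the same gap is climbed, now \emph{with} the drift, so the expected first-passage time is only \emph{linear} in the gap. The gap $Z_g - Z_b - R = kR + \max_i\lrc{\ldots}$ is split into a fixed exponential buffer $kR$ (coming from return-time bounds in $\mc^{\Gain}$, hence the factor $1/x_{\min,1}^{\size{\states^\ast}}$) that lets the chain reach its recurrent regime, plus a tunable part of order $\log_{c_i}(\delta(1-c_i)) = \Theta(\log(1/\delta))$. This parametrisation turns ``exponential in the gap'' into ``$\sim 1/\delta$'' and ``linear in the gap'' into ``$\sim \log(1/\delta)$'', which is exactly the separation we need.

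For the Gain-phase I would apply the expected first-passage-time results for \emph{truncated} Markov chains (energy capped at $b = Z_g + 1$) to $\mc^{\Gain}$. Capping at $Z_g+1$ makes the energy reflect just above $Z_g$, so each excursion attempting to descend the whole gap to $Z_b + R$ succeeds only with probability decaying geometrically in the gap, with base governed by $c_i$; the expected number of such excursions (each of bounded expected length), and hence $\expectation[][]{T^{\Gain}}$, is therefore of order $1/\delta$. I would then convert phase length into accumulated reward: in each BSCC a Wald/optional-stopping identity gives $\expectation[\mdp^\ast][\optzstrat_{\mathtt{alt, Z_b, Z_g, b}}[(Z_g,0)],q]{\lrc{Y_{T^{\Gain}}}_i} \ge \mu_i\cdot\expectation[][]{T^{\Gain}}$ up to boundary corrections. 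These corrections (overshoot at the stopping level, plus the transient prefix before the BSCC is entered, controlled by $g_{\Gain}$ and $h_{\Gain}$, with the dimension-$i$ concentration scale set by $c_i$) are $\Ocompl(\log(1/\delta))$, which is why the gap is chosen as $\max_i$ over the dimensions. Collecting terms yields the stated form $C_1\cdot\frac{1}{\delta} - C_2\log_2\lrc{\frac{1}{\delta}} - C_3 = v^1_i$ with $0 < C_1 < 1$.

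For the Bailout-phase, $\optzstrat_{\Bailout}$ guarantees $\MP_1\lrc{>0}$, so in $\mc^{\Bailout}$ the energy climbs with strictly positive drift $\mu$; the first passage from $Z_b$ up to $Z_g$ across a gap of order $\log(1/\delta)$ then has expected length $\expectation[][]{T^{\Bailout}} = \Ocompl(\log(1/\delta))$ (linear in the gap, with constants $g_{\Bailout}, h_{\Bailout}$ and minimal probability $x_{\min,2}$). Since every transition changes each coordinate by at least $-R$, the crude bound $\lrc{Y_{T^{\Bailout}}}_i \ge -R\cdot T^{\Bailout}$ already gives $\expectation[\mdp^\ast][\optzstrat_{\mathtt{alt, Z_b, Z_g, b}}[(Z_b,1)],q]{\lrc{Y_{T^{\Bailout}}}_i} \ge -C_4\log_2\lrc{\frac{1}{\delta}} - C_5 = v^2_i$. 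Adding the two estimates gives $v^1_i + v^2_i \ge C_1/\delta - (C_2+C_4)\log_2(1/\delta) - (C_3+C_5)$; because $1/\delta$ eventually dominates $\log_2(1/\delta)$, taking $1/\delta$ polynomial in $\max(1/C_1, \max_{2\le j \le 5} C_j)$ forces $v^1_i + v^2_i > 0$. As each $C_j$ is exponential in $\size{\mdp^\ast}$, this $\delta$ is only polynomially small, so the gap, and hence $Z_g$ and $b = Z_g + 1$, stay in $\Ocompl\lrc{\exp\lrc{\size{\mdp^\ast}^{\Ocompl\lrc{1}}}}$, as claimed.

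I expect the main obstacle to be the Gain-phase estimate: proving a \emph{tight} exponential lower bound on $\expectation[][]{T^{\Gain}}$ for a genuinely Markov-\emph{modulated} walk (rather than a simple biased random walk) confined by the truncation at $b$, while simultaneously keeping the reward-versus-time conversion error at the logarithmic order $\Ocompl(\log(1/\delta))$ in each dimension. This is exactly where the truncated-Markov-chain first-passage machinery is required, and where the doubly-exponential size of $1/(1-c_i)$ enters, reflecting the possibly minuscule spectral gap (ruin-probability base) of the sub-stochastic matrices describing descents against the drift.
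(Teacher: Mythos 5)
Your proposal is correct and takes essentially the same route as the paper: the paper likewise reduces both bounds to truncated first-passage-time estimates in $\mc^{\Gain}$ and $\mc^{\Bailout}$, uses the Azuma--Hoeffding ruin bound with base $c$ together with a reset/excursion argument to get $\expectation{T^{\Gain}}\gtrsim 1/\delta$, converts time to reward via optional stopping on the potential-function martingale ($\expectation{(Y_T)_i}\ge \mu_i\expectation{T}-h_{\Gain}$), handles the transient prefix with the $k\cdot R$ buffer controlled by $g_{\Gain}$, and bounds the Bailout contribution by $-R\cdot\expectation{T^{\Bailout}}$ with $\expectation{T^{\Bailout}}$ linear in $Z_g$. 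The only cosmetic difference is the bookkeeping of which $\log(1/\delta)$ terms are charged to $v^1_i$ versus $v^2_i$ (in the paper they arise from $Z_g$ itself growing like $\log(1/\delta)$ rather than from boundary corrections), which does not affect the argument.
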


Now we can prove the first item of our main result.

\begin{proof}[Proof of \Cref{thm:main}(\Cref{res:inf_str ==> fin_str})]
Towards a contradiction, we assume that there exists
a state $\s^{\dagger}$ such that 
there is no finite-memory almost surely winning strategy from $\s^{\dagger}$
for $\obj\lrc{i_{\s^\dagger}}$ in the MDP $\mdp$.

First we consider the MDP $\mdp^\ast$.
The finite-memory strategy $\optzstrat_{\mathtt{alt, Z_b,Z_g,b}}[(i_{\s^\dagger},1)]$ from $\s^{\dagger}$
is energy-safe by construction and satisfies $\en_1(i_{\s^\dagger})$ surely.
Now consider the finite Markov chain induced by fixing this
finite-memory strategy in $\mdp^\ast$.
By \Cref{lem:sufficiency_condition_pos_mp} and \Cref{lem:lower_bounds_expected_sums_gain_bailout},
for a sufficiently large (exponential) $b$
it yields a strictly positive expected mean payoff
$v^1_i + v^2_i >0$ in every dimension $i \in [2,d]$ in every BSCC of this
Markov chain.
Since the Markov chain is finite, this implies that the mean payoff in every
dimension $i \in [2,d]$ is strictly positive almost surely.
Hence,
$\Prob[\mdp^\ast][\optzstrat_{\mathtt{alt,Z_b,Z_g,b}}[(i_{\s^\dagger},1)],\s^\dagger]{\MP_{[2,d]}\lrc{>0}}=1$
and thus
$\Prob[\mdp^\ast][\optzstrat_{\mathtt{alt,Z_b,Z_g,b}}[(i_{\s^\dagger},1)],\s^\dagger]{\obj\lrc{i_{\s^\dagger}}}=1$.
So there exists an almost surely winning finite-memory strategy from
$\s^\dagger$ for $\obj\lrc{i_{\s^\dagger}}$ in $\mdp^\ast$.
However, \Cref{lem:mdp_equiv_mdp^ast} then implies that there also exists
an almost surely winning finite-memory strategy from
$\s^\dagger$ for $\obj\lrc{i_{\s^\dagger}}$ in $\mdp$. Contradiction.
\end{proof}

\begin{remark}\label{rem:infix}
If $\optzstrat_{\mathtt{alt, Z_b,Z_g,b}}$ satisfies $\obj\lrc{i_\s}$ almost surely
from some state $\s$ then it also satisfies the stronger objective
$\obj\lrc{i_\s} \cap \infix\lrc{b}$ almost surely.
Consider a winning run induced by $\optzstrat_{\mathtt{alt, Z_b,Z_g,b}}$.
While the true energy might sometimes be higher than $b$, the energy remembered by
$\optzstrat_{\mathtt{alt, Z_b,Z_g,b}}$ is always $\le b$.
Even with this conservative under-approximation of the energy,
the run still satisfies the energy objective.
Therefore, in any winning run induced by $\optzstrat_{\mathtt{alt, Z_b,Z_g,b}}$,
the energy can never \emph{decrease} by more than $b$. Thus,
also $\infix\lrc{b}$ is satisfied almost surely.
\end{remark}

\section{Proof of \texorpdfstring{\Cref{res:mem_bound}}{memory bound}}\label{sec:mem_bounds}

Given some state $\s$, let $\zstrat = \memstrattuple$ be a finite-memory
strategy that is almost surely winning for $\obj\lrc{i_\s}$ 
(which exists by \Cref{res:inf_str ==> fin_str}).
We show there exists an almost surely winning strategy
$\zstrat'$ for $\obj\lrc{i_\s}$ such that the energy fluctuations are bounded by some 
constant which is exponential in $\size{\mdp}$.

First, inside any BSCC $B$ of $\mdp^\zstrat$, we construct an
almost surely winning strategy $\zstrat_B$ and upper bound the minimal safe energy levels and energy fluctuation 
while following $\zstrat_B$.
Using this, we upper bound the energy fluctuations in paths before reaching a BSCC. 
We use the fact that the set of states and transitions that occur in any BSCC
of a Markov chain induced by fixing some finite-memory strategy in an MDP
is an end component of this MDP (\cite[Theorem 3.2]{D-A1997}).

\begin{restatable}{lemma}{bsccmemboundTthree}\label{lem:BSCC_mem_bound}
    Let $B$ be a BSCC of $\mdp^\zstrat$ and let $\mdp\lrc{B}$ be the corresponding end component in $\mdp$ with states $\states_B$ and transitions 
    $\transition_B$. Then there is a strategy $\zstrat_B$, a bound 
    $b_B \in \Ocompl\lrc{\exp\lrc{\size{\mdp\lrc{B}}^{\Ocompl\lrc{1}}}}$ such that for any state $q \, \in \, \states_B$, there is a minimal safe energy level $j_q \eqdef i^{\mdp\lrc{B}}_q \leq 3 \cdot \size{\states_B} \cdot R$ such that 
    $ \Prob[\mdp\lrc{B}][\zstrat_B,q]{\obj\lrc{j_q} \, \cap\, \infix\lrc{b_B}} = 1$.
\end{restatable}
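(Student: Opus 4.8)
The plan is to argue entirely inside the end component $\mdp\lrc{B}$. Being an end component it is strongly connected, and fixing $\zstrat$ on the BSCC $B$ witnesses that Energy-MeanPayoff is winnable there: the induced irreducible Markov chain has a mean-payoff vector $\vec v$ with $v_j>0$ for every $j\in[2,d]$ (the objective holds almost surely) and $v_1\ge 0$ (on a finite chain, $v_1<0$ would force the energy to $-\infty$). Let $\mu^\ast$ be the largest mean payoff attainable in the energy dimension by any strategy in $\mdp\lrc{B}$, so $0\le v_1\le\mu^\ast$. I would split on whether $\mu^\ast>0$ (the energy can be pumped) or $\mu^\ast=0$ (the bounded-energy corner case).

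If $\mu^\ast>0$, then $q\in\AS(\Bailout(k))$ for some $k$, so \Cref{lem:md_bailout_strategy} gives a uniform MD strategy $\optzstrat_{\Bailout}$ with $i^{\Bailout}_q\le 3\cdot\size{\states_B}\cdot R$. For the $\Gain$ side I would exploit that the achievable mean-payoff vectors of the strongly connected $\mdp\lrc{B}$ form a convex set: mixing $\vec v$ (positive on $[2,d]$) with a witness of $\mu^\ast>0$ (positive first coordinate), weighted mostly towards $\vec v$, produces a vector that is strictly positive in every coordinate, so \Cref{lem:mr_gain_strategy} supplies $\optzstrat_{\Gain}$. With both memoryless strategies available in $\mdp\lrc{B}$, the alternating-strategy construction and its analysis (\Cref{lem:reverse_gain_bailout_existence,lem:sufficiency_condition_pos_mp,lem:lower_bounds_expected_sums_gain_bailout}) apply with $\mdp\lrc{B}$ in place of $\mdp^\ast$, producing $\zstrat_B=\optzstrat_{\mathtt{alt,Z_b,Z_g,b}}$ with $b=b_B\in\Ocompl\lrc{\exp\lrc{\size{\mdp\lrc{B}}^{\Ocompl\lrc{1}}}}$ that is almost surely winning for the objective and, by \Cref{rem:infix}, also for $\infix\lrc{b_B}$. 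Since this strategy is energy-safe from $i^{\Bailout}_q$, we get $j_q=i^{\mdp\lrc{B}}_q\le 3\cdot\size{\states_B}\cdot R$.

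If $\mu^\ast=0$, then $v_1=0$, so $\zstrat|_B$ is itself $\MP_1$-optimal. Let $h\colon\states_B\to\R$ be an optimal bias for this value-$0$ average-reward problem, satisfying $r_1(s\to s')\le h(s)-h(s')$ at controlled states and the corresponding balance in expectation at random states. Then, under $\zstrat|_B$, the process $M_n\eqdef\lrc{Y_n}_1+h(X_n)$ is a supermartingale that is bounded below by energy-safety, hence converges almost surely; since every state of the irreducible chain $B$ recurs, convergence of $M_n$ forces each transition of $\transition_B$ to be \emph{tight}, i.e.\ $r_1(s\to s')=h(s)-h(s')$. Thus the energy is an exact coboundary on the end-component states, so under \emph{any} strategy $\lrc{Y_n}_1=h(q)-h(X_n)$ stays inside a window of width $\max_s h(s)-\min_s h(s)\le\size{\states_B}\cdot R$. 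I may therefore take $\zstrat_B$ to be the MR strategy of \Cref{lem:mr_gain_strategy} for the subset of dimensions $\vec{\MP_{[2,d]}\lrc{>0}}$; it stays inside $\transition_B$ and hence automatically respects $\infix\lrc{b_B}$ for $b_B\eqdef\size{\states_B}\cdot R$, while $j_q=i^{\mdp\lrc{B}}_q\le\max_s h(s)-h(q)\le\size{\states_B}\cdot R$. Here both bounds are in fact polynomial.

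The hard part is the corner case $\mu^\ast=0$: the content is that vanishing attainable energy drift together with almost-sure energy-safety on a finite chain leaves no room for genuine fluctuation, and it is the supermartingale-convergence step that upgrades the a-priori only-in-expectation balance at random states into an exact, \emph{memory-independent} coboundary on $\states_B$. Without it one would only obtain a coboundary on the product chain $B$ and a fluctuation bound depending on the memory of $\zstrat$ rather than on $\size{\mdp\lrc{B}}$. A secondary point to check is that $\zstrat_B$ wins already from the exact minimal energy $i^{\mdp\lrc{B}}_q$ while retaining $\infix\lrc{b_B}$: in the corner case this is immediate from the coboundary, and in the pumping case one observes that if no $\MP_1=0$ winning strategy exists then every objective-winning strategy also wins $\Bailout$, forcing $i^{\mdp\lrc{B}}_q=i^{\Bailout}_q$ so that the alternating strategy is optimal; the residual sub-case, where $\mu^\ast>0$ yet a bounded-energy optimal strategy exists, is treated exactly as the corner case on the sub-end-component that this strategy traverses.
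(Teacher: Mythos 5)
There is a genuine gap in your pumping case $\mu^\ast>0$, at the step ``then $q\in\AS(\Bailout(k))$ for some $k$''. A strictly positive maximal mean payoff in the energy dimension --- and even the almost-sure positivity of $\MP_1$ that your convexity argument correctly delivers --- does \emph{not} imply that $\en_1(k)\,\cap\,\MP_1\lrc{>0}$ is almost surely winnable for any finite $k$: a biased random walk with positive drift still has positive ruin probability from every finite level (e.g.\ a random state with a self-loop of reward $-1$ taken with probability $1/2$; no finite initial credit makes $\en_1(k)$ hold almost surely, yet $\MP_1>0$ holds almost surely). So \Cref{lem:md_bailout_strategy} cannot yet be invoked. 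What rescues the situation is the WEC structure: the energy-safe witness $\zstrat$ induces a product chain in which every cycle has effect $0$ (when $v_1=0$), and the paper's \Cref{lem:WEC_Type-I => MP_1 > 0} performs a ``shortcut'' surgery on $\zstrat$ --- bypassing, with small probability, a negative-effect connecting path between two memory modes --- to obtain a strategy that is \emph{simultaneously} energy-safe and has strictly positive drift. This is the missing ingredient, and your attempted patch does not supply it: in the sub-case ``$\mu^\ast>0$ but some winning strategy has $\MP_1=0$'', the sub-end-component traversed by that strategy is all of $\mdp\lrc{B}$ (every edge of $\transition_B$ recurs in the BSCC), which still has $\mu^\ast>0$, so your corner-case supermartingale argument --- which needs the bias inequality $r_1(s,s')+h(s')-h(s)\le 0$ available only when the optimal gain is $0$ --- does not apply there. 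The same gap infects your claim $j_q=i^{\Bailout}_q$.

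The rest of the proposal is sound and in places takes a genuinely different route from the paper. Your corner case $\mu^\ast=0$ is a nice alternative to the paper's Type-II analysis: the paper splits on whether $\mdp\lrc{B}$ contains a positive-effect simple cycle and argues combinatorially, whereas your bounded-below supermartingale $\lrc{Y_n}_1+h(X_n)$, together with recurrence of every transition of $\transition_B$ and finiteness of the increment set, forces every transition to be tight and yields the memory-independent coboundary on $\states_B$ directly (in particular it proves, as a byproduct, that $\mu^\ast=0$ excludes positive simple cycles, one direction of the equivalence of the two case splits). Likewise your convexity argument for $\AS\lrc{\Gain}$ is a clean substitute for the way the paper derives the Gain part from \Cref{lem:WEC_Type-I => MP_1 > 0}. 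To complete the proof you need to close the Bailout gap, either by importing the paper's shortcut construction or by some other argument that upgrades the energy-safe, zero-drift witness $\zstrat$ into an energy-safe, positive-drift one.
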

\begin{proof}[Proof Sketch. (Full proof in \Cref{app:mem_bounds}.)]
The idea is that for $\mdp\lrc{B}$ there are two cases.
In the first case it behaves similar to $\mdp^{\ast}$ from 
\Cref{sec:inf_str ==> fin_str}, in the sense that
it is possible to win $\Gain$ and $\Bailout$ almost surely, and thus
Energy-MeanPayoff can be won almost surely by switching between the
two strategies for $\Gain$ and $\Bailout$ like in the strategy $\optzstrat_{\mathtt{alt, Z_b,Z_g}}$.
Then one can invoke \Cref{lem:lower_bounds_expected_sums_gain_bailout,rem:infix} 
on $\mdp\lrc{B}$ to get an exponential bound $b_B$ such that $\Prob[\mdp\lrc{B}][\zstrat_B,q]{\obj\lrc{j_q} \, \cap\, \infix\lrc{b_B}} = 1$. 

If the first case does not hold then
$\mdp\lrc{B}$ is very restrictive,
and one can show that the energy level fluctuations are bounded by a constant
in $\Ocompl\lrc{\size{\states_B} \cdot R}$. 
\end{proof}

Since the minimal safe energy levels inside these end components are not too large, 
one can then bound the energy fluctuations in paths before they reach any such
end component $\mdp\lrc{B}$.
The following lemma is shown in \Cref{app:mem_bounds}.

\begin{restatable}{lemma}{transientmemboundTthree}\label{lem:Transient_mem_bound}
    Let $T$ denote the union of all $\states_B$ of every BSCC $B$ of
    $\mdp^\zstrat$,
    as in \Cref{lem:BSCC_mem_bound}.
    Then one can almost surely reach any state in $T$ with the corresponding minimal
    safe energy level with energy fluctuations of at most $5 \cdot \size{\states} \cdot R$.
\end{restatable}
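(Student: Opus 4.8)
The plan is to recast the statement as an almost-sure \emph{safe reachability} problem and to read off both the required initial credit and the energy fluctuation from the minimal-credit function of that problem. Let $G \eqdef \setcomp{(q,e)}{q \in T,\ e \ge j_q}$ be the set of configurations consisting of a state $q\in T$ together with an energy level at least its minimal safe level $j_q$ from \Cref{lem:BSCC_mem_bound}. For a state $q$ let $m(q) \in \N \cup \lrd{\infty}$ denote the least initial energy from which player $\pz$ can reach $G$ almost surely while keeping the energy $\ge 0$ at all times (i.e.\ $\en_1$-safely); this value is well defined and upward closed in the energy. First I would argue $m(\s)<\infty$: the given almost-surely winning finite-memory strategy $\zstrat$ induces a finite Markov chain $\mdp^\zstrat$, whose runs almost surely enter a BSCC $B$ and from then on stay inside $\states_B \subseteq T$; at the first configuration inside $B$, say at state $q$, the tail of the run wins $\obj$ in the end component $\mdp\lrc{B}$, so the energy there is $\ge i^{\mdp\lrc{B}}_q = j_q$. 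Hence $\zstrat$ itself is an $\en_1$-safe strategy reaching $G$ almost surely from $\s$, so $m(\s)\le i_{\s}<\infty$.

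\textbf{Bounding $m$.} Next I would show $m(q)\le 5\cdot\size{\states}\cdot R$ for every state $q$ with $m(q)<\infty$. Here $m$ is the least fixed point of the standard energy/reachability Bellman operator: at a $\pz$-state one takes the minimum over successors of $\max(0,\,m(q')-r_1(q,q'))$ (with $r_1$ the reward in the energy dimension), and at a random state the maximum over all successors, so that safety and reachability hold on \emph{every} branch, which is exactly the requirement of almost-sure reachability and uses that the region $\setcomp{q}{m(q)<\infty}$ is closed under random successors. Value iteration stabilises within $\size{\states}$ steps, and each iteration can raise a finite value by at most $R$ (since $\size{r_1}\le R$). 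Combining the target requirement $j_q \le 3\size{\states_B}R \le 3\size{\states}R$ with an attractor of depth at most $\size{\states}$ contributing at most $2\size{\states}R$ (one factor $\size{\states}R$ for the safety deficit along the way, one for meeting the threshold $j_q$ on arrival) yields $m(q)\le 5\size{\states}R$.

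\textbf{Strategy and fluctuation bound.} Let $\zstrat^T$ be the memoryless deterministic strategy that at every $\pz$-state picks a successor attaining the minimum in the Bellman equation. A short induction shows it preserves the invariant ``current energy $\ge m(\text{current state})$'', so from any start energy $\ge m(\s)$ (in particular from $i_{\s}$) it is $\en_1$-safe and, by the rank decrease of the almost-sure attractor together with closure at random states, reaches $G$ almost surely. For the fluctuation the crucial point is that $\zstrat^T$ is \emph{state-based}: the moves taken from a visited transient state $p$ do not depend on the absolute energy. Hence along any run the continuation from the visit to $p$ is exactly the canonical run of $\zstrat^T$ from $p$, which with credit $m(p)$ stays $\ge 0$ on every branch; therefore every infix starting at $p$ has energy drop $-\sum r_1 \le m(p) \le 5\size{\states}R$. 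Since this holds until $G$ is reached, the transient phase satisfies $\infix\lrc{5\size{\states}R}$, and it arrives at a state $q\in T$ with energy $\ge j_q$.

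\textbf{Main obstacle.} I expect the fluctuation bound to be the delicate part: it genuinely relies on choosing a state-based (memoryless) minimal-credit strategy, so that the shape of the energy trajectory from any state is independent of the current surplus energy; a history- or energy-dependent reachability strategy would only bound an infix drop by the current (possibly large) energy, which need not be small. A secondary subtlety is justifying the linear bound on $m$ for \emph{almost-sure} rather than sure reachability, where random states force the maximum over branches and one must argue that the relevant attractor depth is still at most $\size{\states}$.
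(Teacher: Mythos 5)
There is a genuine gap, and it sits exactly where you flagged the argument as delicate. Your minimal-credit function $m$ is fine as a definition, and your first paragraph (that $\zstrat$ itself witnesses $m(\s)\le i_\s<\infty$, with arrival energy $\ge j_q$ at the BSCC entry) is correct and matches the paper's starting point. But the memoryless strategy $\zstrat^T$ that ``picks a successor attaining the minimum'' does \emph{not} reach $G$ almost surely in general, and no memoryless strategy does. Consider a controlled state $q$ with one transition of reward $-5$ into $G$ and one cycle through a dummy state back to $q$ of total reward $+1$. The minimal credit is $m(q)=0$ (pump the cycle five times, then cross), and the Bellman minimum at $q$ is attained by the cycle term $\max(0,m(q)-1)$; so $\zstrat^T$ loops forever and never reaches $G$, while any memoryless strategy that puts positive probability on the crossing transition takes it from energy $0$ with positive probability and violates $\en_1$. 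The only winning strategies here track the current energy, i.e., they are finite-memory but not state-based. Since your fluctuation bound rests entirely on $\zstrat^T$ being state-based (so that the continuation from a revisited state is energy-independent), it collapses together with the reachability claim. A secondary problem is the assertion that value iteration for $m$ stabilises within $\size{\states}$ steps: in the example above the finite value at $q$ keeps decreasing for about $R$ further rounds as longer pumping prefixes are discovered, so the ``$\size{\states}$ iterations times $R$ per iteration'' accounting does not establish $m(q)\le 5\size{\states}R$, even though a bound of that order is true.

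The paper avoids all of this by not constructing the transient strategy from scratch: it observes that $\zstrat$ witnesses $\en_1(i_\s)\,\cap\,\eventually\,T$ almost surely, invokes the known result \cite[Lemma 2]{CD2011} that this objective, when winnable, is winnable with energy fluctuation at most $2\size{\states}R$ by a strategy with $\Ocompl(\size{\states}R)$ memory modes, and then composes: encode the energy capped at $5\size{\states}R$ into the state space, follow $\zstrat$ until either a BSCC is entered (with energy $\ge j_q$, and $j_q\le 3\size{\states}R$) or the cap is hit, and in the latter case switch to the Chatterjee--Doyen strategy. The fluctuation bound then comes from the cap, not from memorylessness. To salvage your self-contained route you would essentially have to reprove that lemma: replace $\zstrat^T$ by a two-mode strategy that tracks the energy up to $\Ocompl(\size{\states}R)$ and alternates between a pumping phase and an attractor phase toward $G$.
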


\begin{proof}[Proof of \Cref{thm:main}(\Cref{res:mem_bound})]
  By \Cref{lem:BSCC_mem_bound,lem:Transient_mem_bound},
  for each state $\s$, one can choose a strategy $\zstrat$ and some constant 
  $b \in \Ocompl\lrc{\exp\lrc{\size{\mdp}^{\Ocompl\lrc{1}}}}$
  such that $\Prob[\mdp][\zstrat,\s]{\obj\lrc{i_\s}\, \cap \, \infix\lrc{b}} = 1$.
    This means if one encodes the energy levels between $[0,b]$ into the state space by discarding any excess energy above $b$ and redirecting all the 
    transitions which result in a negative energy to a losing sink (for
    $\vec{\MP}_{[2,d]}\lrc{> 0}$)
    and constructs this larger MDP $\mdp[0,b]$, 
    then there is a strategy $\zstrat'$ such that
    $\Prob[\mdp[0,b]][\zstrat',\tuple{\s,k}]{\vec{\MP}_{[2,d]}\lrc{> 0}} =1$
    for every $k \in [i_\s,b]$. 
    Then, by \Cref{lem:mr_gain_strategy},
    there also exists a memoryless (MR) strategy  
    $\optzstrat$ in $\mdp[0,b]$ which is almost surely winning 
    $\vec{\MP}_{[2,d]}\lrc{> 0}$ from $\tuple{\s,k}$.
    
    We can carry the memoryless strategy $\optzstrat$ in $\mdp[0,b]$ back to $\mdp$
    as a finite-memory strategy $\optzstrat_\mdp$ with memory $[0,b]$.
    It stores the
    encoded under-approximated energy level from $\mdp[0,b]$ in its finite memory instead. 
    Thus $\optzstrat_\mdp$ is a finite-memory strategy from $\s$ that
    satisfies $\obj\lrc{i_\s}$
    almost surely, and the size of its memory is bounded by
    $b \in \Ocompl\lrc{\exp\lrc{\size{\mdp}^{\Ocompl\lrc{1}}}}$.

    The strategy $\optzstrat_\mdp$ uses randomization, because $\optzstrat$
    from \Cref{lem:mr_gain_strategy} is MR.
    However, the MR strategy $\optzstrat$ for the mean payoff objective could
    be replaced by a deterministic strategy with an exponential number of
    memory modes. Hence the overall number of memory modes in the obtained
    deterministic version of $\optzstrat_\mdp$ is still only exponential.
  \end{proof}

\section{The Lower Bound (Proof of \Cref{res:no_fixed_mem_suffices})}\label{sec:memory-lowerbound}

In the previous sections we have shown that finite memory suffices
for almost surely winning strategies for the Energy-MeanPayoff objective.
However, the required memory depends on the given MDP.
We show that no fixed finite amount of memory is sufficient for all MDPs.
In fact, the required memory is exponential in the transition probabilities
even for an otherwise fixed 5-state MDP with just one controlled state, $R=1$ and $d=2$.

\begin{definition}\label{def:lowerbound}
Let $1 > \delta > 0$ and $\mdp_\delta=\mdptuple$ be an MDP with 2-dimensional rewards.
It has just one controlled state $\s$
with transitions $\s \to \s_l$ and $\s \to \s_r$.
From $\s_l$ there are two transitions $e_1 = (\s_l \to \s_l^1)$
and $e_2 = (\s_l \to \s_l^2)$.
Let $\probm(e_1) = (1 + \delta)/2$ and $\probm(e_2) = (1 - \delta)/2$
and $\vec{r}(e_1) = (+1,+1)$ and $\vec{r}(e_2) = (-1,-1)$.
$\s_l^1$ and $\s_l^2$ are random states which each have just one transition
back to $\s$ with probability $1$ and reward $\vec{0}$.
From $s_r$ there is only one transition $e_3$ back to $s$ with
probability $1$ and $\vec{r}(e_3) = (+1,-1)$.
\end{definition}

The following lemma directly implies the exponential lower bound
on the number of memory modes in \Cref{thm:main}(\Cref{res:no_fixed_mem_suffices}).

\begin{lemma}\label{thm:lowerbound}
  Consider the Energy-MeanPayoff objective.
  For every finite bound $m \in \N$ on the number of memory modes
  there exists a $\delta \eqdef 1/(6m) >0$
  such that the finite MDP $\mdp_\delta = \mdptuple$ from \Cref{def:lowerbound}
  satisfies the following properties.
  \begin{enumerate}
  \item
    $\exists \sigma'\ \probm^{\mdp_\delta}_{\sigma',\s}(\en_1(0)\,\cap\,\vec{\MP_2}\lrc{> 0})=1$,
    i.e., it is possible to win almost surely from $s$ in $\mdp_\delta$,
    even with initial energy $0$.
  \item
    For every finite-memory strategy $\sigma$ with $\le m$ memory modes we have
    $\probm^{\mdp_\delta}_{\sigma,\s}(\en_1(k)\,\cap\,\vec{\MP_2}\lrc{> 0})=0$
    for every $k \in \N$, i.e., $\sigma$ attains nothing in $\mdp_\delta$,
    regardless of the initial energy $k$.
  \item
    For $\mdp_\delta$ we have $|S|=5$, $d=1$ and $R=1$. 
    The number of memory modes required for an almost-surely winning strategy
in $\mdp_\delta$ is exponential in $|P|$ (and in $|\mdp_\delta|$).
  \end{enumerate}
\end{lemma}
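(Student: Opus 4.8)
The plan is to treat the three items of \Cref{thm:lowerbound} separately, each reducing to an analysis of the single gambling gadget of \Cref{def:lowerbound}. Throughout I write $p \eqdef (1+\delta)/2$ and $q \eqdef (1-\delta)/2$ for the two branch probabilities at $\s_l$, call the move $\s\to\s_l$ the \emph{gamble} and $\s\to\s_r$ the \emph{bailout}: the gamble changes the energy stochastically by $\pm 1$ and pays the same increment in the mean-payoff dimension, whereas the bailout raises the energy deterministically but pays $-1$ in the mean-payoff dimension. So the bailout is the only reliable way to restore energy, while the gamble is the only way to accumulate positive mean payoff. For item~1 I would exhibit an explicit infinite-memory strategy $\sigma'$ that records the \emph{exact} energy level and plays a threshold Gain--Bailout policy: bail out while the energy sits at the bottom and gamble otherwise, so that the energy performs a walk reflected at $0$. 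The energy thereby stays $\ge 0$ surely, and since gambling carries the run upward, the reflecting walk visits the bottom only finitely often outside a null set; hence almost surely $\sigma'$ eventually gambles forever and attains mean payoff $>0$ in dimension $2$, so it wins $\en_1(0)\cap\vec{\MP_2}\lrc{>0}$ almost surely. Equivalently, one assembles $\sigma'$ from the uniform memoryless Gain and Bailout strategies of \Cref{lem:mr_gain_strategy,lem:md_bailout_strategy} exactly as in the $\optzstrat_{\mathtt{alt,Z_b,Z_g}}$ construction.

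The heart of the lemma is item~2. Fix any finite-memory strategy $\sigma$ with at most $m$ memory modes and pass to the finite Markov chain $\mdp_\delta^\sigma$ whose states are pairs (control state, memory mode). It suffices to show that every BSCC of this chain fails the objective almost surely, so the objective has probability $0$ from every initial energy. By the ergodic theorem the mean payoff in a BSCC equals its stationary expectation $\mu_2 = \lambda\delta-(1-\lambda)$, where $\lambda$ is the stationary frequency of the gamble; thus $\mu_2>0$ forces $\lambda>1/(1+\delta)$, i.e.\ the bailout must be taken with frequency strictly below $\delta/(1+\delta)$. The quantitative core is to show that with only $m$ memory modes this bailout frequency cannot be pushed that low while the run remains energy-safe on a set of positive measure. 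Because the memory can distinguish only $O(m)$ energy levels, I would model the bailout-triggering configurations by a birth--death/first-return estimate on the at-most-$m$-mode chain and show that they recur with stationary frequency $\Theta(\delta)$; the calibration $\delta=1/(6m)$ makes the product $\delta\cdot m$ a fixed constant, so this recurrence rate strictly exceeds the per-gamble gain $\delta$ and forces $\mu_2\le 0$ in every BSCC.

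The main obstacle is making this bailout-frequency bound \emph{uniform} over all $\le m$-mode strategies rather than just the natural energy-counting one. The dangerous case is a strategy that essentially never bails out and instead lets the true energy escape above whatever window its memory can track, hoping the gamble alone keeps the energy non-negative. The intended resolution is that once the bounded memory saturates it can no longer certify that the true energy is high, so the safety-critical (bailout) modes stay recurrent in the induced chain; the construction is designed precisely so that gambling on its own cannot maintain energy-safety on a positive-measure set of runs, which is exactly what rules out the no-bailout escape and pins every BSCC to $\mu_2\le 0$. Establishing this last point carefully, together with the $\Theta(\delta)$ recurrence estimate, is where I expect the real work to lie.

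Finally, item~3 is a corollary of items~1 and~2. Item~1 guarantees that an almost surely winning strategy exists, while item~2 shows that no strategy with $\le m$ memory modes wins even with positive probability; hence every almost surely winning strategy requires more than $m$ memory modes. Since $\delta=1/(6m)$ is stored in binary, the transition probabilities $p,q$ use $|\probp|=\Theta(\log(1/\delta))=\Theta(\log m)$ bits, so $m=2^{\Theta(|\probp|)}$ and the required number of memory modes is exponential in $|\probp|$ (and in $|\mdp_\delta|$), already for $|\states|=5$, $d=2$, and $R=1$, which is the lower bound asserted in \Cref{thm:main}(\Cref{res:no_fixed_mem_suffices}).
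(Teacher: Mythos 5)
Your treatment of items~1 and~3 matches the paper: the paper's $\sigma'$ is exactly your reflecting threshold policy (bail out at energy $0$, gamble otherwise), with Gambler's ruin giving finitely many bailouts outside a null set, and item~3 is read off from items~1 and~2 together with $|\probp|=\Theta(\log(1/\delta))$. The genuine gap is in item~2, and it sits precisely where you say "the real work" lies --- that work is not done, and the one concrete claim you offer to close it is false. You assert that "gambling on its own cannot maintain energy-safety on a positive-measure set of runs". It can: the \emph{memoryless} strategy that always plays $\s\to\s_l$ induces a random walk with strictly positive drift $\delta$ in the energy dimension, so by the gambler's-ruin formula it satisfies $\en_1(k)$ with probability $1-\bigl(\tfrac{1-\delta}{1+\delta}\bigr)^{k+1}>0$, while satisfying $\vec{\MP_2}\lrc{>0}$ almost surely. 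Consequently your claimed $\Theta(\delta)$ recurrence of bailout configurations cannot hold uniformly over $\le m$-mode strategies (for always-gamble the bailout frequency is $0$), and the inference from "bailout frequency too high" to "$\mu_2\le 0$ in every BSCC" is never established. This example also shows that your worry about the no-bailout escape is pointing at a real subtlety: what the lower bound in item~3 actually needs, and what a careful argument delivers, is only that no $\le m$-mode strategy wins \emph{almost surely} (probability $1$), not that it attains probability $0$.

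The paper closes item~2 by a different and much shorter combinatorial argument on the induced chain, which has $\le 5m$ states: it asserts that in every reachable BSCC every cycle must contain $\s_r$ (on the grounds that otherwise the energy part cannot be met almost surely), whence $\s_r$ is visited at least once in every window of $5m$ steps and the dimension-2 mean payoff in every BSCC is at most $5m\delta-1=-1/6<0$; no stationary-distribution or birth--death estimate is invoked, and $\delta=1/(6m)$ enters only through this single inequality. If you want to salvage your frequency route, the dichotomy you must actually prove is: either some reachable BSCC contains a cycle of strictly \emph{negative energy effect} (then forcing that cycle consecutively many times violates the energy constraint with positive probability, so the strategy is not almost surely winning), or all cycles have nonnegative energy effect, in which case the $\le m$-mode memory must be shown to force the stationary frequency of $\s_r$ to exceed $\delta$ times the gamble frequency, killing $\MP_2$. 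Neither horn appears in your write-up, and the second --- which is where the calibration $\delta=1/(6m)$ does its work --- is the substantive one.
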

\begin{proof}
Towards item 1, consider a strategy $\sigma'$ that plays as follows. It keeps
a counter that records the current energy, which is initially $0$.
Whenever the current energy is $0$, it plays $s \to s_r$, otherwise
it plays $s \to s_l$. Thus $\sigma'$ satisfies $\en_1(0)$ surely from $s$.
Since $\delta>0$ it follows from the classic Gambler's ruin problem
(with strictly positive expected gain, here in the first reward dimension) 
that $\sigma'$
plays $s \to s_r$ only finitely often, except in a nullset of the runs.
Therefore, the expected mean payoff (in the second dimension) under $\sigma'$ is
$(1+\delta)/2 - (1-\delta)/2 = \delta >0$.
Hence $\probm^{\mdp_\delta}_{\sigma',\s}(\vec{\MP_2}\lrc{> 0})=1$. Since the
energy objective is satisfied surely, we obtain
$\probm^{\mdp_\delta}_{\sigma',\s}(\en_1(0)\,\cap\,\vec{\MP_2}\lrc{> 0})=1$.

Towards item 2, let $\delta \eqdef 1/(6m) >0$ and
let $\sigma$ be a finite-memory strategy with $\le m$ memory modes.
Consider the finite-state Markov chain $\cal C$ that is induced by
playing $\sigma$ from $s$ in $\mdp_\delta$.
This Markov chain has $\le 5m$ states, since $\mdp$
has $5$ states and $\sigma$ has $\le m$ memory modes.
Let $B$ be any BSCC of $\cal C$ that is reachable from $s$ and the initial
memory mode of $\sigma$. In particular, $|B| \le 5m$.
In $B$ there must not exist any loop that does not contain $s_r$, because
otherwise the energy objective cannot be satisfied almost surely.
Thus every path in $B$ of length $\ge 5m$ must contain $s_r$ (and hence a reward
$(+1,-1)$) at least once. Therefore, the expected mean payoff in $B$ (in the
second reward dimension) is $\le 5m\delta - 1 = -1/6 < 0$.
Since this holds in every reachable BSCC, we obtain
$\probm^{\mdp_\delta}_{\sigma,\s}(\vec{\MP_2}\lrc{> 0})=0$ and thus
$\probm^{\mdp_\delta}_{\sigma,\s}(\en_1(k)\,\cap\,\vec{\MP_2}\lrc{> 0})=0$.

Towards item 3, the size of $\mdp_\delta$ follows from \Cref{def:lowerbound}.
By items 1 and 2,
the required number of memory modes
$m$ for an almost-surely winning strategy 
satisfies $m > 1/(6\delta)$.
Since $|P| = \Theta(\log(1/\delta))$
and $|\mdp_\delta| = \Theta(|P|)$, we obtain
$m = \Omega(\exp(|P|))$ and $m = \Omega(\exp(|\mdp_\delta|))$.
\end{proof}

The exponential lower bound on the required memory
does not require probabilities encoded in binary like in
\Cref{thm:lowerbound}.
One can construct an equivalent example with polynomially
many states where all transition probabilities are $1/2$.
This is because one can encode exponentially small probabilities $2^{-k}$
with a chain of $k$ extra states and transition probabilities $1/2$.

\section{Computational Complexity}\label{sec:conclusion}

We have shown that the existence of an almost surely winning strategy for the
Energy-MeanPayoff objective for a given state and initial energy level in an MDP
implies the existence of a deterministic such strategy with exponentially many
memory modes (unlike for Energy-Parity which 
requires infinite memory in general \cite{MSTW2017}).

A related problem is the decidability of the question whether a given state
in an MDP and a given initial energy level admit an almost surely winning
strategy for Energy-MeanPayoff.
This problem is decidable in \emph{pseudo-polynomial} time, using an algorithm very
similar to the one for Energy-Parity presented in \cite{MSTW2017}.
I.e., the time is polynomial, provided that the bound $R$ on the rewards
is given in unary. Transition probabilities in the MDP can still be represented
in binary.
The crucial point is that it suffices to witness the mere \emph{existence} of
an almost surely winning strategy, regardless of its memory.
Basically, it suffices that the algorithm proves that the infinite-memory
strategy $\optzstrat_{\mathtt{alt,Z_b,Z_g}}$ wins almost surely
(plus a small extra argument about a corner case where the energy fluctuates only
in a bounded region). The algorithm
does not need to compute the bound $b$ or to explicitly construct
the finite-memory strategy $\optzstrat_{\mathtt{alt,Z_b,Z_g,b}}$.

\begin{proposition}\label{prop:alg}
Let $\mdp = \mdptuple$ be an MDP with $d$-dimensional rewards on the edges
$\vec{r}: \transition \to [-R,R]^d$.
For any state $\s$ and $k \in \N$, the existence of an almost surely winning
strategy from $\s$ for the multidimensional Energy-MeanPayoff objective
$\en_1(k)\,\cap\,\vec{\MP_{[2,d]}\lrc{> 0}}$ is decidable in pseudo-polynomial
time (i.e., polynomial for $R$ in unary).
\end{proposition}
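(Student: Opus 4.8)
The plan is to reduce the existence question to checking almost-sure winnability of the simpler auxiliary objectives $\Gain$ and $\Bailout$, plus a special check for the bounded-fluctuation corner case, all of which can be decided in pseudo-polynomial time. The key insight from the upper bound proof is that we never need to compute the exponential bound $b$ or construct $\optzstrat_{\mathtt{alt,Z_b,Z_g,b}}$ explicitly; we only need to certify the \emph{existence} of an almost-surely winning strategy, and by \Cref{lem:reverse_gain_bailout_existence} the infinite-memory strategy $\optzstrat_{\mathtt{alt,Z_b,Z_g}}$ already witnesses this whenever the auxiliary objectives are winnable.

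\textbf{The algorithm.} First I would restrict attention to the maximal sub-MDP from which Energy-MeanPayoff is almost-surely winnable for \emph{some} initial energy level, mirroring the W.l.o.g.\ assumption of \Cref{sec:inf_str ==> fin_str}. This is done by iteratively removing states that have no almost-surely winning strategy for any energy level. For a single state $\s$, almost-sure winnability of $\en_1(k) \cap \vec{\MP_{[2,d]}\lrc{> 0}}$ for \emph{some} $k$ is equivalent, by \Cref{lem:gain_bailout_existence} and \Cref{lem:reverse_gain_bailout_existence}, to the conjunction that $\s \in \AS(\Bailout(k'))$ for some $k'$ (i.e., one can recover energy while keeping the first-dimension mean payoff positive) and $\s \in \AS(\Gain)$ (strictly positive mean payoff almost surely in all dimensions). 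Deciding $\AS(\Gain) = \AS(\vec{\MP_{[1,d]}\lrc{> 0}})$ is a multidimensional almost-sure positive mean-payoff query, solvable in polynomial time via the results underlying \Cref{lem:mr_gain_strategy} \cite{brazdil2014markov}. Deciding $\AS(\Bailout(k))$ for some $k$ is a single-dimensional energy–mean-payoff query, and by \Cref{lem:md_bailout_strategy} the minimal safe energy $i^{\Bailout}_\s \le 3 \cdot \size{\mdp} \cdot R$ is pseudo-polynomially bounded, so this too is decidable in pseudo-polynomial time (the energy threshold is small when $R$ is given in unary). Running the fixpoint to stabilize the winning sub-MDP multiplies the cost by at most $\size{\states}$ iterations.

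\textbf{Handling the energy level and the corner case.} Once the winning sub-MDP is identified, the minimal safe energy levels $i_\s$ are all bounded by $3 \cdot \size{\mdp} \cdot R$ (inherited from the Bailout bound), so the query ``does state $\s$ with initial energy $k$ admit an almost-surely winning strategy?'' reduces to checking $\s \in \AS(\obj\lrc{k})$, which holds iff $\s$ lies in the winning sub-MDP and $k \ge i_\s$. The one subtlety is the corner case alluded to in \Cref{lem:fluctuate}, where $f_\s = i_\s$ and the energy fluctuates only in a bounded region: here the strategy wins without ever needing to pump the energy high, and this is captured directly because $\AS(\obj\lrc{i_\s})$ is precisely what the fixpoint computes. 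Since all energy thresholds involved are at most $3 \cdot \size{\mdp} \cdot R$, encoding them in the state space (if needed for the mean-payoff subroutine) blows up the MDP only pseudo-polynomially.

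\textbf{The main obstacle} will be verifying that the mean-payoff subroutines remain pseudo-polynomial when the energy level up to the $O(\size{\mdp} \cdot R)$ threshold is encoded into states: this encoding inflates the state count by a factor polynomial in $R$ (hence polynomial under the unary convention) but must be done carefully so that the multidimensional positive-mean-payoff check of \Cref{lem:mr_gain_strategy} still runs in polynomial time on the enlarged instance. The crucial point, as emphasized in \Cref{sec:conclusion}, is that none of these steps requires the exponential bound $b$ or the explicit finite-memory strategy; the existence of $\optzstrat_{\mathtt{alt,Z_b,Z_g}}$ (an \emph{infinite}-memory object) already certifies almost-sure winnability, and \Cref{res:inf_str ==> fin_str} guarantees a finite-memory realization exists without the algorithm ever needing to exhibit it.
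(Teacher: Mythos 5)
Your overall plan --- certify existence via the $\Gain$/$\Bailout$ decomposition without ever computing $b$ --- matches the paper's intent, but there is a genuine gap in how you invoke the decomposition. The equivalence you assert, namely that $\s \in \AS(\obj(k))$ for some $k$ iff $\s \in \AS(\Bailout(k'))$ for some $k'$ and $\s \in \AS(\Gain)$, is justified by \Cref{lem:gain_bailout_existence} and \Cref{lem:reverse_gain_bailout_existence} only in the derived MDP $\mdp^\ast$, not in $\mdp$ itself. The paper is explicit about this: in $\mdp$, winning Energy-MeanPayoff almost surely does \emph{not} imply that the energy can be pumped arbitrarily high, and the single-state Markov chain with one loop of reward $0$ in the first dimension and $+1$ in the others is a counterexample --- Energy-MeanPayoff is won trivially there, yet $\Bailout(k) = \en_1(k) \cap \MP_1(>0)$ is unwinnable because $\MP_1 = 0$. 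Your fixpoint, which removes states failing the $\Gain$-and-$\Bailout$ conjunction, would therefore wrongly reject such states. Moreover, $\mdp^\ast$ is not available algorithmically: it is built from the values $f_\s$, i.e.\ from the very winnability information the algorithm is trying to compute.

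Your handling of the corner case is where this surfaces. You say it ``is captured directly because $\AS(\obj(i_\s))$ is precisely what the fixpoint computes,'' but that is circular: the fixpoint computes membership via the $\Gain$/$\Bailout$ test, which is exactly what fails on the corner-case states. The paper's proof resolves this by an explicit preprocessing step: first decide the bounded-fluctuation corner case directly, by encoding the energy (restricted to a pseudo-polynomial range in $|S|\cdot R$) into the state space and checking almost-sure $\vec{\MP_{[2,d]}(>0)}$ there in polynomial time; then fold the resulting winning situations back into $\mdp$ as immediate wins, obtaining a derived MDP of pseudo-polynomial size on which the $\Gain$/$\Bailout$ characterization (i.e.\ the strategy $\optzstrat_{\mathtt{alt,Z_b,Z_g}}$) is actually complete. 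With that preprocessing inserted before your fixpoint, the rest of your complexity accounting (polynomial LP for $\Gain$, pseudo-polynomial $\Bailout$ via the $3\cdot\size{\mdp}\cdot R$ bound, pseudo-polynomial blow-up from energy encoding) is in line with the paper's argument.
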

\begin{proof}
The proof is similar to the one for Energy-Parity presented in \cite{MSTW2017}.
We outline the differences below.
First, in the corner case where it is impossible to pump the energy up arbitrarily high almost surely
from some state $q$, the only possible way to win Energy-MeanPayoff
(resp.\ Energy-Parity) almost surely (if at all) is by using a non-nullset of
runs where the energy only ever fluctuates in a bounded region.
In that case, the size of the energy fluctuations in these runs can safely be restricted
to a region that is polynomial in $|S|\cdot R$, and thus pseudo-polynomial in
$|\mdp|$ \cite{MSTW2017}.
It thus suffices to win multi-dimensional $\MP_{[2,d]}\lrc{> 0}$ almost surely in a
derived MDP $\mdp'$ where the bounded energy is encoded into the states.
Deciding this requires time polynomial in $|\mdp'|$ \cite{CD2011,Gimbert2011ComputingOS} and thus
pseudo-polynomial in $|\mdp|$.
The winning situations of the corner case can then be encoded into $\mdp$,
yielding a derived MDP $\mdp'$ of pseudo-polynomial size, where
Energy-MeanPayoff can be won almost surely if and only if it can be won almost
surely by a combination of $\Gain$ and $\Bailout$ strategies, i.e., by strategy
$\optzstrat_{\mathtt{alt,Z_b,Z_g}}$.
Therefore it suffices to compute the states (and minimal initial energy levels $k$)
where $\Gain$ and $\Bailout(k)$ can be won almost surely.
The objective $\Bailout(k) \eqdef \en_1(k) \, \cap\, \MP_1\lrc{> 0}$ is
exactly the same as the $\Bailout$ objective analyzed in \cite{MSTW2017},
and winning it almost surely is decidable in pseudo-polynomial time.
Our objective $\Gain \eqdef \MP_{[1,d]}\lrc{> 0}$ differs from the
$\Gain$ objective considered in \cite{MSTW2017} (which was
$\MP_1\lrc{> 0}\,\cap\, \parity$), but winning it almost surely is
still decidable in polynomial time \cite{CD2011,Gimbert2011ComputingOS}
by solving a linear program.
So overall the algorithm runs in pseudo-polynomial time.
\end{proof}

\newpage
\bibliography{conferences,journals,ref1}

\newpage
\appendix
\section{Bounds in Markov Chains}\label{app:mc}

This section shows some generic results for Markov chains with transition
rewards. We show bounds on the expected arrival time (aka first passage time)
of situations when the total reward reaches particular levels,
under the condition that the total reward is truncated to remain inside
some interval $[a,b]$. E.g., the total reward might hit the upper limit $b$
many times (and be truncated there)
before arriving at the lower level $a$ for the first time.
These bounds are later applied to Markov chains obtained by fixing certain
finite-memory strategies in MDPs, and they are used in the proof of
\Cref{lem:lower_bounds_expected_sums_gain_bailout}.

Let $\mc$ be a strongly connected Markov chain with state space $\states$, one-step transition probability matrix $\probp$ and stationary distribution $\pi > \vec{0}$.
Consider a reward function on the edges
$r: \transition \to [-R,R]$ (alternately it can be seen as a vector in $[-R,R]^{\transition}$)
such that the average reward gained in the limit is positive, i.e.,
$\mu \eqdef \sum_{e = (\s,\s') \in \transition} f_e \cdot r(e) >0$ where $f_e \eqdef \pi(\s) \cdot \probp(\s)(\s')$ denotes the long term relative 
frequency of the edge $e$. We begin by defining some functions on $\states^{\om}$. Let $\rho = q_0q_1 \ldots $ be a generic infinite word. 

Recall that $X_n$ denotes the state of a Markov chain at time $n$ and $Y_n$ is the sum of the rewards until time $n$.

Let $x_{\min}$ denote the minimum occurring probability in $\mc$. The size of the Markov chain with the reward structure $\size{\mc}$ is defined as the 
total number of bits required to represent each state, edge, probability and reward in binary. We assume that all the probabilities are rational and 
rewards integers. Let 
\begin{equation}\label{def:h}
    h \eqdef \frac{2 \size{\states} R}{x_{\min}^{\size{\states}}}
\end{equation}

\begin{lemma}\label{lem:potential_vector}~\cite[Theorem 3.4]{BKK2014}
    Let $u_{\s} \eqdef \sum_{\s' \in \successors{\s}} \probp(\s)\lrc{\s'} \cdot r(\tuple{\s,\s'})$ be the expected reward gained after taking an edge
    from state $\s$. There exists $\nu \in [0,h]^{\states}$  such that
    \[
        u + \probp \nu = \nu + \vec{1}\mu
    \]
    where $\vec{1}$ on the RHS is a vector of all $1$'s.
\end{lemma}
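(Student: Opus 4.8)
The plan is to recognise the displayed identity $u + \probp\nu = \nu + \vec{1}\mu$ as the \emph{Poisson equation} (the average-reward Bellman equation) for the chain $\mc$ with per-step expected reward $u$ and gain $\mu$, and to produce an explicit solution whose oscillation is controlled by $h$. Rewriting it as $(I - \probp)\nu = u - \vec{1}\mu$, the first thing I would note is that the right-hand side is \emph{balanced} with respect to the stationary distribution: since $\mu = \sum_{e=(\s,\s')} f_e\, r(e) = \sum_{\s} \pi(\s)\, u_\s = \pi^\top u$ and $\pi^\top\vec{1} = 1$, we get $\pi^\top(u - \vec{1}\mu) = 0$. Because $\mc$ is strongly connected, the left kernel of $I - \probp$ is spanned by $\pi$, so this balance is exactly the Fredholm solvability condition and a solution exists.

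Rather than argue existence abstractly, I would exhibit a solution via hitting times, which also delivers the bound. Fix an arbitrary reference state $t \in \states$ and set
\[
  \nu_\s \eqdef \mathbb{E}_{\s}\!\left[\textstyle\sum_{n=0}^{\tau_t - 1}\big(u_{X_n} - \mu\big)\right],
\]
where $\tau_t \eqdef \min\{n \ge 0 : X_n = t\}$, so that $\nu_t = 0$. A one-step analysis shows $\nu$ satisfies the Poisson equation at every $\s \neq t$; that it also holds at $t$ follows from the balance condition, because the residual $(I-\probp)\nu - (u - \vec{1}\mu)$ is supported on $\{t\}$ yet is orthogonal to $\pi$, and $\pi(t) > 0$ forces its $t$-component to vanish. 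Since solutions are unique up to an additive constant (as $(I-\probp)\vec{1} = \vec{0}$), the difference $\nu_\s - \nu_t$ is reference-independent and for every solution $\nu$ and all $\s,t$,
\[
  \nu_\s - \nu_t = \mathbb{E}_{\s}\!\left[\textstyle\sum_{n=0}^{\tau_t - 1}\big(u_{X_n} - \mu\big)\right].
\]

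To finish I would bound the oscillation $\max_\s \nu_\s - \min_\s \nu_\s$ directly from this representation. As $|u_{X_n}| \le R$ and $0 < \mu \le R$, each summand is at most $2R$ in absolute value, so $|\nu_\s - \nu_t| \le 2R\cdot\mathbb{E}_{\s}[\tau_t]$. The expected hitting time is bounded using $x_{\min}$: in a strongly connected chain with $\size{\states}$ states, from \emph{every} state there is a simple path to $t$ of length at most $\size{\states}$, so the probability of hitting $t$ within $\size{\states}$ steps is at least $x_{\min}^{\size{\states}}$ from any state; a standard block/geometric estimate then yields $\mathbb{E}_{\s}[\tau_t] \le \size{\states}/x_{\min}^{\size{\states}}$. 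Hence the oscillation of $\nu$ is at most $2R\,\size{\states}/x_{\min}^{\size{\states}} = h$. Replacing $\nu$ by $\nu - (\min_\s \nu_\s)\vec{1}$ (still a solution) places every component in $[0,h]$, which is the claim.

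The main obstacle is not existence — that is immediate from the stationary balance — but pinning down the \emph{constant}. The two delicate points are (i) verifying that the hitting-time representation genuinely solves the Poisson equation at the reference state, handled by the $\pi$-orthogonality residual argument, and (ii) bounding $\mathbb{E}_{\s}[\tau_t]$ by $\size{\states}/x_{\min}^{\size{\states}}$, where strong connectivity and the minimum transition probability enter; this must be arranged so that the oscillation bound lands on $h = 2\size{\states}R/x_{\min}^{\size{\states}}$ exactly rather than up to a spurious factor of two.
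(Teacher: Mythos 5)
Your proof is correct. The paper does not prove this lemma at all---it imports it verbatim from \cite[Theorem 3.4]{BKK2014}---and your argument is exactly the standard derivation underlying that cited result: solvability of the Poisson equation $(I-\probp)\nu = u - \vec{1}\mu$ from the balance condition $\pi^\top u = \mu$, the explicit hitting-time representation $\nu_\s - \nu_t = \mathbb{E}_\s\bigl[\sum_{n=0}^{\tau_t-1}(u_{X_n}-\mu)\bigr]$, and the bound $\mathbb{E}_\s[\tau_t] \le \size{\states}/x_{\min}^{\size{\states}}$ from strong connectivity. You also correctly handle the one point where a careless version loses a factor of two: by applying the representation to the pair of states realizing the maximum and minimum of $\nu$ (rather than bounding each $|\nu_\s|$ against a fixed reference), the oscillation is at most $2R\,\size{\states}/x_{\min}^{\size{\states}} = h$, and shifting by $\min_\s \nu_\s$ lands $\nu$ in $[0,h]^{\states}$ as claimed.
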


\begin{fact}\label{fact:martingale}
    Let $\nu$ be the vector from~\Cref{lem:potential_vector} and $\s$ be the
    start state, i.e., $X_0=\s$. Then the sequence of random variables given by
    \[
        M_n^{\s} \eqdef Y_n + \nu(X_n) - n\mu
    \]
    is a martingale for all $\s$.
\end{fact}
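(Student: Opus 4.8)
The plan is to verify the defining property of a martingale directly, namely that $\mathbb{E}[M_{n+1}^{\s} \mid \F_n] = M_n^{\s}$, where $\F_n$ is the natural filtration generated by the first $n$ steps $X_0,\ldots,X_n$ of the chain. Integrability should be noted first, since it is immediate: because $|Y_n| \le nR$, $\nu(X_n) \in [0,h]$, and $n\mu$ is a constant, we have $|M_n^{\s}| \le nR + h + n\mu < \infty$ for every $n$, so each $M_n^{\s}$ is integrable.

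For the conditional expectation, the first step is to split off the contribution of the $(n+1)$-th transition. Writing the reward incurred on the step from $X_n$ to $X_{n+1}$ as $Y_{n+1} - Y_n = r(\tuple{X_n, X_{n+1}})$, we obtain
\[
  M_{n+1}^{\s} = Y_n + r(\tuple{X_n,X_{n+1}}) + \nu(X_{n+1}) - (n+1)\mu .
\]
Since $Y_n$ and $X_n$ are $\F_n$-measurable, applying $\mathbb{E}[\,\cdot \mid \F_n]$ leaves $Y_n - (n+1)\mu$ unchanged, and by the Markov property the remaining terms depend only on the current state $X_n$. Conditioned on $X_n$, the successor $X_{n+1}$ is distributed according to $\probp(X_n)$, so that
\[
  \mathbb{E}[r(\tuple{X_n,X_{n+1}}) \mid \F_n]
  = \sum_{\s' \in \successors{X_n}} \probp(X_n)(\s')\, r(\tuple{X_n,\s'}) = u_{X_n},
\]
and, analogously, $\mathbb{E}[\nu(X_{n+1}) \mid \F_n] = \sum_{\s'} \probp(X_n)(\s')\,\nu(\s') = (\probp\nu)_{X_n}$, recovering exactly the quantities appearing in \Cref{lem:potential_vector}.

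The crux is then a single application of the potential equation. Collecting terms gives $\mathbb{E}[M_{n+1}^{\s} \mid \F_n] = Y_n + u_{X_n} + (\probp\nu)_{X_n} - (n+1)\mu$, and the identity $u + \probp\nu = \nu + \vec{1}\mu$ read off at the coordinate $X_n$ yields $u_{X_n} + (\probp\nu)_{X_n} = \nu(X_n) + \mu$. Substituting cancels exactly one factor of $\mu$, leaving $Y_n + \nu(X_n) - n\mu = M_n^{\s}$, as required. I do not expect a genuine obstacle here: the only points requiring care are correct bookkeeping of which transition reward is assigned to step $n+1$ and the observation that the potential equation of \Cref{lem:potential_vector} is precisely the identity engineered so that the one-step drift $u_{X_n} + (\probp\nu)_{X_n} - \nu(X_n)$ equals the constant $\mu$, which is exactly what the $-n\mu$ compensator subtracts off.
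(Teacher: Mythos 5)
Your proof is correct, and it is exactly the standard one-step verification that the paper implicitly relies on: the statement is given as a \emph{Fact} with no written proof, precisely because the potential equation of \Cref{lem:potential_vector} is engineered so that $\expectation[][]{M_{n+1}^{\s} \mid \F_n} = Y_n + u_{X_n} + (\probp\nu)_{X_n} - (n+1)\mu = M_n^{\s}$. Your bookkeeping of the step-$(n+1)$ reward, the integrability remark, and the use of the Markov property are all accurate, so there is nothing to add.
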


Since the average mean payoff $\mu >0$ is strictly positive,
$\liminf_{n \tendsto \infty} Y_n = \infty$ almost surely.
In the above setting, suppose that we bound the total reward gained to lie in
some interval $[a,b]$ for some integers $a < 0 < b$, i.e., we
define a new sequence of functions inductively as follows.
\begin{align*}
    Y^{[a,b]}_0 &\eqdef Y_0 = 0 \\
    Y^{[a,b]}_n &\eqdef \max(a,\min(b,Y^{[a,b]}_{n-1}+ r(\tuple{X_{n-1},X_n}))) \quad \mbox{for $n \geq 1$}
\end{align*}

Considering the sequence $Y^{[a,b]}_n$,
let $T^{[a,b]}_a$, $T^{[a,b]}_b$  be the functions which denote the first hitting time of the 
left boundary $a$ and right boundary $b$ respectively.
Clearly, $Y^{[a,b]}_n \le Y_n$ before $T^{[a,b]}_a$, because the only possible
difference is that $Y^{[a,b]}_n$ loses something when hitting the right border
$b$.
One of the advantages of $Y^{[a,b]}_n$ is that it can be described using only
a finite number of bits for any $n$, because of its boundedness,
whereas this is not the case for $Y_n$.
This is useful in situations where such an under-approximation suffices instead
of remembering the exact reward gained.
However, the bounding of the reward also changes the behaviour of $Y_n$.
For example, the probability of $Y_n$ falling below $a$ infinitely often is
zero, i.e., 
$\Prob{\always \eventually (Y_n \leq a)} = 0$ for a positive mean payoff $\mu >0$.
The same does not generally hold for $Y^{[a,b]}_n$, which might fall below $a$
infinitely often almost surely.

We want to show that (on average) $Y^{[a,b]}_n$ hits the lower bound $a$ much
less frequently than the upper bound $b$.
That is, we derive a lower bound on the expected time it takes to hit the
lower bound $a$, and an upper bound on the expected time it takes to hit the 
upper bound $b$.

\begin{remark}
Although we denote the functions by $T^{[a,b]}_a$, $T^{[a,b]}_b$ and $Y^{[a,b]}_n$ etc., note that there is an implicit assumption that the initial sum is $0$ which lies between $a$ and $b$. Therefore, the hitting times are actually parametrized by three numbers $a$, $b$ and $x$ such that $a < x < b$ given by $T^{[a,x,b]}_a$, $T^{[a,x,b]}_b$, $Y^{[a,x,b]}_n$. But we continue to represent with just the boundary points as all the functions are invariant under translation \ie, $T^{[a,x,b]}_a = T^{[a',b']}_{a'}$- where $a' = a-x$ and $b' = b-x$. Moreover, for $a < x_1 < x_2 < b$
$$ T^{[a,x_1,b]}_a \leq T^{[a,x_2,b]}_a $$
Or in other form $T^{[a-x_1,b-x_1]}_{a-x_1} \leq T^{[a-x_2,b-x_2]}_{a-x_2}$ for all $a < x_1 < x_2 < b$.
\end{remark}

\subsection{Upper bound on the hitting time in the direction of drift}

Given an initial state $\s$, let the random variable $T_b$ denote the first
time $Y_n$ is $\geq b >0$.
\begin{equation}\label{eq:hitting_time_b}
    T_b \eqdef \inf \setcomp{n}{Y_n \geq b}
\end{equation}
Since the overall drift is in the positive direction, i.e.,
$\mu > 0$, it is immediate that the expectation of $T_b$ is finite for every $b$. Also, the event ${T_b = n}$ can be determined by looking at the first $n$ steps of any run, so $T_b$ is a stopping time w.r.t the natural filtration of the Markov chain.

\begin{fact}
    For all states $\s$ and $b > 0$, $T_b$ is a stopping time and $\expectation[][\s]{T_b} < \infty$.
\end{fact}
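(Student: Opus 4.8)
The plan is to treat the two assertions separately. That $T_b$ is a stopping time is immediate and is essentially the observation already made in the surrounding text: the event $\set{T_b = n}$ is determined by the states $X_0,\dots,X_n$ (since $Y_0,\dots,Y_n$ are functions of these), hence it lies in the $n$-th member of the natural filtration. So the real content is the finiteness of $\expectation[][\s]{T_b}$. For this I would exploit the martingale $M_n^\s = Y_n + \nu(X_n) - n\mu$ from \Cref{fact:martingale}, where $\nu \in [0,h]^{\states}$ is the bounded potential vector of \Cref{lem:potential_vector}, and crucially the hypothesis $\mu > 0$.

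The key step is to apply the optional stopping theorem to the \emph{bounded} stopping time $T_b \wedge n$ rather than to $T_b$ directly, which sidesteps having to verify integrability hypotheses for an unbounded stopping time. Since $T_b \wedge n \le n$ is bounded, optional stopping gives $\expectation[][\s]{M_{T_b \wedge n}^\s} = M_0^\s = \nu(\s)$, using $Y_0 = 0$. Rearranging the definition of the martingale then yields
\[
\mu \cdot \expectation[][\s]{T_b \wedge n} = \expectation[][\s]{Y_{T_b \wedge n}} + \expectation[][\s]{\nu\lrc{X_{T_b \wedge n}}} - \nu(\s).
\]
Here the strict positivity $\mu > 0$ is what lets me divide through and convert an upper bound on the right-hand side into an upper bound on $\expectation[][\s]{T_b \wedge n}$.

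It then remains to bound the right-hand side uniformly in $n$. First I would note $T_b \ge 1$ (as $b > 0 = Y_0$), so $Y_{T_b - 1} < b$ by minimality of $T_b$; since each one-step reward lies in $[-R,R]$ we get $Y_{T_b} < b + R$, and combined with $Y_n < b$ on $\set{T_b > n}$ this gives the deterministic bound $Y_{T_b \wedge n} < b + R$ in all cases. Next, $0 \le \nu(\cdot) \le h$ because $\nu \in [0,h]^{\states}$, and $\nu(\s) \ge 0$. Hence the right-hand side is at most $b + R + h$, so $\expectation[][\s]{T_b \wedge n} \le (b + R + h)/\mu$ for every $n$. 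Since $T_b \wedge n \uparrow T_b$, the monotone convergence theorem finally gives $\expectation[][\s]{T_b} \le (b + R + h)/\mu < \infty$, which in passing re-confirms $T_b < \infty$ almost surely.

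I expect the argument to be essentially routine once the martingale is in hand. The only point needing mild care is the deterministic upper bound $Y_{T_b \wedge n} < b + R$ (this is where the boundedness of one-step rewards enters, and where one must check both cases $T_b \le n$ and $T_b > n$), together with the discipline of using the truncated time $T_b \wedge n$ plus monotone convergence so that optional stopping is only ever invoked for a bounded stopping time.
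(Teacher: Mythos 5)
Your proposal is correct. The paper itself offers no real proof of this fact: it simply declares that finiteness of $\expectation[][\s]{T_b}$ is ``immediate'' from the positive drift $\mu>0$, and the stopping-time part is the same one-line observation you make. What you do differently is to actually supply the missing argument, and you do it with the same martingale machinery ($M_n^{\s}=Y_n+\nu(X_n)-n\mu$ from \Cref{fact:martingale}) that the paper only deploys afterwards in \Cref{lem:bound_hitting_time_b}. Your use of the truncated time $T_b\wedge n$ plus monotone convergence is the right discipline here and buys something concrete: the paper's \Cref{lem:bound_hitting_time_b} invokes the optional stopping theorem for $T_b$ itself under the hypothesis ``$\expectation[][\s]{T_b}<\infty$ and bounded step size'', i.e.\ it uses the very fact under discussion as a prerequisite, so the fact cannot in turn be derived from that lemma without circularity. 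Your argument breaks this circle, and as a by-product it re-derives the upper bound $\expectation[][\s]{T_b}\le (b+h+R)/\mu$ of \Cref{lem:bound_hitting_time_b} directly. The case analysis for the deterministic bound $Y_{T_b\wedge n}<b+R$ (namely $Y_{T_b-1}<b$ by minimality together with one-step rewards in $[-R,R]$ when $T_b\le n$, and $Y_n<b$ on $\{T_b>n\}$) is exactly the point that needs care, and you handle it correctly.
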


Now we show some bounds on this expected stopping time.

\begin{lemma}\label{lem:bound_hitting_time_b}
    For all states $\s$ and $b > 0$
    \[
        \frac{b-h}{\mu} \leq \expectation[][\s]{T_b} \leq \frac{b+h+R}{\mu}
      \]
      where $h$ is the constant from \Cref{def:h}.
\end{lemma}
\begin{proof}
    Since $\expectation[][\s]{T_b} < \infty$ and the martingale $M_n^{\s}$ has bounded step size, 
    we can apply optional stopping theorem to get
    \[
        \expectation[][\s]{M_{T_b}^{\s}} = \expectation[][\s]{M_0^{\s}} = \nu(\s)
      \]
      Since $0 \le \nu(\s) \le h$, it follows that
    \begin{equation*}
        0 \leq \expectation[][\s]{Y_{T_b} + \nu(X_{T_b}) - T_b \mu} \leq h.
    \end{equation*}

    Simplifying by using linearity of expectation and the fact that $b \leq Y_{T_b} < b+R$ and $0 \leq \nu(X_{T_b}) \leq h$, we get
    \begin{align*}
        0 &\leq b+R+h-\expectation[][\s]{T_b}\mu \\
        b+0-\expectation[][\s]{T_b}\mu &\leq h
    \end{align*}
    Rearranging terms and noting that $\mu > 0$ yields the required bounds.
\end{proof}

\Cref{lem:bound_hitting_time_b} shows that, starting from any state,
the expected time to hit any upper total reward boundary $b >0$ is
asymptotically linear in $b$. To get the upper bound for $T^{[a,b]}_b$, observe that 
$\lrc{Y_n \leq Y^{[a,b]}_n} \given \lrc{T_b \geq n}$ implying $T^{[a,b]}_b \leq T_b$. Hence we obtain the following as a corollary.

\begin{corollary}\label{cor:upper_bound_expectation_hitting_time_right_bdry_bscc}
    $$\expectation[][\s]{T^{[a,b]}_b} \leq \frac{b+h+R}{\mu}$$
\end{corollary}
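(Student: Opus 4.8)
The plan is to dominate the bounded hitting time $T^{[a,b]}_b$ by the unbounded one $T_b$ pathwise, and then to invoke the upper bound on $\expectation[][\s]{T_b}$ already established in \Cref{lem:bound_hitting_time_b}. Concretely, I would first prove that $T^{[a,b]}_b \le T_b$ holds on every run; the corollary then follows immediately by monotonicity of expectation, since $\expectation[][\s]{T^{[a,b]}_b} \le \expectation[][\s]{T_b} \le \frac{b+h+R}{\mu}$, with no further estimation required. Note that the resulting bound is uniform in the lower level $a$ and in the start state $\s$.

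The core of the argument is a pathwise comparison of the two sequences up to the first time either reaches the upper boundary. First I would show, by induction on $n$, that $Y^{[a,b]}_n \ge Y_n$ for all $n \le \min(T_b, T^{[a,b]}_b)$. Throughout this window neither process has been truncated from above, since the clamp at $b$ has not yet activated for either; hence the only place the two recurrences can differ is the lower clamp at $a$, and that truncation can only raise $Y^{[a,b]}_n$ relative to the free walk $Y_n$. The base case is $Y^{[a,b]}_0 = Y_0 = 0$, and the inductive step compares $\max(a,\min(b,Y^{[a,b]}_{n-1}+ r(\tuple{X_{n-1},X_n})))$ with $Y_{n-1}+ r(\tuple{X_{n-1},X_n})$ using the induction hypothesis together with the fact that the upper clamp is inactive on this window. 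Given this domination, at the first time $T_b$ that the free walk reaches $b$ we have $Y^{[a,b]}_{T_b} \ge Y_{T_b} \ge b$, so the bounded process has reached its own upper boundary no later than $T_b$; equivalently $T^{[a,b]}_b \le T_b$ on every run (and in the complementary case the bounded process hits $b$ first, so $T^{[a,b]}_b \le T_b$ trivially).

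The \emph{main obstacle}, such as it is, is the bookkeeping of the two clamps: one must be careful that it is the lower truncation at $a$ that is exploited here, in contrast to the earlier observation that $Y^{[a,b]}_n \le Y_n$ holds before $T^{[a,b]}_a$, where instead the upper clamp dominates and the inequality reverses. Restricting the comparison to times before the upper boundary is reached is precisely what keeps the upper clamp inactive and makes the monotone domination clean, avoiding any circularity between the two truncations. Once $T^{[a,b]}_b \le T_b$ is in hand, the stated bound is an immediate consequence of \Cref{lem:bound_hitting_time_b}.
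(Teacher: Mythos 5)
Your proposal is correct and matches the paper's own (very brief) argument: the paper likewise observes that $Y^{[a,b]}_n \geq Y_n$ while $T_b \geq n$, concludes $T^{[a,b]}_b \leq T_b$, and invokes \Cref{lem:bound_hitting_time_b}. Your more detailed induction (restricted to the window before either process reaches $b$, where only the lower clamp can act) is just a careful spelling-out of the same domination step.
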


\subsection{Bounding the probability of going in the opposite direction of the drift}

We now derive an upper bound on the probability
that the total reward ever falls below some large negative number $a <0$.
This will be used later to derive the required inequalities for $T^{[a,b]}_a$.

Given $a <0$, let
the random variable $T_a$ denote the first time that the total reward is $\leq a$.
\begin{equation}\label{eq:hitting_time_a}
    T_a \eqdef \inf \setcomp{n}{Y_n \leq a}
\end{equation}

Since the average reward $\mu >0$ is positive,
it is unlikely that $Y_n$ ever hits large negative numbers. The following lemma quantifies this intuition.
Recall that $h = \frac{2 \size{\states} R}{x_{\min}^{\size{\states}}}$ and let
\begin{align}
    \eta &\eqdef \mu + h + R \label{def:eta} \\
    c &\eqdef e^{\frac{-\mu^2}{2\eta^2}} \label{def:c}
\end{align}

\begin{lemma}\label{lem:exp_decay}
  For all $a \leq -h$ and states $\s$ we have
  \[
  \Prob[\mc][\s]{T_a < \infty} \leq
  \frac{c^{\ceil*{\frac{\abs{a}}{R}}}}{1-c}.
  \]
\end{lemma}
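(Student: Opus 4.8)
The plan is to exploit the martingale $M_n^{\s} = Y_n + \nu(X_n) - n\mu$ from \Cref{fact:martingale} together with a one-sided Azuma--Hoeffding concentration inequality, and then sum a geometric series over time. First I would bound the martingale increments. Writing $M_n^\s - M_{n-1}^\s = r(\tuple{X_{n-1},X_n}) + \nu(X_n) - \nu(X_{n-1}) - \mu$, and using $r \in [-R,R]$ together with $\nu \in [0,h]^\states$ from \Cref{lem:potential_vector}, each increment lies in $[-(R+h+\mu),\,R+h-\mu]$, so $\abs{M_n^\s - M_{n-1}^\s} \leq R + h + \mu = \eta$. This is precisely where the constant $\eta$ from \eqref{def:eta} enters.

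Next I would translate the event $\{Y_n \leq a\}$ into a statement about the martingale. Since $\nu(X_n) \leq h$, the event $Y_n \leq a$ implies $M_n^\s = Y_n + \nu(X_n) - n\mu \leq a + h - n\mu$. Because $M_0^\s = \nu(\s) \geq 0$, this yields $M_n^\s - M_0^\s \leq a + h - n\mu = -(\abs{a} - h + n\mu)$. Setting $t_n \eqdef \abs{a} - h + n\mu$, the hypothesis $a \leq -h$ gives $\abs{a} - h \geq 0$ and hence $t_n \geq n\mu \geq 0$. The one-sided Azuma inequality for a martingale with increments bounded by $\eta$ then gives
\[
\Prob[\mc][\s]{Y_n \leq a} \leq \Prob[\mc][\s]{M_n^\s - M_0^\s \leq -t_n} \leq \exp\lrc{-\frac{t_n^2}{2n\eta^2}}.
\]
Using $t_n \geq n\mu$, we bound $t_n^2 \geq n^2\mu^2$, so the exponent is at most $-n\mu^2/(2\eta^2)$, giving $\Prob[\mc][\s]{Y_n \leq a} \leq c^n$ with $c = e^{-\mu^2/(2\eta^2)}$ from \eqref{def:c}.

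Finally I would assemble these per-time-step bounds via a union bound. Since each reward is at most $R$ in absolute value, the total reward cannot drop to $a$ in fewer than $\ceil*{\abs{a}/R}$ steps, so $T_a \geq \ceil*{\abs{a}/R}$ and the event $\{T_a < \infty\}$ equals $\bigcup_{n \geq \ceil*{\abs{a}/R}} \{Y_n \leq a\}$. Hence
\[
\Prob[\mc][\s]{T_a < \infty} \leq \sum_{n \geq \ceil*{\abs{a}/R}} \Prob[\mc][\s]{Y_n \leq a} \leq \sum_{n \geq \ceil*{\abs{a}/R}} c^n = \frac{c^{\ceil*{\abs{a}/R}}}{1-c},
\]
which is the claimed bound. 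The main obstacle is getting the exponent in the Azuma bound to collapse to exactly $c^n$: this hinges on the observation that $t_n \geq n\mu$, which in turn relies on the hypothesis $a \leq -h$ so that the constant offset $\abs{a}-h$ is non-negative and does not interfere. Everything else, namely the increment bound and the minimum-steps argument that fixes the starting index of the geometric series, is routine.
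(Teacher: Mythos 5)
Your proposal is correct and follows essentially the same route as the paper: both use the martingale $M_n^{\s}$ with increments bounded by $\eta$, apply Azuma--Hoeffding to get a per-step bound of $c^n$ (using $a \le -h$ to absorb the $\nu$-offset), and sum a geometric series starting at $\ceil*{\abs{a}/R}$. The only cosmetic difference is that you take a union bound over the events $\{Y_n \le a\}$ while the paper sums the probabilities of the disjoint events $\{T_a = n\}$; both yield the identical bound.
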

\begin{proof}
  Observe that the consecutive terms of the sequence $M^{\s}_n$
  differ by at most $\eta$. Consider the event $T_a = n$.
  From our assumption $a \leq -h$ we obtain that $a+h\leq 0$,
  and thus
    \begin{align*}
        M^{\s}_{n} - M^{\s}_0 &= \lrc{Y_n - Y_0} + \lrc{\nu\lrc{X_n} - \nu\lrc{X_0}} - n\mu \\
        &\leq a + h - n\mu \\
        &\leq -n\mu
    \end{align*}
    Hence, using the Azuma-Hoeffding inequality, we obtain
    \[
        \Prob{T_a=n} \leq \Prob{M^{\s}_n - M^{\s}_0 \leq -n\mu} \leq e^{\frac{-n^2\mu^2}{2n \eta^2}} = \lrc{e^{\frac{-\mu^2}{2\eta^2}}}^n    
    \]
    Since we have defined $c \eqdef e^{\frac{-\mu^2}{2\eta^2}}$,
    we see that $\Prob{T_a=n} \leq c^n$.
    Since $T_a \ge \ceil*{\frac{\abs{a}}{R}}$,
    we get that
    \[
    \Prob{T_a < \infty} = \sum_{n=\ceil*{\frac{\abs{a}}{R}}}^{\infty}
    \Prob{T_a=n} \leq \frac{c^{\ceil*{\frac{\abs{a}}{R}}}}{1-c}.
    \]
\end{proof}

The above lemma provides a bound for the case of $Y_n$ where the total reward
is unrestricted. It is exponentially more unlikely that $Y_n$ ever drops as low
as $a$ when $a \tendsto -\infty$.

We need a lower bound on the expected time to hit the left bound $a$
for the bounded random variable $Y^{[a,b]}_n$
(since we are interested in a finite-memory strategy).
To do so, we first lower bound it by another variable which is simpler to analyse. Recall that $T^{[a,b]}_a = \inf \setcomp{n}{Y^{[a,b]}_n \leq a}$. Define a new sequence of random variables $Y^{\prime[a,b]}_n$ inductively as follows.
\begin{align*}
	Y^{\prime[a,b]}_0 &= Y^{[a,b]}_0 = 0 \\
	Y^{\prime[a,b]}_n &= \left \{ \begin{aligned}
		Y^{\prime[a,b]}_{n-1} + r(\tuple{X_{n-1},X_n}) \  &\textrm{if }\  a < Y^{\prime[a,b]}_{n-1} + r(\tuple{X_{n-1},X_n}) < b & \textrm{: rule} \\
		0  \  &\textrm{if } \  Y^{\prime[a,b]}_{n-1} + r(\tuple{X_{n-1},X_n}) \geq b & \textrm{: reset} \\
		a \  &\textrm{if } \  Y^{\prime[a,b]}_{n-1} + r(\tuple{X_{n-1},X_n}) \leq a & \textrm{: hit}
	\end{aligned}
	\right.
\end{align*}

Intuitively, the behaviour of 
$Y^{\prime[a,b]}_n$ is similar to that of
$Y^{[a,b]}_n$, except when it hits/exceeds $b$.
Instead of clamping to $b$, 
$Y^{\prime[a,b]}_n$ `resets' and behaves as if it is starting
from the current state $X_n$.
Let $T^{\prime[a,b]}_a$ denote the first time that
$Y^{\prime[a,b]}_n$ hits the left bound $a$, i.e.,
$T^{\prime[a,b]}_a \eqdef \inf \setcomp{n}{Y^{\prime[a,b]}_n \leq a}$.

\begin{claim}
For all $n\ge 0$ and $b > 0 > a$ we have $Y^{[a,b]}_n \geq Y^{\prime[a,b]}_n$.
Consequently, $T^{[a,b]}_a \geq T^{\prime[a,b]}_a$.
\end{claim}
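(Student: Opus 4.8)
The plan is to establish the pointwise inequality $Y^{[a,b]}_n \geq Y^{\prime[a,b]}_n$ by induction on $n$, and then read off the statement about hitting times as an immediate consequence. The guiding intuition is that the two sequences evolve in lockstep except when a step would push the running sum to or above the upper boundary $b$: at such a step $Y^{[a,b]}_n$ is clamped to $b$ whereas $Y^{\prime[a,b]}_n$ is reset to $0$, and since $b > 0$ the clamped sequence never ends up below the reset one. Combined with the monotonicity of the clamping operation, this lets the inequality propagate to every later step.

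For the base case $n=0$ both sequences equal $0$. For the inductive step I would write $c \eqdef r(\tuple{X_{n-1},X_n})$, set $u \eqdef Y^{[a,b]}_{n-1}$ and $v \eqdef Y^{\prime[a,b]}_{n-1}$, so that the induction hypothesis reads $u \ge v$. The key observation is that $Y^{[a,b]}_n = \max(a,\min(b,u+c))$ is the clamp of $u+c$ into $[a,b]$, a function that is monotone non-decreasing in its argument; hence $Y^{[a,b]}_n \ge \max(a,\min(b,v+c))$. It then suffices to compare this last quantity with $Y^{\prime[a,b]}_n$ through the three branches of the definition of $Y^{\prime[a,b]}_n$: when $v+c \ge b$ the clamp equals $b > 0 = Y^{\prime[a,b]}_n$; when $v+c \le a$ the clamp equals $a = Y^{\prime[a,b]}_n$; and when $a < v+c < b$ the clamp equals $v+c = Y^{\prime[a,b]}_n$. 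In each case $\max(a,\min(b,v+c)) \ge Y^{\prime[a,b]}_n$, which closes the induction.

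The hitting-time consequence then follows directly. Since $Y^{[a,b]}_n \ge Y^{\prime[a,b]}_n$ for every $n$, whenever $Y^{[a,b]}_n \le a$ we also have $Y^{\prime[a,b]}_n \le a$, so $Y^{\prime[a,b]}$ reaches the left boundary no later than $Y^{[a,b]}$ does; taking infima gives $T^{\prime[a,b]}_a \le T^{[a,b]}_a$, i.e.\ $T^{[a,b]}_a \ge T^{\prime[a,b]}_a$. The only delicate point---and the sole place where $b>0$ is actually used---is the reset branch, where one must note that resetting to $0$ cannot overtake the value $b$ retained by the clamped sequence. Everything else is routine monotonicity bookkeeping, so I do not expect any genuine obstacle here.
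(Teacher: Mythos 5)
Your proof is correct and follows essentially the same route as the paper's: induction on $n$ with a case split on the three branches of the update rule for $Y^{\prime[a,b]}_n$, using $b>0$ only in the reset branch, and then reading off the hitting-time inequality from the pointwise domination. Factoring the inductive step through monotonicity of the clamp $x\mapsto\max(a,\min(b,x))$ is a minor (and tidy) reorganization, not a different argument.
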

\begin{proof}
By induction on $n$.

In the base case $n=0$ we have $Y^{[a,b]}_0 = Y^{\prime[a,b]}_0$.

For the induction step let $n >0$.

If $Y^{\prime[a,b]}_{n-1} + r(\tuple{X_{n-1},X_n}) \ge b$ then $Y^{[a,b]}_{n-1} + r(\tuple{X_{n-1},X_n}) \geq b$,
by induction hypothesis. 
Then $Y^{\prime[a,b]}_n = 0 < b = Y^{[a,b]}_n$.

Else if $Y^{\prime[a,b]}_{n-1} + r(\tuple{X_{n-1},X_n}) \leq a$, then $Y^{\prime[a,b]}_n = a \leq Y^{[a,b]}_n$.

Otherwise, we have $a < Y^{\prime[a,b]}_n) = Y^{\prime[a,b]}_{n-1} + r(\tuple{X_{n-1},X_n}) < b$
and by induction hypothesis
$Y^{\prime[a,b]}_n = Y^{\prime[a,b]}_{n-1} + r(\tuple{X_{n-1},X_n}) \le
\min(b,Y^{[a,b]}_{n-1} + r(\tuple{X_{n-1},X_n})) = Y^{[a,b]}_n$.
\end{proof}

To lower bound
$\expectation[][\s]{T^{\prime[a,b]}_a}$,
we split each run at the reset points (when $b$ is hit or exceeded)
and evaluate the expected number of resets that 
happen before hitting
$a$.
Formally, let $V^{\prime[a,b]}_a$ be the random variable denoting the number of resets before hitting $a$.
\[
V^{\prime[a,b]}_a \eqdef \sum_{i=1}^{T^{\prime[a,b]}_a} 1_{\lrc{Y^{\prime[a,b]}_{i-1} + r(\tuple{X_{i-1},X_i})\, \geq \, b}}.
\]
We analogously define the random variables $V^{[a,b]}_a$ and $V^b_a$ for the
non-primed random variable $Y^{[a,b]}_n$ and the unbounded random variable
$Y_n$, respectively.

For the argument to work, we have to space the bounds $a$ and $b$ sufficiently far 
apart. In the rest of the section we assume that $a \leq -h < 0 < b$. 


Since the step size is bounded by $R$,
the constants $\alpha \eqdef \ceil{\frac{\abs{a}}{R}}$ and 
$\beta \eqdef \ceil{\frac{\abs{b}}{R}}$ are universal lower bounds on
the minimum time it takes to hit the left bound $a$ and the right bound $b$,
respectively.

\begin{align*}
	\expectation[][\s]{T^{\prime[a,b]}_a} &= \sum_{n=0}^{\infty} \Prob[][\s]{T^{\prime[a,b]}_a > n} \\
\end{align*}

Since hitting $a$ takes at least $\alpha > 0$ steps, the probability
$\Prob[][\s]{T^{\prime[a,b]}_a > n} = 1$
for all $0 \leq n \leq \alpha-1$. Thus, the summation can be simplified to
\begin{align*}
	&\alpha + \sum_{n=0}^{\infty} \Prob[][\s]{T^{\prime[a,b]}_a > n + \alpha} \\
	=& \alpha + \sum_{j=0}^{\infty} \sum_{k=0}^{\beta-1} \Prob[][\s]{T^{\prime[a,b]}_a > j \cdot \beta + k + \alpha} \\
	\geq& \alpha + \sum_{j=0}^{\infty} \beta \cdot \Prob[][\s]{T^{\prime[a,b]}_a > \lrc{j+1} \cdot \beta  + \alpha -1} \\
	=& \alpha + \sum_{j=0}^{\infty} \beta \cdot \Prob[][\s]{T^{\prime[a,b]}_a \geq \lrc{j+1} \cdot \beta  + \alpha} \\
	\geq& \alpha + \sum_{j=0}^{\infty} \beta \cdot \Prob[][\s]{T^{\prime[a,b]}_a \geq \lrc{j+1} \cdot \beta  + \alpha \,\wedge\, V^{\prime[a,b]}_a \geq j+1} \\
    \geq& \alpha + \sum_{j=0}^{\infty} \beta \cdot \Prob[][\s]{V^{\prime[a,b]}_a \geq j+1} \\
\end{align*}
where the last inequality is justified by the fact that resetting at least $j+1$ times implies that the time taken to hit
the left bound $a$ would be at least $\lrc{j+1}\cdot \beta + \alpha$.

\begin{claim}
Let $0 < \delta < 1$ and let $c$ be as in \Cref{def:c}.
By choosing $a \eqdef \min\lrc{-R \ceil{ \lrc{\log_c(\delta \cdot (1-c))}} + R-1, -h}$
we obtain
$\Prob[][\s]{V^{\prime[a,b]}_a = 0} \leq \delta$ for any start state $\s$.
Moreover, it holds that $\Prob[][\s]{V^{\prime[a,b]}_a \geq j+1} \geq \lrc{1-\delta}^{j+1}$.
\end{claim}
\begin{proof}
	From our choice of $a$ and \Cref{lem:exp_decay}, it follows that 
	$\Prob[\mc][\s]{T_a < \infty} \leq \frac{c^{\ceil*{\frac{\abs{a}}{R}}}}{1-c} \leq \delta$. 
	Consider the event $V^{\prime[a,b]}_a = 0$ when starting from $\s$. 
	This means any run in this event doesn't hit the reset transitions,
        which implies that the probability of this event doesn't change when 
	considering $Y_n$ or $Y^{[a,b]}_n$ instead of the sequence $Y^{\prime[a,b]}_n$.
	\[
		\Prob[][\s]{V^{\prime[a,b]}_a = 0} = \Prob[][\s]{V^{[a,b]} = 0} = \Prob[][\s]{V^b_a = 0}. 
	\]
	But the event $V^b_a = 0$ is exactly equivalent to the event $T_a < T_b$ in $\mc$ which further implies that 
	$T_a < \infty$ in $\mc$. Therefore, we have that
	\begin{equation} \label{eq:event_no_right_bdry_visits_upper_bound}
		\Prob[][\s]{V^{\prime[a,b]} = 0} = \Prob[][\s]{T_a < T_b} \leq \Prob[\mc][\s]{T_a < \infty} \leq \delta.
	\end{equation}
	We can then prove the required claim by induction on $j$.

        Base case $j=0$:
        $\Prob[][\s]{V^{\prime[a,b]}_a \geq 1} = \Prob[][\s]{V^{\prime[a,b]}_a > 0} = 1 - \Prob[][\s]{V^{\prime[a,b]}_a = 0} \geq 1-\delta$, by \Cref{eq:event_no_right_bdry_visits_upper_bound}.
    
    Induction step:
    \begin{align*}
        \Prob[][\s]{V^{\prime[a,b]}_a \geq j+2} &= \Prob[][\s]{V^{\prime[a,b]}_a \geq j+2, V^{\prime[a,b]}_a \geq j+1} \\
        &=  \Prob[][\s]{V^{\prime[a,b]}_a \geq j+2 \given V^{\prime[a,b]}_a \geq j+1} \cdot  \Prob[][\s]{V^{\prime[a,b]}_a \geq j+1} \\
    \end{align*}
    Let $T'^{[a,b]}_{b,j}$ denote the time taken for the $j^{th}$ visit to energy level $b$ in $\mc$. It is clear that $T'^{[a,b]}_{b,j}$ is a stopping time for every $j$. Using strong Markov property, one can simplify the above conditional probability to get the required result.
    \begin{align*}
        & \geq \lrc{1-\delta}^{j+1} \cdot \sum_{\s' \in \states} \Prob[][\s]{V^{\prime[a,b]}_a \geq j+2 \given V^{\prime[a,b]}_a \geq j+1, X_{T'^{[a,b]}_{b,j+1}} = \s'} \cdot \Prob[][\s]{X_{T'^{[a,b]}_{b,j+1}} = \s' \given V^{\prime[a,b]}_a \geq j+1} \\
        & = \lrc{1-\delta}^{j+1} \cdot \sum_{\s' \in \states} \Prob[][\s']{V^{\prime[a,b]}_a \geq 1} \cdot \Prob[][\s]{X_{T'^{[a,b]}_{b,j+1}} = \s' \given V^{\prime[a,b]}_a \geq j+1} \\
        & \geq \lrc{1-\delta}^{j+1} \cdot \sum_{\s' \in \states} \lrc{1-\delta} \cdot \Prob[][\s]{X_{T'^{[a,b]}_{b,j+1}} = \s' \given V^{\prime[a,b]}_a \geq j+1} \\
        &= \lrc{1-\delta}^{j+2}
    \end{align*}
\end{proof}

Thus, we get that
\begin{align}
	\expectation[][\s]{T^{[a,b]}_a} \geq \expectation[][\s]{T^{\prime[a,b]}_a} &\geq \alpha + \sum_{j=0}^{\infty} \beta \cdot \Prob[][\s]{V^{\prime[a,b]}_a \geq j+1} \nonumber \\
	&\geq \alpha + \sum_{j=0}^{\infty} \beta \cdot \lrc{1-\delta}^{j+1} \nonumber \\
	&= \alpha + \beta \cdot \frac{1-\delta}{\delta} \nonumber \\
	&= \beta \cdot \frac{1}{\delta} + \lrc{\alpha-\beta}. \label{eq:expectation_left_bdry_hitting_time_bscc}
\end{align}

We aim to make the interval $[a,b]$ as small as possible
(since later the size of the memory in our strategies will be proportional to $\size{b-a}$).
Due to the different influences of the parameters $a$ and $b$, it is 
better to fix $b$ as $1$ and make $a$ smaller (more negative).
The $\beta$ will then be $1$ as well.

\begin{lemma}\label{lem:lower_bound_expectation_left_bdry_hitting_time_bscc}
    For $0 < \delta < 1$ and $a = \min\lrc{-R \ceil{ \lrc{\log_c(\delta \cdot
          (1-c))}} + R-1, -h}$ we have
    \[
        \expectation[][\s]{T^{[a,1]}_a} \geq \frac{1}{\delta} + \ceil{\lrc{\log_c(\delta \cdot (1-c))}} - 1.
    \]
\end{lemma}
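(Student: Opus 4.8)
The plan is to specialize the general lower bound \eqref{eq:expectation_left_bdry_hitting_time_bscc} to the case $b = 1$ and then evaluate the two universal step-count bounds $\alpha$ and $\beta$ for the prescribed choice of $a$. Recall from the derivation preceding the lemma that, whenever $a \le -h < 0 < b$, one has $\expectation[][\s]{T^{[a,b]}_a} \ge \beta \cdot \frac{1}{\delta} + (\alpha - \beta)$, where $\alpha = \ceil{\abs{a}/R}$ and $\beta = \ceil{\abs{b}/R}$ are the minimum numbers of steps needed to hit the left bound $a$ and the right bound $b$, respectively. The hypothesis $a \le -h$ holds here because $a$ is defined as a minimum with $-h$, so the entire chain of inequalities leading to \eqref{eq:expectation_left_bdry_hitting_time_bscc} applies verbatim.

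First I would record that $\beta = \ceil{1/R} = 1$: since $R \ge 1$ we have $0 < 1/R \le 1$, hence the ceiling is $1$. This already reduces the target bound to $\expectation[][\s]{T^{[a,1]}_a} \ge \frac{1}{\delta} + (\alpha - 1)$, so it remains only to show $\alpha \ge \ceil{\log_c(\delta(1-c))}$.

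Next I would evaluate $\alpha = \ceil{\abs{a}/R}$ by a short case analysis on which of the two terms realizes the minimum defining $a$. Writing $K \eqdef \ceil{\log_c(\delta(1-c))}$, in the case $a = -RK + R - 1 = -R(K-1) - 1$ we have $\abs{a} = R(K-1) + 1$, so $\alpha = \ceil{(K-1) + 1/R} = (K-1) + \ceil{1/R} = K$, again using $0 < 1/R \le 1$. In the remaining case $a = -h$, the definition of the minimum forces $h \ge R(K-1) + 1$, whence $\alpha = \ceil{h/R} \ge \ceil{(R(K-1)+1)/R} = K$. Thus $\alpha \ge K$ in both cases, and substituting into the reduced bound gives $\expectation[][\s]{T^{[a,1]}_a} \ge \frac{1}{\delta} + (K - 1) = \frac{1}{\delta} + \ceil{\log_c(\delta(1-c))} - 1$, which is exactly the claimed inequality.

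I do not expect a genuine obstacle, since the lemma is essentially a clean specialization of \eqref{eq:expectation_left_bdry_hitting_time_bscc}. The only point demanding a little care is the ceiling arithmetic when $R > 1$, and in particular verifying that the $-h$ branch of the minimum can only make $\alpha$ larger and hence never weakens the stated bound. Handling both branches of the minimum uniformly is what makes the argument valid for all $R \ge 1$.
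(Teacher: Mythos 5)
Your proposal is correct and follows essentially the same route as the paper: both specialize \eqref{eq:expectation_left_bdry_hitting_time_bscc} with $\beta=1$ and observe that $\abs{a} \geq R\lrc{\ceil{\log_c(\delta(1-c))}-1}+1$ forces $\alpha \geq \ceil{\log_c(\delta(1-c))}$. Your explicit two-case analysis of the minimum is just a more detailed spelling-out of the paper's one-line observation that the $-h$ branch can only increase $\abs{a}$.
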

\begin{proof}
    Since $b=1$ and $\abs{a} \geq R \lrc{\ceil{ \lrc{\log_c(\delta \cdot (1-c))}} - 1} +1$, $\beta = 1$ and $\alpha  = \ceil{\frac{\abs{a}}{R}} \geq \ceil{\lrc{\log_c(\delta \cdot (1-c))}}$. Substituting these values in \Cref{eq:expectation_left_bdry_hitting_time_bscc} gives the required bound.
\end{proof}

As $\delta \tendsto 0$, the expected hitting time grows as
$\approx \frac{1}{\delta} = \exp\lrc{\log\lrc{1/\delta}}$, i.e.,
exponentially in $\log\lrc{\frac{1}{\delta}}$, whereas the memory $\approx a$ required to achieve this lower bound would be proportional to $R\lrc{\frac{\log\lrc{1/\delta}}{\log\lrc{1/c}}} = \frac{2\eta^2R}{\mu^2}\log\lrc{1/\delta}$, linear in $\log(1/\delta)$.

\subsection{General Markov Chains}
To get a lower bound on $\expectation[][\s]{T^{[a,b]}_a}$ when $\mc$ is not
strongly connected, one has to account for the time spent in transient
states. Fortunately, the probability to spend a large amount of time outside a BSCC falls exponentially and this allows the analysis from the previous
subsections to carry over with a minimal increase in the size of the interval
$[a,b]$ required for general Markov chains. Assume $\AS\lrc{\MP\lrc{> 0}} =
\states$, i.e., every BSCC of $\mc$ has positive average mean payoff. Let $C \subseteq \states$ denote all the recurrent states. Compute the constants $\eta_{G},\, \mu_{G},\, c_{G},\, h_{G}$ from \Cref{lem:exp_decay} for each BSCC $G$ and let $\eta,\, \mu,\, c,\, h$ be the maximum over all BSCC's. Let $T_C$ denote the hitting time of some BSCC. For some positive integer $k$, let $Z_k$ denote the event  $T_C \leq k$. By the tower property, one has
$$ \expectation[][\s]{T^{[a,b]}_a} = \expectation[][\s]{\expectation{T^{[a,b]}_a \given 1_{Z_k}}} = \expectation[][\s]{T^{[a,b]}_a \given \complementof{Z_k}} \cdot \Prob[][\s]{\complementof{Z_k}} + \expectation[][\s]{T^{[a,b]}_a \given Z_k} \cdot \Prob[][\s]{Z_k}.$$

Since we are only interested in a lower bound, we can ignore the low probability event $\complementof{Z_k}$ as $T^{[a,b]}_a$ is a non-negative random variable.
\begin{equation}\label{eq:expectation_left_bdry_hitting_time_lower_bnd}
    \expectation[][\s]{T^{[a,b]}_a} \geq \expectation[][\s]{T^{[a,b]}_a \given Z_k} \cdot \Prob[][\s]{Z_k}. 
\end{equation}

If $x_{\min}$ is the minimum occurring probability in $\mc$, then let
\begin{equation} \label{def:g}
    g \eqdef
        \exp\lrc{\frac{-x_{\min}^{\size{\states}}}{\size{\states}}}
\end{equation}

\begin{lemma}\label{lem:lower_bound_probability_hitting_BSCC}
    Let $y_{\min}$ denote the minimum occurring probability in $\mc$ outside every BSCC. 
     Then there exists $0 \leq g < 1$ such that for all $k > \size{\states}$, 
    $$\Prob[][\s]{\complementof{Z_k}} \leq 2\cdot g^k.$$
\end{lemma}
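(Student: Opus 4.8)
The plan is to prove geometric decay of the probability of surviving among the transient states by iterating a one-block reachability estimate via the Markov property, and then to convert the resulting block bound into the stated form $2g^{k}$.

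First I would establish the reachability claim that from every transient state $\s$ one hits $C$ within $\size{\states}$ steps with probability at least $x_{\min}^{\size{\states}}$. In a finite Markov chain every state can reach some recurrent state, and a \emph{shortest} path from $\s$ to $C$ visits pairwise distinct states, so it has at most $\size{\states}-1$ transitions. Each transition carries probability at least $x_{\min}$ (here using the global minimum $x_{\min}\le y_{\min}$, which is enough since $g$ is defined through $x_{\min}$), so this path is realised with probability at least $x_{\min}^{\size{\states}}$. Hence $\Prob[][\s]{T_C > \size{\states}} \le 1 - x_{\min}^{\size{\states}}$ for every transient $\s$, while $T_C = 0$ for $\s \in C$.

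Next I would iterate over blocks of $\size{\states}$ steps. Writing $n \eqdef \size{\states}$ and $q \eqdef 1 - x_{\min}^{n}$, I condition on the state reached after $(j-1)n$ steps: on the event $\set{T_C > (j-1)n}$ this state is still transient (the recurrent states are absorbing as a set), so the Markov property together with the previous paragraph gives $\Prob[][\s]{T_C > jn} \le q \cdot \Prob[][\s]{T_C > (j-1)n}$, and induction then yields $\Prob[][\s]{T_C > jn} \le q^{j}$. Passing to arbitrary $k > n$, I put $p \eqdef x_{\min}^{n}$ and $j \eqdef \lfloor k/n\rfloor \ge 1$; since $\complementof{Z_k} = \set{T_C > k} \subseteq \set{T_C > jn}$, monotonicity and $1-p \le e^{-p}$ give $\Prob[][\s]{\complementof{Z_k}} \le (1-p)^{j} \le e^{-pj}$. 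As $j > k/n - 1$ and $g = \exp(-p/n)$, this is at most $e^{p}\,e^{-pk/n} = e^{p} g^{k}$. To reach the constant $2$ rather than $e$, I would argue $e^{p}\le 2$: if $x_{\min}=1$ the chain is deterministic, hits $C$ within $n$ steps surely, and $\Prob[][\s]{\complementof{Z_k}}=0$; otherwise some state has two successors of positive probability, forcing $x_{\min}\le \tfrac12$, hence $p \le \tfrac12 < \ln 2$ and $e^{p} < 2$.

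The routine part is the block computation; the step that needs the most care is the reachability claim, namely pinning down that every transient state reaches $C$ within $\size{\states}$ steps with probability at least $x_{\min}^{\size{\states}}$, together with the small case analysis on $x_{\min}$ that is needed to obtain the clean factor $2$ instead of the weaker $e$.
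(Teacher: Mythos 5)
Your proof is correct and follows essentially the same route as the paper's: a shortest-path argument shows every state reaches the recurrent set $C$ within $\size{\states}$ steps with probability at least $x_{\min}^{\size{\states}}$, the Markov property turns this into geometric decay over blocks, and the leftover rounding slack is absorbed into the constant $2$ (the paper handles the deterministic corner case via $y_{\min}=1$ and uses blocks of length $\size{\states}-1$, but these are cosmetic differences). Your explicit observation that $x_{\min}\le\tfrac12$ forces $e^{p}<2$ is a slightly cleaner way to justify the constant than the paper's floor-manipulation.
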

\begin{proof}
    The proof is similar to ~\cite[Lemma 5.1]{BKK2014}. Assume $y_{\min} \neq 1$ (the other case is trivial $\Prob[][\s]{\complementof{Z_k}}=0$). 
    This implies $y_{\min} \leq \frac{1}{2}$.
    Let $n=\size{\states}$. From any state $\s$, there will be a path of length at most $n-1$ to a state in $C$, $\implies$ for all states $\s$, $\Prob[][\s]{T_C < n} \geq y_{\min}^{n-1} \geq y_{\min}^{n}$. Dividing the run into segments of length $n-1$, one gets
    \begin{align*}
        \Prob[][\s]{\complementof{Z_k}} &= \Prob[][\s]{T_C > k} \\
        &\leq \Prob[][\s]{T_C \geq k} \\
        &\leq \lrc{1-y_{\min}^n}^{\floor{\frac{k-1}{n-1}}} \\
        &\leq 2 \cdot \lrc{\exp\lrc{\frac{1}{n}\log\lrc{1-y_{\min}^n}}}^k \\
        &\leq 2 \cdot g^k \lrc{g = \exp\lrc{\frac{-y_{\min}^{n}}{n}}}
    \end{align*} 
\end{proof}

From the above lemma, we get a lower bound on $\Prob[][\s]{Z_k}$ for $k > n$. Before computing a lower bound on $\expectation[][\s]{T^{[a,b]}_a \given Z_k}$, we choose $a$ to be sufficiently negative so that it is never the case that $T^{[a,b]}_a \leq k$.

\begin{lemma}\label{lem:lower_bound_expectation_left_bdry_hitting_time_given_Z_k}
    For any $0 < \delta < 1$, choosing $a = \min\lrc{-R \ceil{ \lrc{\log_c(\delta \cdot (1-c))}} + R-1, -h} - k \cdot R$ and $b = 1$
    \[
        \expectation[][\s]{T^{[a,b]}_a \given Z_k} \geq \lrc{k+1} \cdot \lrc{\frac{1}{\delta}-1} + \ceil{ \lrc{\log_c(\delta \cdot (1-c))}}
    \]
\end{lemma}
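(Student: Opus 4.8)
The plan is to condition on $Z_k$ and reduce the general chain to the strongly-connected analysis of the previous subsection, carried out inside whichever BSCC is reached first. Write $a' \eqdef \min\lrc{-R\ceil{\log_c(\delta(1-c))} + R -1,\, -h}$ for the threshold used in the strongly-connected case, so that the present choice is $a = a' - k\cdot R$ and $b=1$. First I would record two consequences of this choice. Since every one-step change of $Y^{[a,b]}_n$ is at least $-R$, after the at most $k$ steps needed to reach a BSCC on the event $Z_k$ the truncated reward has dropped by at most $kR$; hence on entering the recurrent part the reward level $y$ satisfies $y \ge -kR$, and moreover $\abs{a} = \abs{a'} + kR > kR$ guarantees that the lower boundary $a$ cannot have been hit yet, i.e. $T^{[a,b]}_a > T_C$ on $Z_k$.

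Next, at the first hitting time $T_C$ of a BSCC $G$ I would invoke the strong Markov property together with the translation invariance recorded in the Remark preceding \Cref{lem:lower_bound_expectation_left_bdry_hitting_time_bscc}: conditioned on entering $G$ at state $\s'$ with reward level $y$, the residual hitting time of $a$ has the law of $T^{[a-y,\,b-y]}_{a-y}$ started from $\s'$. Because $y\ge -kR$ forces $a - y \le a'$, the lower boundary of this shifted interval is at least as deep as the threshold required in \Cref{lem:lower_bound_expectation_left_bdry_hitting_time_bscc}, so the reset-counting estimate proved there applies inside $G$ with the \emph{same} $\delta$: the number of upper-boundary resets $V^{\prime[a,b]}_a$ before hitting the lower boundary satisfies $\Prob[][\s']{V^{\prime[a,b]}_a \ge j+1} \ge (1-\delta)^{j+1}$.

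The heart of the argument is to convert this geometric control on the number of resets into the claimed bound with the multiplicative factor $k+1$. Here I would exploit the $kR$ buffer deliberately built into $a$: because the reset threshold sits at $b=1$ while the recurrent part is entered at a reward level that may be as low as $-kR$, the intention is that each reset cycle spans at least $k+1$ time steps, and the final $a$-reaching descent costs a further $\ge \ceil{\log_c(\delta(1-c))}$ steps; this should give the implication $V^{\prime[a,b]}_a\ge j+1 \Rightarrow T^{[a,b]}_a \ge (j+1)(k+1) + \ceil{\log_c(\delta(1-c))}$. Writing $\expectation[][\s]{T^{[a,b]}_a \given Z_k} = \sum_n \Prob[][\s]{T^{[a,b]}_a > n \given Z_k}$, grouping the summation into blocks of length $k+1$ exactly as in the strongly-connected derivation, and summing the geometric series $\sum_{j\ge 0}(k+1)(1-\delta)^{j+1} = (k+1)\lrc{\tfrac1\delta - 1}$ then yields the stated lower bound.

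I expect the delicate step to be the per-cycle length estimate just described, namely turning ``at least $j+1$ resets'' into ``at least $(j+1)(k+1) + \ceil{\log_c(\delta(1-c))}$ steps'' while the entry level $y$ into the recurrent part is itself a random quantity ranging over $\{-kR,\dots,1\}$. The clean way to handle this is to run the block-summation directly on the primed process $Y^{\prime[a,b]}_n$, using that the rewards are integer-valued and that the interval was widened by exactly $kR$, so that the minimal spacing between successive resets and the minimal length of the excursion that first reaches $a$ are both controlled uniformly in $y$; the transient prefix contributes only through the event $\complementof{Z_k}$, which has already been discarded in \Cref{eq:expectation_left_bdry_hitting_time_lower_bnd}. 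Everything else is bookkeeping that mirrors \Cref{lem:lower_bound_expectation_left_bdry_hitting_time_bscc}.
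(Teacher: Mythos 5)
Your setup --- conditioning on $Z_k$, noting $Y^{[a,b]}_{T_C}\ge -kR$ and $T^{[a,b]}_a>T_C$ on that event, and restarting at $T_C$ via the strong Markov property so that the residual time is $T^{[a-y,b-y]}_{a-y}$ from the entry state --- is exactly the paper's decomposition, and your observation that $y\ge -kR$ forces $a-y\le a'$, keeping the reset-counting estimate valid with the same $\delta$, is also correct. The gap sits precisely at the step you flag as delicate, and your proposed fix does not close it: the minimal spacing between successive resets of the shifted primed process is $\ceil{(b-y)/R}=\ceil{(1-y)/R}$, which is \emph{not} uniform in $y$. If the chain enters the BSCC at a high truncated level, say $y=0$, each reset cycle can be as short as a single step, so blocks of length $k+1$ are simply unavailable; running the block-summation with the honest block length for such $y$ yields only about $\lrc{\frac{1}{\delta}-1}+k+\ceil{\log_c(\delta\cdot(1-c))}$, which falls short of the claimed $\lrc{k+1}\cdot\lrc{\frac{1}{\delta}-1}+\ceil{\log_c(\delta\cdot(1-c))}$ by roughly a factor of $k$ in the leading term.

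The missing ingredient is the \emph{monotonicity} half of the Remark you cite, not just its translation-invariance half: $T^{[a,x_1,b]}_a\le T^{[a,x_2,b]}_a$ for $x_1<x_2$, which holds pathwise because coupling the increments and applying $\max(a,\min(b,\cdot))$ preserves order. The paper uses it to replace the random entry level $y\ge -kR$ by the single worst case $y=-kR$, i.e.
$\expectation[][q]{T^{[a-y,b-y]}_{a-y}}\ge\expectation[][q]{T^{[a+kR,\,b+kR]}_{a+kR}}$.
For that one instance the upper boundary sits at $b+kR=1+kR$, so $\beta=k+1$ in \eqref{eq:expectation_left_bdry_hitting_time_bscc} while $\alpha\ge\ceil{\log_c(\delta\cdot(1-c))}$, and the stated bound follows immediately. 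Note the direction of the trade-off: a high entry level makes resets frequent but pushes the lower boundary farther away, and the domination argument converts every entry level into the low-entry case where the $k+1$ spacing genuinely holds. Everything else in your write-up matches the paper's proof.
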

\begin{proof}
    Since $T_C \leq k < T^{[a,b]}_a \given Z_k$, $Y^{[a,b]}_{T_C} \geq -k \cdot R$. 
    
    Let $I \eqdef \setcomp{\tuple{y,q}}{Y^{[a,b]}_{T_C}=y \wedge X_{T_C}=q \wedge \Prob[][\s]{Y^{[a,b]}_{T_C}=y, X_{T_C}=q \given Z_k} > 0}$
    \[ 
    \expectation[][\s]{T^{[a,b]}_a \given Z_k} = \sum_{(y,q) \in I} \expectation[][\s]{T^{[a,b]}_a \given Z_k, \,Y^{[a,b]}_{T_C}=y, \,X_{T_C}=q} \cdot \Prob[][\s]{Y^{[a,b]}_{T_C}=y, X_{T_C}=q \given Z_k}
    \]
    Let $E\lrc{y,q}$ denote the event $Z_k \,\cap\, Y^{[a,b]}_{T_C}=y\, \cap \,X_{T_C}=q $
    \begin{align*}
        \expectation[][\s]{T^{[a,b]}_a \given E\lrc{y,q}} &= \expectation[][\s]{T_C \given E\lrc{y,q}} + \expectation[][q]{T^{[a-y,b-g]}_{a-g}} \\
        &\geq \expectation[][q]{T^{[a+k \cdot R,b+k \cdot R]}_{a+k \cdot R}} \\
        &\geq \lrc{k+1} \cdot \lrc{\frac{1}{\delta}-1} + \ceil{ \lrc{\log_c(\delta \cdot (1-c))}} \quad \lrc{\eqref{eq:expectation_left_bdry_hitting_time_bscc}}
    \end{align*}
    Summing over all mutually exclusive events, one obtains the specified bound.
\end{proof}

Using \Cref*{lem:lower_bound_probability_hitting_BSCC,lem:lower_bound_expectation_left_bdry_hitting_time_given_Z_k}, one gets the following result.

\begin{restatable}{lemma}{lemboundsgenmc}\label{lem:lower_bound_expectation_left_bdry_hitting_time_gen}
    For any $k > n$, $0 < \delta < 1$, with $a = \min\lrc{-R \ceil{ \lrc{\log_c(\delta \cdot (1-c))}} + R-1, -h} - k \cdot R$ and $b = 1$
    \begin{align*}
        \expectation[][\s]{T^{[a,b]}_a} &\geq \lrc{1-2\cdot g^k} \cdot \lrc{\lrc{k+1} \cdot \lrc{\frac{1}{\delta}-1} + \ceil{ \lrc{\log_c(\delta \cdot (1-c))}}} \\
        \expectation[][\s]{T^{[a,b]}_b} &\leq \size{\states} + \frac{2}{1-g} + \frac{\abs{a}+b+h+R}{\mu}
    \end{align*}
    where $g$ and $h$ are computable constants dependent only on $\mc$ and $c$ depends on $\mc$ along with reward function $r$. Furthermore, if $r_2: E \to \set{-R,\dots,0,\dots,R}$ is an additional reward function with positive mean payoff of at least $\mu_2$ in every BSCC, then assuming $\delta$ is small enough such that $\lrc{\size{\states}+1} \cdot \lrc{\frac{1}{\delta}-1} + \ceil{ \lrc{\log_c(\delta \cdot (1-c))}} \geq \frac{h}{\mu_2}$
    \begin{align*}
        \expectation[][\s]{\lrc{Y_{T^{[a,b]}_a}}_2} &\geq \lrc{\lrc{\lrc{\lrc{k+1} \cdot \lrc{\frac{1}{\delta}-1} + \ceil{ \lrc{\log_c(\delta \cdot (1-c))}}} \cdot \mu_2} \cdot \lrc{1-2g^k} -h} \\
        &-R \cdot \lrc{ \size{\states} + \frac{2}{1-g}} \\
        &-R \cdot \frac{2g}{\lrc{1-g}^2}
    \end{align*}
\end{restatable}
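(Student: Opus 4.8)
The plan is to assemble the three inequalities from the conditional estimates in the two preceding lemmas, treating the transient prefix (the part of a run before it enters a BSCC) separately from the recurrent behaviour. The first inequality is almost immediate: starting from \eqref{eq:expectation_left_bdry_hitting_time_lower_bnd}, I would lower-bound $\Prob[][\s]{Z_k} \geq 1 - 2g^k$ by \Cref{lem:lower_bound_probability_hitting_BSCC} and lower-bound $\expectation[][\s]{T^{[a,b]}_a \given Z_k}$ by $\lrc{k+1}\lrc{\frac1\delta - 1} + \ceil{\log_c\lrc{\delta\lrc{1-c}}}$ by \Cref{lem:lower_bound_expectation_left_bdry_hitting_time_given_Z_k}; the product of the two is exactly the claimed lower bound.

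For the upper bound on $\expectation[][\s]{T^{[a,b]}_b}$ I would split each run at the first time $T_C$ it enters the recurrent part $C$. The expected time to reach a BSCC is controlled by the geometric tail of \Cref{lem:lower_bound_probability_hitting_BSCC}: writing $\expectation[][\s]{T_C} = \sum_{j \geq 0}\Prob[][\s]{T_C > j}$ and using $\Prob[][\s]{T_C > j} \leq 2g^j$ for $j > \size{\states}$ gives $\expectation[][\s]{T_C} \leq \size{\states} + \frac{2}{1-g}$. Once inside a BSCC the clamped total reward is at least $a$, so the remaining distance up to the boundary $b$ is at most $b + \abs{a}$; by translation invariance and the strongly-connected estimate \Cref{cor:upper_bound_expectation_hitting_time_right_bdry_bscc}, the remaining expected time is at most $\frac{\abs{a} + b + h + R}{\mu}$. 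Summing the two pieces yields the stated bound.

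The third inequality is the substantive one and relies on the martingale machinery for the second reward dimension $r_2$. Inside the BSCC $G$ that a run enters, \Cref{lem:potential_vector} applied to $r_2$ provides a potential $\nu_2 \in [0,h]^{\states_G}$ and drift equal to the mean payoff $\mu_2^G \geq \mu_2$ of $G$, so that $\lrc{Y_n}_2 + \nu_2\lrc{X_n} - n\mu_2^G$ is a martingale there (\Cref{fact:martingale}). Crucially, $T^{[a,b]}_a$ is a stopping time for the natural filtration even though it is defined via the first (clamped) dimension. Assuming $\expectation[][\s]{T^{[a,b]}_a} < \infty$ (otherwise $T^{[a,b]}_a = \infty$ almost surely and the bound is trivial, since then $\lrc{Y_n}_2 \tendsto \infty$), optional stopping applied between $T_C$ and $T^{[a,b]}_a$ gives $\expectation[][\s]{\lrc{Y_{T^{[a,b]}_a}}_2 - \lrc{Y_{T_C}}_2} \geq \mu_2\,\expectation[][\s]{T^{[a,b]}_a - T_C} - h$. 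Adding the transient contribution $\expectation[][\s]{\lrc{Y_{T_C}}_2} \geq -R\,\expectation[][\s]{T_C}$ and substituting the first inequality for $\expectation[][\s]{T^{[a,b]}_a}$ produces the main term $\lrc{\lrc{k+1}\lrc{\frac1\delta-1} + \ceil{\log_c\lrc{\delta\lrc{1-c}}}}\mu_2\lrc{1-2g^k} - h$. The hypothesis $\lrc{\size{\states}+1}\lrc{\frac1\delta-1} + \ceil{\log_c\lrc{\delta\lrc{1-c}}} \geq h/\mu_2$ is precisely what keeps this main term non-negative after the $-h$ correction.

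The main obstacle is that \Cref{lem:potential_vector}, and hence both the drift and the potential $\nu_2$, only exist inside a BSCC, where a stationary distribution is available; during the transient prefix there is no comparable drift. This forces the split at $T_C$ and a careful accounting of two distinct error sources. The reward lost during the transient prefix is at most $R$ per step, which via $\expectation[][\s]{T_C} \leq \size{\states} + \frac{2}{1-g}$ yields the correction $-R\lrc{\size{\states} + \frac{2}{1-g}}$. A second, finer correction comes from runs whose BSCC entry is atypically late: there the level at which the recurrent analysis begins is depressed roughly in proportion to $T_C$, and bounding the resulting loss means summing the geometric tail $\Prob[][\s]{T_C > j} \leq 2g^j$ against a linear weight, i.e.\ $\sum_{j} j\,g^j = \frac{g}{(1-g)^2}$, which gives the term $-R\frac{2g}{(1-g)^2}$. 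Keeping all these corrections exponentially small relative to the $\frac1\delta$-sized main term (so that multiplication by $\mu_2$ preserves positivity) is the delicate point; the remaining steps are routine geometric-series estimates.
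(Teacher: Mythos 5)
Your proposal matches the paper's proof in all essentials: the first two inequalities are assembled exactly as in the paper from \Cref{lem:lower_bound_probability_hitting_BSCC}, \Cref{lem:lower_bound_expectation_left_bdry_hitting_time_given_Z_k} and \Cref{cor:upper_bound_expectation_hitting_time_right_bdry_bscc}, and the third uses the same split at $T_C$, the same optional-stopping argument for $r_2$ inside the BSCC, and the same three correction terms. The only cosmetic difference is bookkeeping: the paper partitions explicitly on $T^{[a,b]}_a < T_C$ versus $T^{[a,b]}_a \geq T_C$ and conditions on the entry pair $(Y^{[a,b]}_{T_C}, X_{T_C})$, attributing the $-R\frac{2g}{(1-g)^2}$ term to runs that hit $a$ before reaching a BSCC, whereas you fold this into a ``late entry'' correction --- but the underlying computation $\sum_j j\,g^j$ and the resulting bound are identical.
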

\begin{proof}
The first result follows from \eqref{eq:expectation_left_bdry_hitting_time_lower_bnd} and \Cref{lem:lower_bound_expectation_left_bdry_hitting_time_given_Z_k,lem:lower_bound_probability_hitting_BSCC}.

To get the upper bound for $T^{[a,b]}_b$ in general case, we simply add the expected time it takes to reach a BSCC and upper bound the time it takes with the worst possible $Y^{[a,b]}_{T_C}\lrc{= a}$. Hence, by \Cref{lem:lower_bound_probability_hitting_BSCC,lem:bound_hitting_time_b}
$$\expectation[][\s]{T^{[a,b]}_b} \leq \expectation[][\s]{T_C} + \expectation[][q]{T^{[0,b-a]}_b} \leq \size{\states} + \frac{2}{1-g} + \frac{\abs{a}+b+h+R}{\mu}  $$

Finally, for the lower bound on $\expectation[][\s]{\lrc{Y_{T^{[a,b]}_a}}_2}$, if $\expectation[][\s]{T^{[a,b]}_a} = \infty$ we are done as $\mu_2 > 0$  could be a lower bound in this case for achievable mean payoff making $\expectation[][\s]{\lrc{Y_{T^{[a,b]}_a}}_2} = \infty$, so assume $\expectation[][\s]{T^{[a,b]}_a} < \infty$.

By law of total expectation, partitioning based on whether $T_C < T^{[a,b]}_a$, we have
\begin{align*}
    \expectation[][\s]{\lrc{Y_{T^{[a,b]}_a}}_2} &= \expectation[][\s]{\lrc{Y_{T^{[a,b]}_a}}_2 \given T^{[a,b]}_a < T_C} \cdot \Prob[][\s]{T^{[a,b]}_a < T_C} \\ 
    &+ \expectation[][\s]{\lrc{Y_{T^{[a,b]}_a}}_2 \given T^{[a,b]}_a \geq T_C} \cdot \Prob[][\s]{T^{[a,b]}_a \geq T_C}
\end{align*}
We will now show lower bounds for each summand separately. For $\expectation[][\s]{\lrc{Y_{T^{[a,b]}_a}}_2 \given T^{[a,b]}_a < T_C} \cdot \Prob[][\s]{T^{[a,b]}_a < T_C}$, since $r_2\lrc{e} \geq -R$ always, 
$$\expectation[][\s]{\lrc{Y_{T^{[a,b]}_a}}_2 \given T^{[a,b]}_a < T_C} \geq -R \cdot \expectation[][\s]{T^{[a,b]}_a \given T^{[a,b]}_a < T_C} \label{eq:expected_sum_lower_bound_-R_times_expected_time}$$
\begin{align*}
   & \expectation[][\s]{T^{[a,b]}_a \given T^{[a,b]}_a < T_C} \cdot
     \Prob[][\s]{T^{[a,b]}_a < T_C}\\
  &= \sum_{m=k}^{\infty} m \cdot \Prob[][\s]{T^{[a,b]}_a = m \given T^{[a,b]}_a < T_C} \cdot \Prob[][\s]{T^{[a,b]}_a < T_C} \\
    &= \sum_{m=k}^{\infty} m \cdot \Prob[][\s]{T^{[a,b]}_a = m \cap T^{[a,b]}_a < T_C} \\
    &\leq \sum_{m=k}^{\infty} m \cdot \Prob[][\s]{T_C > m} \\
    &\leq \sum_{m=k}^{\infty} m \cdot 2 \cdot g^{m} \quad \lrc{\Cref{lem:lower_bound_probability_hitting_BSCC}} \\
    &\leq \sum_{m=0}^{\infty} m \cdot 2 \cdot g^{m}  \\
    &= \frac{2g}{\lrc{1-g}^2}
\end{align*}
 and then using above inequality and \eqref{eq:expected_sum_lower_bound_-R_times_expected_time}
$$ \expectation[][\s]{\lrc{Y_{T^{[a,b]}_a}}_2  \given T^{[a,b]}_a < T_C} \cdot \Prob[][\s]{T^{[a,b]}_a < T_C} \geq -R \cdot \frac{2g}{\lrc{1-g}^2} $$

For the other summand, we split the sum into two parts; sum of rewards gained until $T_C$ \ie, until one reaches a BSCC and sum of rewards inside a BSCC. Let $T \eqdef T^{[a,b]}_a - T_C$ denote the time spent inside a BSCC and $\lrc{Y_T}_2$ denote the sum of rewards inside the BSCC before hitting $a$. Then by linearity of expectation
\[
    \expectation[][\s]{\lrc{Y_{T^{[a,b]}_a}}_2 \given T^{[a,b]}_a \geq T_C} \cdot \Prob[][\s]{T^{[a,b]}_a \geq T_C} = \expectation[][\s]{\lrc{Y_{T_C}}_2 + \lrc{Y_T}_2 \given T^{[a,b]}_a \geq T_C} \cdot \Prob[][\s]{T^{[a,b]}_a \geq T_C} \label{eq:expected_sum_split_before_and_after_bscc}
\]
For $\lrc{Y_{T_C}}_2$, one can follow a similar structure to that of the previous summand. So we first find an upper bound on $\expectation[][\s]{T_C \given T^{[a,b]}_a \geq T_C} \cdot \Prob[][\s]{T^{[a,b]}_a \geq T_C}$
\begin{align*}
    \expectation[][\s]{T_C \given T^{[a,b]}_a \geq T_C} \cdot \Prob[][\s]{T^{[a,b]}_a \geq T_C} &\leq \expectation[][\s]{T_C} \quad \textrm{Since } T_C \geq 0 \\
    &= \sum_{m=0}^{\infty} \Prob[][\s]{T_C > m} \\
    &= \size{\states} + \frac{2 \cdot g^{\size{\states}}}{1-g} \quad \lrc{\Cref{lem:lower_bound_probability_hitting_BSCC}} \\
    &\leq \size{\states} + \frac{2}{1-g}
\end{align*}
Thus
$$\expectation[][\s]{\lrc{Y_{T_C}}_2 \given T^{[a,b]}_a \geq T_C} \cdot \Prob[][\s]{T^{[a,b]}_a \geq T_C} \geq -R \cdot \lrc{ \size{\states} + \frac{2}{1-g}}. \label{eq:expected_sum_before_bscc}$$

To calculate expectation for $\lrc{Y_T}_2$, we condition on the energy level $Y^{[a,b]}_{T_C}$ w.r.t $r$ and the state in which we enter the BSCC. 
Suppose $Y^{[a,b]}_{T_C} = y$ and $X_{T_C} = q$. But conditioned on $Y^{[a,b]}_{T_C} = y \cap X_{T_C} = q \cap T^{[a,b]_a} \geq T_C$, $T$ is precisely $T^{[a-y,b-y]}_{a-y}$ with $X_0 = q$ 
\begin{align*}
    \expectation[][\s]{\lrc{Y_T}_2 \given T^{[a,b]}_a \geq T_C \cap X_{T_C} = q \cap \lrc{Y^{[a,b]}_{T_C}}_1 = y} &= \expectation[][q]{\lrc{Y_{T^{[a-y,b-y]}_{a-y}}}_2 \given T\geq 0} \\ 
    &= \expectation[][q]{\lrc{Y_{T^{[a-y,b-y]}_{a-y}}}_2}
\end{align*}
Also assume the mean payoff w.r.t $r_2$ in this BSCC is some $\lambda > \mu_2 > 0$.
Since $T$ is a stopping time with finite expectation (as per our assumption), we can apply optional stopping theorem to the martingale of $r_2$ cf. ~\Cref{fact:martingale}.
\[
  {m^q_2}_n = \lrc{Y_n}_2 + \nu\lrc{X_n} - n\lambda  
\]
 where the index $n$ is actually counted from $T_C$. This implies
\begin{align*}
{m^q_2}_T &= {m^q_2}_0 \\
\lrc{Y_T}_2 + \nu\lrc{X_T} - T \lambda &= \nu\lrc{q} \\
\lrc{Y_T}_2 &= T\, \lambda + \nu\lrc{q} - \nu\lrc{X_T} \\
&\geq T\, \mu_2 - h \\
\implies \expectation[][q]{\lrc{Y_T}_2} &\geq \expectation[][q]{T} \, \mu_2 - h
&=  \expectation[][q]{T^{[a-y,b-y]}_{a-y}} \cdot \mu_2 - h
\end{align*}
Therefore, partitioning over all possible tuples $\lrc{q,y}$
\begin{align*}
& \expectation[][\s]{\lrc{Y_T}_2 \given T^{[a,b]}_a \geq T_C} \cdot
                 \Prob[][\s]{T^{[a,b]}_a \geq T_C} \\
  & \geq \sum_{(q,y)}\lrc{\expectation[][q]{T^{[a-y,b-y]}_{a-y}} \cdot
                 \mu_2 -h} \cdot \Prob[][\s]{T^{[a,b]}_a \geq T_C \cap X_{T_C}
                 = q \cap \lrc{Y^{[a,b]}_{T_C}}_1 = y}
\end{align*}                 
When $Y^{[a,b]}_{T_C} \geq -k \cdot R$, then $\expectation[][q]{T^{[a-y,b-y]}_{a-y}} \geq \lrc{k+1} \cdot \lrc{\frac{1}{\delta}-1} + \ceil{ \lrc{\log_c(\delta \cdot (1-c))}}$ using \Cref{eq:expectation_left_bdry_hitting_time_bscc}. 
Observe that this event subsumes $T_C < k$, so probability of this happening is $\geq 1 - 2g^k$ by ~\Cref{lem:lower_bound_probability_hitting_BSCC}. 
In the other case when ${Y_{T_C}}_1 < -k \cdot R$, a trivial lower bound of $0$ for $\expectation[][q]{T^{[a-y,b-y]}_{a-y}}$ suffices for our purposes. 
Since from our assumption, $\delta$ is small enough so that $\lrc{\size{\states}+1} \cdot \lrc{\frac{1}{\delta}-1} + \ceil{ \lrc{\log_c(\delta \cdot (1-c))}} \geq \frac{h}{\mu_2}$, putting it all together we have
\begin{align*}
& \expectation[][\s]{\lrc{Y_T}_2 \given T^{[a,b]}_a \geq T_C} \cdot
  \Prob[][\s]{T^{[a,b]}_a \geq T_C} \\
  & \geq \lrc{\lrc{k+1} \cdot
  \lrc{\frac{1}{\delta}-1} + \ceil{ \lrc{\log_c(\delta \cdot (1-c))}} \cdot
  \mu_2-h} \cdot \lrc{1-2g^k} - h \cdot 2g^k
\end{align*}  

So
\begin{align*}
    \expectation[][\s]{\lrc{Y_{T^{[a,b]}_a}}_2} &\geq -R \cdot \frac{2g}{\lrc{1-g}^2} \\
    &+ \lrc{-R \cdot \lrc{ \size{\states} + \frac{2}{1-g}}} \\
    &+ \lrc{\lrc{\lrc{\lrc{k+1} \cdot \lrc{\frac{1}{\delta}-1} + \ceil{ \lrc{\log_c(\delta \cdot (1-c))}}} \cdot \mu_2} \cdot \lrc{1-2g^k} -h}.
\end{align*}
\end{proof}

\section{Proof of \texorpdfstring{\Cref{lem:lower_bounds_expected_sums_gain_bailout}}{Lemma 10}}\label{app:bounds}

\constantsvonevtwo*
\begin{proof}
    We parametrise $\size{\mdp^{\ast}}$ along with $\vec{r}$ on 
\begin{itemize}
    \item Number of states $n \eqdef \size{\states^\ast}$.
    \item Maximum bit length of probability in $\probp$. Let it be $w$.
    \item Number of reward dimensions $d$.
    \item Maximum reward on an edge in any dimension $R$.
\end{itemize}
One can also similarly define size of the Markov chain induced by some finite-memory strategy.

Let $f\lrc{n,w,d,R} \eqdef n^2\lrc{2+w+d\cdot \lrc{1 + \ceil{\lrc{\log_2\lrc{R+1}}}}}$.
Assuming binary representation of rewards, it is easy to see that $\size{\mdp} \leq f\lrc{n,w,d,R}$. The probabilities are always represented in binary.

As $\optzstrat_{\Bailout}$ is MD, $\size{\mc^{\Bailout}} \leq f\lrc{n,w,d,R}$. \\ 
Similarly, as $\optzstrat_{\Gain}$ is obtained as a result of a linear program, we have 
$$\size{\mc^{\Gain}} \leq f\lrc{n,LP_\Gain\lrc{n,w,d,R},d,R}$$ 
where $LP_\Gain$ (cf. \Cref{fig:LP_gain}) is some fixed polynomial in $n,w,d$ and $\log R$. For succinctness, we define $w_{\Gain} \eqdef LP_\Gain\lrc{n,w,d,R}$.

\begin{figure}
    \centering
    \begin{align*}
        \max \eps \\
        \sum_{\s \in C} \pi_\s &= 1 & \pi_\s : \textrm{Avg. time spent in } \s \\
        \sum_{\tuple{\s,\s'} \in \transition_C} x_{\tuple{\s,\s'}} &= \pi_\s & \s \in \zstates \cap C \\
        x_{\tuple{\s,\s'}} &= \pi_\s \cdot \probp\lrc{\s}\lrc{\s'} & \s \in \rstates \cap C \\
        \sum_{\tuple{\s,\s'} \in \transition_C} x_{\tuple{\s,\s'}} \cdot \vec{r}\lrc{\tuple{\s,\s'}} &\geq \eps \cdot R & \vec{\MP}_{[1,d]}\lrc{> \vec{0}} \\
        \eps &\geq 0 & \\
    \end{align*}
    \caption{LP for an MEC $C$ for $\Gain$ \cite[Figure 3]{brazdil2014markov}}
    \label{fig:LP_gain}
\end{figure}

To show that all the constants lie in $\Ocompl\lrc{\exp\lrc{\size{\mdp^\ast}^{\Ocompl\lrc{1}}}}$, we simply consider $v_1^i + v_2^i$ and show that each of the constants for $v_1^i + v_2^i$ is in the required size. The result then follows by observing that every constant is positive.

From \Cref{lem:sufficiency_condition_pos_mp,lem:lower_bound_expectation_left_bdry_hitting_time_gen}, for 
$$Z_g \eqdef Z_b + R + k \cdot R + \max\lrc{R \ceil{ \lrc{\log_{c_i}(\delta \cdot (1-c_i))}} + R-1, h_{\Gain}}$$
it suffices to choose $k$ and $\delta$ such that
\begin{align}
    \lrc{\lrc{k+1} \cdot \lrc{\frac{1}{\delta}-1} + \ceil{ \lrc{\log_{c_i}(\delta \cdot (1-c_i))}}} \cdot \mu_i \cdot \lrc{1-2 g_\Gain^k}&-  \nonumber\\
    h_\Gain - R \cdot \lrc{ \lrc{n + \frac{2}{1-g_\Gain}} - \frac{2g_\Gain}{\lrc{1-g_\Gain}^2} }&> \nonumber\\
    R \cdot \lrc{n + \frac{2}{1-g_{\Bailout}} + \frac{Z+h_{\Bailout}+R}{\mu}}& \quad \forall\; 2 \leq i \leq d \label{eq:v_1^i > -v_2^_i}\\
    k &> n \\
    \lrc{\lrc{k+1} \cdot \lrc{\frac{1}{\delta}-1} + \ceil{ \lrc{\log_{c_i}(\delta \cdot (1-c_i))}}} \cdot \mu_i &\geq h_\Gain \quad \forall\; 2 \leq i \leq d
\end{align}

The last set of equations become redundant due to the first one. The left-hand side of \eqref{eq:v_1^i > -v_2^_i} is the over precision of constant $v_1^i$ and similarly the right-hand side is that of $-v^2_i$. It is simple to notice that once $k$ is fixed, then $v_1^i$ varies with $\delta$ as $ C_1 \cdot  \frac{1}{\delta} - C_2 \log\lrc{\frac{1}{\delta}} - C_3$ and $v_2^i$ as $ - C_4 \log\lrc{\frac{1}{\delta}} - C_5$ for some appropriate constants $C_i$. To further simplify, W.l.o.g, assume $\delta$ is sufficiently small such that 
$$R \ceil{ \lrc{\log_{c_i}(\delta \cdot (1-c_i))}} + R \geq h_{\Gain}.$$  
By definition of $Z_g$ and our assumption on $\delta$, we get that 
$$Z_g = Z_b + R + k \cdot R + R \ceil{ \lrc{\log_{c_i}(\delta \cdot (1-c_i))}} + R-1.$$
Then rearranging constants to one side and terms depending on $k$ and $\delta$ on other side we get
\begin{align*}
    \lrc{\lrc{k+1} \cdot \lrc{\frac{1}{\delta}-1} + \ceil{ \lrc{\log_{c_i}(\delta \cdot (1-c_i))}}} \cdot \mu_i \cdot \lrc{1-2 g_\Gain^k}& \\ 
    -\lrc{k + \ceil{ \lrc{\log_{c_i}(\delta \cdot (1-c_i))}}} \cdot \frac{R^2}{\mu} &> \\
    h_\Gain + R \cdot \lrc{ \lrc{n + \frac{2}{1-g_\Gain}} + \frac{2 \cdot g_\Gain}{\lrc{1-g_\Gain}^2} }& \\
    + R \cdot \lrc{n + \frac{2}{1-g_{\Bailout}} + \frac{Z_b + 3 \cdot R - 1+h_{\Bailout}}{\mu}}&
\end{align*}
We will upper bound the RHS and lower bound LHS by simpler formulas to get sufficient bounds on $k$ and $\delta$. 
Let $x_{\min,1}$ denote the minimum probability in $\mc^{\Gain}$, and $x_{\min,2}$ denote the minimum probability in $\mc^{\Bailout}$. 
Then by definition of $w_\Gain$ and $w$,
$$ x_{\min,1} \geq \frac{1}{2^{w_\Gain}} \quad x_{\min,2} \geq \frac{1}{2^w} $$
from \eqref{def:h}
\begin{align*}
    h_\Gain &= \frac{2\cdot n \cdot R}{x_{\min,1}^n} \\
    &\leq 2^{1 + \ceil{\log_2 n + \log_2 R} + n \cdot w_\Gain} \\
    &\leq 2^{f\lrc{n,w_\Gain,d,R}}
\end{align*}
Similarly, one gets $h_{\Bailout}\leq  2^{f\lrc{n,w,d,R}}$.

From ~\eqref{def:g}
\begin{align}
    1 - g_\Gain &=  1 - \exp{\lrc{\frac{-x_{\min,1}^{n}}{n}}} \nonumber \\
    &\geq \frac{x_{\min,1}^n}{2n} \lrc{\textrm{ Since } 1-e^{-x} \geq \frac{e-1}{e} \cdot x \geq \frac{x}{2} \textrm{ for } x \in [0,1]} \label{eq:ineq_g} \\
    \implies  R \cdot \frac{2}{1-g_\Gain} &\leq \frac{4\cdot n \cdot R}{x_{\min,1}^n} \\
    &= 2h_\Gain \\
    &\leq 2\,2^{f\lrc{n,w_\Gain,d,R}}
\end{align}
 Similarly, $R \cdot \frac{2}{1-g_\Bailout} \leq 2\,2^{f\lrc{n,w,d,R}} $
\begin{align*}
    R \cdot \frac{2g_\Gain}{\lrc{1-g_\Gain}^2} &\leq R \cdot \frac{2}{\lrc{1-g_\Gain}^2}  \\
    &\leq \frac{2 \cdot n^2 \cdot R}{x_{\min,1}^{2n}} \\
    &\leq 2^{1 + \ceil{2\log n + \log R} + 2n \cdot w_\Gain} \\
    &\leq 2^{2f\lrc{n,w_\Gain,d,R}}
\end{align*}

To get a lower bound on $\mu$, let $\?{B}$ be any BSCC of $\mc^{\Bailout}$ and $\probp_{\?{B}}$ be the one-step transition probability matrix in $\mc^{\Bailout}$ restricted to $\?{B}$. Clearly the number of states in $\?{B}$ is $\leq n$. The steady state probabilities $\pi_{\?{B}}$ are solution to the linear system \Cref{fig:LP_steady_state}.
\begin{figure} 
    \centering
    \begin{align*}
        \lrc{I - \probp_{\?{B}}}^T \cdot \pi_{\?{B}} &= \vec{0} \\
        \sum_{\s \in \?{B}} \pi_{\?{B}}\lrc{\s} &= 1 \\
        \pi_{\?{B}} &\geq \vec{0}
    \end{align*}
    \caption{$\textrm{LP}_1$: Linear program for steady state probabilities in a BSCC}
    \label{fig:LP_steady_state}
\end{figure}

We apply ~\cite[Theorem 15]{cmuLP}. First, lets multiply each row by lcm of denominators to get integer entries. The size of each entry is now bounded by $n \cdot w$. Therefore, size of the entire matrix is $\leq n^3 \cdot w$. $size\lrc{b}$ here $\leq 2n+2$. $\implies$ the denominator of each component of $\pi_{\?{B}}$ is $\leq 2^{\lrc{n^3w+2n+2}} \leq 2^{f^2\lrc{n,w,d,R}}$.

The mean payoff in this BSCC is then given by 
$$ \mu_{\?{B}} \eqdef \sum_{\s} \sum_{\setcomp{\s'}{\lrc{\s,\s'} \in \transition_{\?{B}}}} \pi_{\?{B}}\lrc{\s} \cdot \probp\lrc{\s}\lrc{\s'} \cdot r_1\lrc{\lrc{\s,\s'}}$$

The least common denominator for all such $\probp\lrc{\s}\lrc{\s'}$ will be $\leq 2^{n\cdot w} \leq 2^{f\lrc{n,w,d,R}}$ which means the overall denominator for $\mu_{\?{B}} \leq 2^{f^2\lrc{n,w,d,R} + f\lrc{n,w,d,R}} \leq 2^{2f^2\lrc{n,w,d,R}}$.

Therefore, $\mu_{\?{B}} \geq 2^{-2f^2\lrc{n,w,d,R}}$. Since $\mu$ is just minimum over all such $\mu_{\?{B}}$,
$$\mu \geq 2^{-2f^2\lrc{n,w,d,R}}$$

Finally, $Z_b = \max_\s i^{\Bailout}_\s \leq 3 \cdot n \cdot R$. Combining everything and from the fact that $w_\Gain \geq w$, we get that RHS 
\begin{align*}    
    &\leq 2^{f\lrc{n,w_\Gain,d,R}} + 2\cdot2^{f\lrc{n,w_\Gain,d,R}} + 2^{2f\lrc{n,w_\Gain,d,R}} + 2 \cdot n \cdot R \\
    &+ 2\cdot2^{f\lrc{n,w,d,R}} + R^2 \cdot 2^{2f^2\lrc{n,w,d,R}} \lrc{3 \cdot n \cdot R + 3\cdot R + 2^{3f^2\lrc{n,w,d,R}}} \\
    &\leq 5 \cdot 2^{f\lrc{n,w_\Gain,d,R}} + 2 \cdot n \cdot R + 2^{2f\lrc{n,w_\Gain,d,R}} \\
    &+ R^2 \cdot 2^{2f^2\lrc{n,w,d,R}} \lrc{3 \cdot \lrc{n+1} \cdot R + 2^{3f^2\lrc{n,w,d,R}}} \\
    &\leq 7 \cdot 2^{2f\lrc{n,w_\Gain,d,R}} + 2^{2f\lrc{n,w,d,R}} \cdot 2^{2f^2\lrc{n,w,d,R}} \lrc{2^{2f\lrc{n,w,d,R}} + 2^{3f^2\lrc{n,w,d,R}}} \\
    &\leq 2^{2f\lrc{n,w_\Gain,d,R} + 3} + 2^{4f^2\lrc{n,w,d,R}} \cdot 2^{5f^2\lrc{n,w,d,R}} \\
    &\leq 2 \cdot 2^{9f^2\lrc{n,w_\Gain,d,R}}
\end{align*}
To lower bound LHS, first choose $k$ to be sufficiently large such that $g^k_\Gain \leq 1/4$. Let $k = \ceil{\frac{2n}{x_{\min,1}^n}} \geq n + 1$. Then
\begin{align*}
    \lrc{\lrc{k+1} \cdot \lrc{\frac{1}{\delta}-1} + \ceil{ \lrc{\log_{c_i}(\delta \cdot (1-c_i))}}} \cdot \mu_i \cdot \lrc{1-2 g_\Gain^k} &\geq \lrc{k+1} \cdot \lrc{\frac{1}{\delta}-1} \cdot \frac{\mu_i}{2} \\
    &\geq \frac{k \mu_i}{2 \delta} \quad \textrm{Assume } \delta < \frac{1}{k+1} \\
    &\geq 2^{-2f^2\lrc{n,w_\Gain,d,R}} \cdot \frac{1}{\delta} \\ 
    &\textrm{since } k \geq 2 \textrm{ and } \mu_i \geq 2^{-2f^2\lrc{n,w_\Gain,d,R}}
\end{align*}

\begin{align*}
    \frac{k\,R^2}{\mu} &\leq k\,R^2\; 2^{f\lrc{n,w,d,R}} \\
    &\leq 2^{2\,f\lrc{n,w_\Gain,d,R}} \cdot 2^{f\lrc{n,w,d,R}} \\
    \implies -\frac{k\,R^2}{\mu} &\geq 2^{3\,f\lrc{n,w_\Gain,d,R}}
\end{align*}
From ~\eqref{def:c} and ~\eqref{def:eta}
\begin{align*}
    \log_{c_i}(\delta \cdot (1-c_i)) \cdot \frac{R^2}{\mu} &= \frac{\log 1/\delta + \log (1/\lrc{1-c_i})}{\log 1/c_i} \cdot \frac{R^2}{\mu} \\
    &\leq \lrc{\log 1/\delta + \log (1/\lrc{1-c_i})} \cdot \frac{2 \eta_i^2 \cdot R^2}{\mu_i^2 \cdot \mu}
\end{align*}
Using $\eta_i \leq 3 \cdot h_\Gain$, and $1 - c_i \geq \frac{\mu_i^2}{2\eta_i^2}$, we get
\begin{align*}
    &\leq \lrc{\log_2 \lrc{1/\delta} + 1 + 2\cdot \log_2\eta_i + 2\cdot\log_2 1/\mu_i } \cdot  \lrc{3h_\Gain R}^2 \cdot 2^{6f^2\lrc{n,w_\Gain,d,R}} \\
    &\leq \lrc{\log_2 \lrc{1/\delta} + 1 + 2\cdot \log_2 3 + 2f\lrc{n,w_\Gain,d,R} + 4f^2\lrc{n,w_\Gain,d,R}} \cdot 9\cdot 2^{10f^2\lrc{m,w_\Gain,d,R}} \\
    &\leq \lrc{\log_2 \lrc{1/\delta} + 5 + 6\cdot f^2\lrc{n,w_\Gain,d,R}} \cdot 9\cdot 2^{10f^2\lrc{n,w_\Gain,d,R}} \\
    &\leq  9 \cdot 2^{10f^2\lrc{n,w_\Gain,d,R}} \log_2 \lrc{\frac{1}{\delta}} + 99\cdot 2^{11f^2\lrc{n,w_\Gain,d,R}}
\end{align*}

Combining everything, we have LHS
\begin{equation}
    \geq 2^{-2f^2\lrc{n,w_\Gain,d,R}} \cdot \frac{1}{\delta} - 9 \cdot 2^{10f^2\lrc{n,w_\Gain,d,R}} \log_2 \lrc{\frac{1}{\delta}} - 99 2^{11f^2\lrc{n,w_\Gain,d,R}} - 2^{3f\lrc{n,w_\Gain,d,R}}
\end{equation}

Comparing it with the constants from \Cref{lem:lower_bounds_expected_sums_gain_bailout}, one can see that $C_1 =  2^{-2f^2\lrc{n,w_\Gain,d,R}}$, $C_2 + C_3 = 9 \cdot 2^{10f^2\lrc{n,w_\Gain,d,R}}$ and $C_4 + C_5 = 102\, 2^{11f^2\lrc{n,w_\Gain,d,R}}$ all of which are in the required complexity.

Finally, it suffices to choose a $\delta$ such that
\begin{align*}
    2^{-2f^2\lrc{n,w_\Gain,d,R}} \cdot \frac{1}{\delta} - 9 \cdot 2^{10f^2\lrc{n,w_\Gain,d,R}} \log_2 \lrc{\frac{1}{\delta}} &>  102\, 2^{11f^2\lrc{n,w_\Gain,d,R}}
\end{align*}

$\delta = 2^{-20f^2\lrc{n,w_\Gain,d,R}}$ should satisfy the required inequality. Therefore, with $k = \ceil{\frac{2n}{x_{\min,1}^n}}$, and $\delta = 2^{-20f^2\lrc{n,w_\Gain,d,R}}$ the overall bound $b$ will be exponential in $\size{\mdp}$.
\end{proof}

\section{Proofs of \texorpdfstring{\Cref{sec:mem_bounds}}{memory bounds section}}\label{app:mem_bounds}

Before proceeding with the proof of \Cref{lem:BSCC_mem_bound}, we state some useful definitions and prove some intermediate lemmas 
which makes it easier to understand the idea. We start by defining the notion of a winning end component (WEC).
\begin{definition}\label{def:WEC}
    Let $\mdp\lrc{B} = \tuple{\states_B,{\zstates}_B, {\rstates}_B,
      \transition_B, \vec{r}_B}$ be an end component of $\mdp$. 
    We say that $\mdp\lrc{B}$ is a WEC (winning end component) iff there is some strategy $\zstrat \in \zfinstrats{\mdp\lrc{B}}$ such that 
    \begin{itemize}
        \item ${\mdp\lrc{B}}^{\zstrat}$ is irreducible and the end component defined by it is exactly $\mdp\lrc{B}$.
        \item for every state $q\,\in\,\states_B$, there is some minimal energy level $j_q$ such that $\Prob[\mdp\lrc{B}][\zstrat,q]{\obj\lrc{j_q}} = 1$.
    \end{itemize}
\end{definition}

We simply say $B$ is a WEC instead of $\mdp\lrc{B}$ is a WEC for succinctness.

Furthermore, denote by $\mathbb{C}\lrc{B} \eqdef \setcomp{\mathsf{C}}{\mathsf{C} \textrm{ is a simple cycle in } \mdp\lrc{B}}$, the set of all simple 
cycles in $\mdp\lrc{B}$ and given a simple cycle $\mathsf{C} = \s_0 \energymove{c_0} \s_1 \energymove{c_1} \ldots \s_j = \s_0$ be a cycle of length 
$j$, where $c_i$ denotes the rewards in the energy $\lrc{1^{st}}$ dimension, define the effect of $\mathsf{C}$ to be $\mathtt{eff}\lrc{\mathsf{C}} \eqdef \sum_{k=0}^{j-1} c_k$.

A WEC $B$ is called a WEC of Type-I if there is some $\mathsf{C}\, \in \, \mathbb{C}\lrc{B}$ such that $\mathtt{eff}\lrc{\mathsf{C}} > 0$. 
Otherwise, it is called a WEC of Type-II.

\begin{lemma}\label{lem:WEC_Type-I => MP_1 > 0}
    If $B$ is a WEC of Type-I, then one can choose $\zstrat$ such that it satisfies all the conditions of \Cref{def:WEC} along with
    $$ \Prob[\mdp\lrc{B}][\zstrat,q]{\MP_1\lrc{> 0}} = 1 $$ for every state $q \in \states_B$.
\end{lemma}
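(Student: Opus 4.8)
The plan is to show that the finite-memory strategy witnessing that $B$ is a WEC can be taken to additionally guarantee $\MP_1\lrc{> 0}$, and that the only genuinely new fact required is that the \emph{optimal} long-run mean payoff in the energy dimension inside $\mdp\lrc{B}$ is strictly positive. I would fix a witness $\zstrat_0 \in \zfinstrats{\mdp\lrc{B}}$ for \Cref{def:WEC}: it is finite-memory, induces an irreducible chain whose end component is exactly $\mdp\lrc{B}$ (so every edge of $\transition_B$ carries positive stationary frequency), it is energy-safe from some $j_q$, and it wins $\vec{\MP_{[2,d]}\lrc{> 0}}$. By the strong law, energy-safety forces the energy-dimension mean payoff $\mu_1$ of $\zstrat_0$ to be $\geq 0$; if $\mu_1>0$ we are already done, so the content lies in the case $\mu_1=0$. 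Let $\rho^\ast \eqdef \max_f \sum_e f_e\, r_1(e)$ be the optimal mean payoff in dimension $1$ over the (communicating) component $\mdp\lrc{B}$. First I would prove $\rho^\ast>0$, and then assemble the required strategy.

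The crux is $\rho^\ast>0$, which I would prove by a supermartingale argument. Suppose $\rho^\ast = 0$. Since $\mdp\lrc{B}$ is communicating, the mean-payoff optimality equations for gain $0$ (the controlled analogue of \Cref{lem:potential_vector}) supply a function $g\colon \states_B \to \R$ with $g(\s)\ge r_1(\s,\s')+g(\s')$ for every controlled edge $(\s,\s')\in\transition_B$ and $g(\s)=\sum_{\s'}\probp\lrc{\s}\lrc{\s'}\,(r_1(\s,\s')+g(\s'))$ at every random $\s$. Consider, under $\zstrat_0$, the process $Z_n \eqdef \lrc{Y_n}_1 + g(X_n)$ (cf.\ \Cref{fact:martingale}). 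The controlled inequality makes $Z_n$ non-increasing on controlled steps and the random equality makes it a martingale step, so $Z_n$ is a supermartingale; energy-safety bounds $\lrc{Y_n}_1\ge -j_q$ surely, hence $Z_n$ is bounded below, converges almost surely, and so $Z_{n+1}-Z_n \to 0$ a.s. Now invoke Type-I: for the positive simple cycle $\mathsf C$, telescoping $g$ gives $\sum_{(\s_i,\s_{i+1})\in\mathsf C}\lrc{g(\s_i)-r_1(\s_i,\s_{i+1})-g(\s_{i+1})} = -\mathtt{eff}\lrc{\mathsf C} < 0$, while each controlled term is $\ge 0$; hence some \emph{random} state $\s_r$ on $\mathsf C$ satisfies $r_1(\s_r,\s_{r+1})+g(\s_{r+1})-g(\s_r)>0$. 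Along the irreducible chain this graph-transition $\s_r\to\s_{r+1}$, an outcome of positive probability at $\s_r$, occurs infinitely often a.s., and at each such step $Z_{n+1}-Z_n$ equals the fixed positive constant $r_1(\s_r,\s_{r+1})+g(\s_{r+1})-g(\s_r)$, contradicting $Z_{n+1}-Z_n\to 0$. Thus $\rho^\ast>0$.

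For the assembly I would argue as follows. From $\rho^\ast>0$ there is a flow $f^\ast$ with positive energy-dimension value, while $\zstrat_0$ yields a full-support flow $f_0$ that is positive in dimensions $[2,d]$ with $\mu_1\ge 0$; a convex combination $(1-\eps)f_0+\eps f^\ast$ is a valid full-support flow strictly positive in \emph{all} $d$ dimensions for small $\eps$, so $\states_B\subseteq\AS\lrc{\vec{\MP_{[1,d]}\lrc{> 0}}}$. By \Cref{lem:mr_gain_strategy} there is a uniform (and, after mixing in a vanishing amount of exploration, full-support) MR strategy $\optzstrat_{\Gain}$ winning $\vec{\MP_{[1,d]}\lrc{> 0}}$ from every state of $\mdp\lrc{B}$; and since $\states_B\subseteq\AS\lrc{\obj\lrc{j_q}}$, the argument of \Cref{lem:gain_bailout_existence} together with \Cref{lem:md_bailout_strategy} gives an energy-safe MD strategy $\optzstrat_{\Bailout}$. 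Plugging these two memoryless strategies into the alternating finite-memory strategy $\optzstrat_{\mathtt{alt, Z_b, Z_g, b}}$ and invoking \Cref{lem:sufficiency_condition_pos_mp,lem:lower_bounds_expected_sums_gain_bailout,rem:infix} on $\mdp\lrc{B}$, one obtains, for a sufficiently large (exponential) $b$, a single finite-memory strategy $\zstrat$ that is energy-safe from the minimal level and achieves $\vec{\MP_{[1,d]}\lrc{> 0}}$ almost surely. Using a full-support Gain phase keeps the induced chain irreducible with end component exactly $\mdp\lrc{B}$, so $\zstrat$ meets all conditions of \Cref{def:WEC} and additionally satisfies $\Prob[\mdp\lrc{B}][\zstrat,q]{\MP_1\lrc{> 0}}=1$ for every $q\in\states_B$.

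The hard part will be the step $\rho^\ast>0$. The Type-I hypothesis is purely combinatorial, namely a positive simple cycle in the directed graph $\mdp\lrc{B}$, and such a cycle may pass through random states, so it is neither a valid flow nor a sequence of moves the controller can force; consequently a positive cycle does \emph{not} by itself yield a strategy with positive energy mean payoff (indeed the implication fails without the WEC hypothesis). The supermartingale argument is exactly what couples the combinatorial cycle to the probabilistic objective: the memory-independent potential $g$ telescopes along the graph cycle, energy-safety supplies the lower bound that yields almost-sure convergence, and irreducibility turns a single favourable random outcome on the cycle into infinitely many jumps of fixed positive size, producing the contradiction. The remaining assembly is then routine given the Gain/Bailout machinery already developed.
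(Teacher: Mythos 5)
Your supermartingale proof that the optimal energy-dimension mean payoff $\rho^\ast$ in $\mdp\lrc{B}$ is strictly positive is correct, and it is a genuinely different route from the paper's: the paper never passes through the optimal value, but instead shows directly that every cycle of the induced product chain ${\mdp\lrc{B}}^{\zstrat}$ has effect $0$ and then exhibits a negative-effect path connecting two occurrences of the Type-I cycle's edges in that chain. The telescoping-potential argument, with energy-safety supplying the lower bound for supermartingale convergence and irreducibility turning one favourable random outcome into infinitely many fixed positive jumps, is sound.

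The gap is in the assembly. Knowing $\rho^\ast>0$ together with the achievability of energy safety does \emph{not} give $\states_B\subseteq\AS\lrc{\Bailout(k)}$, and the step ``the argument of \Cref{lem:gain_bailout_existence} together with \Cref{lem:md_bailout_strategy} gives an energy-safe MD strategy $\optzstrat_{\Bailout}$'' does not go through: \Cref{lem:md_bailout_strategy} \emph{presupposes} $\s\in\AS(\Bailout(k))$, and the proof of \Cref{lem:gain_bailout_existence} is tied to $\mdp^\ast$ (its key claim hinges on the option to jump to $\s_{\textrm{win}}$ once the level $f_{X_j}$ is reached, and on \Cref{lem:fluctuate}, which compares finite-memory winnability in $\mdp$ and $\mdp^\ast$); none of this structure is available inside $\mdp\lrc{B}$. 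The implication you are relying on is in fact false without using the WEC hypothesis in full. Consider a controlled state $s_0$ with a zero-reward safe detour and a risky detour through a random state whose two outcomes carry energy rewards $+2$ and $-1$ with probability $1/2$ each: this MEC admits energy-safe strategies from level $0$, has $\rho^\ast>0$, contains a positive simple cycle, and every state lies in $\AS\lrc{\MP_1\lrc{>0}}$ via a full-support flow, yet no strategy is simultaneously energy-safe and almost surely mean-payoff positive, because with probability at least $2^{-k}$ the energy is driven to $0$, after which safety forces the zero-reward detour forever. (That MEC is of course not a WEC, precisely because no energy-safe strategy covers it irreducibly.) So the property you must exploit, and currently do not, is that the \emph{single} witness $\zstrat$ already covers all of $\transition_B$ irreducibly while being energy-safe. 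The paper uses this by perturbing only the \emph{memory update} of $\zstrat$: since all cycles of ${\mdp\lrc{B}}^{\zstrat}$ have effect $0$, some path connecting two occurrences of an edge of the positive simple cycle has negative effect, and short-cutting it with small probability lands in a product state requiring \emph{less} energy, so energy safety is preserved for free while a positive-effect product cycle is created. To complete your proof you would need either this step or some other argument that upgrades $\rho^\ast>0$ to a strategy that is simultaneously energy-safe, fully and irreducibly covering, and mean-payoff positive in dimension $1$.
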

\begin{proof}
    Assume that $\zstrat = \memstrattuple$ which satisfies the requirements of \Cref{def:WEC} gives a mean payoff of $0$ in the energy dimension. Let $\mathfrak{C} = \tuple{\s_0,\memconf_0} \energymove{c_0} \tuple{\s_1,\memconf_1} \energymove{c_1} \ldots \tuple{\s_k,\memconf_k} = \tuple{\s_0,\memconf_k}$ be a simple cycle of length $k$ in ${\mdp\lrc{B}}^\zstrat$.
    \begin{claim}\label{claim:Composite_cycle_effect_T_2}
        For any cycle $\mathfrak{C}$ in $\mdp^{\zstrat}$, $\mathtt{eff}\lrc{\mathfrak{C}} = 0$.
    \end{claim}
    \begin{claimproof}
        We have $\Prob[\mdp\lrc{B}][\zstrat,\s]{\obj\lrc{j_\s}} = 1$, and $\Prob[\mdp][\zstrat,\s]{\MP_1 \lrc{> 0}} = 0$. The former implies that $\MP_1\lrc{\geq 0}$ surely. In fact, it can be never be the case that $\mathtt{eff}\lrc{\mathfrak{C}} < 0$ as otherwise $\en_1$ and hence $\obj\lrc{j_\s}$ is not satisfied almost surely. If $\mathtt{eff}\lrc{\mathfrak{C}} > 0$ for some $\mathfrak{C}$, this then implies a positive mean payoff since $\mdp^{\zstrat}$ is an irreducible, finite Markov chain, a contradiction. Hence, $\mathtt{eff}\lrc{\mathfrak{C}} = 0$.
    \end{claimproof}
    We construct a strategy $\zstrat'$ which follows \Cref{def:WEC} such that it
    almost surely satisfies positive meanpayoff in the energy dimension along with $\obj\lrc{j_\s}$ \ie
    $\Prob[\mdp\lrc{B}][\zstrat',\s]{\obj\lrc{j_\s}\,\cap\,\MP_1\lrc{> 0}} = 1$. Since $B$ is a WEC of Type-I, there is some cycle $\mathsf{C} = \s_0 \energymove{c_0} \ldots \energymove{c_{\ell-1}} \s_{\ell} = \s_0$ with positive effect. And since effect of every cycle in $\mdp^{\zstrat}$ is $0$, this implies that the reward along \emph{any} path between two given states of $\mdp^{\zstrat}$ must be identical. Also, every edge in $\transition_B$ occurs somewhere in $\mdp^\zstrat$ by definition of $\zstrat$. Let the edge $\s_i \energymove{c_i} \s_{i+1}$ in $\mathsf{C}$ occur at memory mode $\memconfa_i$ \ie $\tuple{\s_i,\memconfa_i} \energymove{c_i} \tuple{\s_{i+1},\memconfa^{\prime}_i}$. $\memconfa^{\prime}_i$ may or may not be the same as $\memconfa_{i+1}$. However, irreducibility of the Markov chain implies there are paths $p_{i+1}$ connecting $\tuple{\s_{i+1},\memconfa^{\prime}_i}$ to $\tuple{\s_{i+1},\memconfa_{i+1}}$. Consider the cycle $\tuple{\s_0,\memconfa_0} \energymove{c_0} \tuple{\s_1, \memconfa^{\prime}_0} \energymove{\mathtt{eff}\lrc{p_1}} \tuple{\s_1,\memconfa_1} \ldots \energymove{\mathtt{eff}\lrc{p_\ell}} \tuple{\s_0,\memconfa_0}$. The effect of the entire cycle is $0$ by~\cref{claim:Composite_cycle_effect_T_2} and $\mathsf{C}$ is part of this cycle. This implies that the sum of effects of all the paths $p_{i+1}$ is negative and therefore at least one $p_{i+1}$ has negative effect. The strategy $\zstrat'$ simply bypasses this path and updates the memory mode directly to $\memconfa_{i+1}$ with some small probability.
    For some sufficiently small $\eps > 0$
    \begin{itemize}
        \item with probability $\lrc{1-\eps}$ follow $\zstrat$ at $\tuple{\s_i, \memconfa_i}$
        \item with probability $\eps$, directly move to $\tuple{\s_{i+1},\memconfa_{i+1}}$
    \end{itemize}
    Observe that it doesn't matter if $\s_i$ was a random or a controlled state as the final destination for both edges is the 
    same with only the memory mode being different, so $\zstrat'$ is a valid strategy which updates its memory stochastically.

    It is easy to see that $\zstrat'$ also induces an irreducible Markov chain with every edge in $\transition_B$ occurring at least once, and that the 
    energy objective $\en\lrc{j_\s}$ is satisfied as the shortcut introduced has a positive effect on the energy 
    level. Furthermore, for sufficiently small $\eps$, it doesn't change the mean payoff in other dimensions by much thereby still ensuring that 
    $\zstrat'$ satisfies $\vec{\MP}_{[2,d]}\lrc{> \vec{0}}$. Finally, the addition of this new edge now causes the mean payoff in $1^{st}$ dimension to be strictly $ > 0$ as there is now at least one (complex) cycle with positive weight and still no cycles with negative weight from the properties of $\zstrat$.
\end{proof}

\Cref{lem:WEC_Type-I => MP_1 > 0} shows that it is possible to win both $\Gain$ and $\Bailout$ almost surely in
$\mdp\lrc{B}$ from every state in $q \in B$ whenever $B$ is a Type-I WEC.
I.e., $q \in \AS^{\mdp\lrc{B}}\lrc{\vec{\MP}_{[1,d]} \lrc{> \vec{0}}}$.
Moreover, the minimal safe energy for $\Bailout$ in $\mdp\lrc{Q}$ from $q$ is
exactly $j_q$, that is
$q \in \AS^{\mdp\lrc{B}}\lrc{\en_1\lrc{j_q}\, \cap \, \MP_1\lrc{> 0}}$.
Thus, $\mdp\lrc{B}$ satisfies the conclusion of
\Cref{lem:gain_bailout_existence}, i.e., it behaves like $\mdp^\ast$.

Therefore, the strategy $\optzstrat_{\mathtt{alt,Z_b,Z_g,b}}$, defined in \Cref{sec:inf_str ==> fin_str},
is almost surely winning $\obj\lrc{j_q}$ in $\mdp\lrc{B}$.
We can now carry over the analysis on the memory bound $b$ for $\mdp^{\ast}$ from
\Cref{lem:lower_bounds_expected_sums_gain_bailout,lem:sufficiency_condition_pos_mp}
to $\mdp\lrc{B}$.
The only difference is that the size is now measured in
$|\mdp\lrc{B}| \le |\mdp|$. So we obtain the following lemma.

\begin{lemma}\label{lem:bound_ell_1}
    If $B$ is a WEC of Type-I, there exists a bound $b_B = \Ocompl\lrc{\exp\lrc{\size{\mdp\lrc{Q}}^{\Ocompl\lrc{1}}}}$ such that for all $q \in B$
    \[
        \Prob[\mdp\lrc{B}][\optzstrat_{\mathtt{alt,Z_b,Z_g,b_B}},q]{\obj\lrc{j_q}} = 1.
    \]
\end{lemma}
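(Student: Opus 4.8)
The plan is to reduce the claim entirely to the machinery already developed for $\mdp^\ast$ in \Cref{sec:inf_str ==> fin_str}, exploiting that a Type-I WEC behaves like $\mdp^\ast$. The decisive input is \Cref{lem:WEC_Type-I => MP_1 > 0}: because $B$ is Type-I it contains a simple cycle of positive energy effect, which lets us choose a strategy satisfying the WEC conditions that additionally wins $\MP_1\lrc{> 0}$ almost surely from every $q\in B$. First I would record the two consequences needed downstream, both already noted in the paragraph preceding the statement: every $q\in B$ lies in $\AS^{\mdp\lrc{B}}\lrc{\vec{\MP}_{[1,d]}\lrc{> \vec{0}}}$, i.e.\ in $\AS^{\mdp\lrc{B}}\lrc{\Gain}$, and every $q\in B$ lies in $\AS^{\mdp\lrc{B}}\lrc{\Bailout\lrc{j_q}}$ with $j_q$ the minimal safe energy. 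These are exactly the hypotheses under which \Cref{lem:gain_bailout_existence} produces uniform memoryless strategies $\optzstrat_{\Gain}$ and $\optzstrat_{\Bailout}$, so the conclusion of that lemma holds verbatim inside $\mdp\lrc{B}$, even though it fails for an arbitrary MDP.

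Second, with $\optzstrat_{\Gain}$ and $\optzstrat_{\Bailout}$ in hand I would invoke the alternating construction of \Cref{lem:reverse_gain_bailout_existence}: switch to the Bailout phase whenever the remembered energy drops below $Z_b+R$ and back to the Gain phase once it reaches $Z_g$. With unbounded memory this already wins $\obj\lrc{j_q}$ almost surely, since for a large enough $Z_g$ there is a fixed positive probability of never needing a further Bailout, so outside a null set only finitely many Bailouts occur and the run eventually plays $\optzstrat_{\Gain}$ forever.

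The real work is bounding the remembered energy by a \emph{finite} $b_B$. Here I would fix the finite-memory strategy $\optzstrat_{\mathtt{alt,Z_b,Z_g,b}}$ (energy truncated at $b$) and, exactly as in the proof of \Cref{thm:main}(\Cref{res:inf_str ==> fin_str}), reduce the claim to showing that the induced finite Markov chain has strictly positive expected mean payoff in every dimension $i\in[2,d]$ in each BSCC. By \Cref{lem:sufficiency_condition_pos_mp} this holds once constants $v^1_i>0$ and $v^2_i$ with $v^1_i+v^2_i>0$ exist, and \Cref{lem:lower_bounds_expected_sums_gain_bailout} supplies such constants together with the choice $b=Z_g+1$. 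The only change from the $\mdp^\ast$ setting is that the size parameter is now $\size{\mdp\lrc{B}}$ in place of $\size{\mdp^\ast}$; since $\size{\mdp\lrc{B}}\le\size{\mdp}$, the same estimates yield $b_B\in\Ocompl\lrc{\exp\lrc{\size{\mdp\lrc{B}}^{\Ocompl\lrc{1}}}}$, giving the stated bound.

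The main obstacle I anticipate is justifying that \Cref{lem:lower_bounds_expected_sums_gain_bailout} genuinely transfers. That lemma was phrased for $\mdp^\ast$, whose decisive feature was the winning sink $\s_\textrm{win}$ with reward $\vec{1}$ that allowed the energy to be pumped arbitrarily high; in $\mdp\lrc{B}$ there is no such sink, and the required upward drift in the energy dimension must instead come from the positive-effect cycle. I would therefore take care to verify that the hypothesis actually used in \Cref{lem:lower_bounds_expected_sums_gain_bailout} and its supporting general Markov-chain bounds — namely that $\AS\lrc{\MP_1\lrc{> 0}}$ covers all states and that every BSCC of $\mc^{\Gain}$ has strictly positive mean payoff in the first dimension — holds inside $\mdp\lrc{B}$, which is precisely what \Cref{lem:WEC_Type-I => MP_1 > 0} delivers, rather than relying on any property specific to $\s_\textrm{win}$.
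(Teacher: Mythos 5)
Your proposal is correct and follows essentially the same route as the paper: it uses \Cref{lem:WEC_Type-I => MP_1 > 0} to establish that $\Gain$ and $\Bailout\lrc{j_q}$ are almost surely winnable from every state of the Type-I WEC, concludes that $\mdp\lrc{B}$ satisfies the conclusion of \Cref{lem:gain_bailout_existence} (i.e.\ behaves like $\mdp^\ast$), and then transfers the bound $b_B$ from \Cref{lem:sufficiency_condition_pos_mp,lem:lower_bounds_expected_sums_gain_bailout} with $\size{\mdp\lrc{B}} \le \size{\mdp}$. Your added check that the positive drift in the energy dimension now comes from the positive-effect cycle rather than the sink $\s_\textrm{win}$ is a worthwhile clarification of a point the paper leaves implicit, but it does not change the argument.
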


By \Cref{rem:infix}, it is also true that $\Prob[\mdp\lrc{B}][\optzstrat_{\mathtt{alt,Z_b,Z_g,b_B}},q]{\obj\lrc{j_q} \, \cap \, \infix\lrc{b_B}} = 1$.

Note that, if there is no positive effect cycle in $B$, there cannot be any cycle with negative effect as well since every cycle is taken infinitely 
often in a WEC. 
So, in contrast to Type-I, if $B$ is such that $\mathtt{eff}\lrc{\mathsf{C}} = 0$ for every simple cycle $\mathsf{C}$, 
then $B$ is called a WEC of Type-II. But this implies that the maximum fluctuation in energy level is at most $\size{\states_B} \cdot R \leq \size{\states} \cdot R$. Therefore, we get the following.

\begin{lemma}\label{lem:bound_ell_2}
    For every WEC $B$ of Type-II, there is a finite-memory strategy
    $\zstrat_B$
    with a constant $b_B\, \in \, \Ocompl\lrc{\size{\states} \cdot R}$ such that
    $$\Prob[\mdp\lrc{B}][\zstrat_B,\s]{\obj\lrc{j_\s} \, \cap \, \infix\lrc{b_B}} = 1.$$
\end{lemma}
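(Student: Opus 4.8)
The plan is to exploit the defining feature of a Type-II WEC, that $\mathtt{eff}\lrc{\mathsf{C}} = 0$ for every simple cycle $\mathsf{C}$, in order to show that the energy level is a static function of the current state and therefore fluctuates by only a bounded amount.

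First I would note that every cycle in $\mdp\lrc{B}$ decomposes into simple cycles, so since each simple cycle has energy-effect $0$, \emph{every} cycle in $\mdp\lrc{B}$ has energy-effect $0$. Fixing an arbitrary base state $\s_0 \in \states_B$, I would define a potential $\phi \colon \states_B \to \Z$ by taking $\phi\lrc{\s}$ to be the energy-effect of an arbitrary path from $\s_0$ to $\s$; such a path exists because $\mdp\lrc{B}$ is strongly connected, and $\phi$ is well defined because any two such paths differ by a cycle of effect $0$. By construction every edge $\s \energymove{c} \s'$ of $\transition_B$ then satisfies $c = \phi\lrc{\s'} - \phi\lrc{\s}$.

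The key consequence is a telescoping identity: along any path $\s_m \energymove{c_m} \cdots \energymove{c_{n-1}} \s_n$ inside $\mdp\lrc{B}$ one has $\sum_{i=m}^{n-1} c_i = \phi\lrc{\s_n} - \phi\lrc{\s_m}$, independently of the path taken. Hence, starting from any energy $k$, the energy at a state $\s$ is always exactly $k + \phi\lrc{\s} - \phi\lrc{\s_0}$, a function of the current state alone. Since any two states are joined by a simple path of length at most $\size{\states_B} - 1$ whose edges each change $\phi$ by at most $R$ in absolute value, we obtain $\abs{\phi\lrc{\s} - \phi\lrc{\s'}} \le \lrc{\size{\states_B}-1}\cdot R$ for all $\s,\s' \in \states_B$. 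Consequently the energy fluctuation along any run staying in $\mdp\lrc{B}$ is bounded, and $b_B \eqdef \size{\states_B}\cdot R \in \Ocompl\lrc{\size{\states}\cdot R}$ suffices.

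Finally I would take $\zstrat_B$ to be the finite-memory witness strategy from \Cref{def:WEC}, which keeps all runs inside $\mdp\lrc{B}$ and satisfies $\Prob[\mdp\lrc{B}][\zstrat_B,\s]{\obj\lrc{j_\s}} = 1$, where $j_\s$ is the minimal safe energy. Because every infix reward $\sum_{i=m}^{n-1} c_i = \phi\lrc{\s_n} - \phi\lrc{\s_m} \ge -\lrc{\size{\states_B}-1}R$ is bounded below \emph{surely}, the condition $\infix\lrc{b_B}$ holds surely; combining this with the almost-sure satisfaction of $\obj\lrc{j_\s}$ yields $\Prob[\mdp\lrc{B}][\zstrat_B,\s]{\obj\lrc{j_\s}\,\cap\,\infix\lrc{b_B}} = 1$ for every $\s \in \states_B$. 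I expect the only mildly delicate step to be the well-definedness of $\phi$, which hinges on upgrading the zero-effect property from simple cycles to all cycles; the remaining bounds are elementary once the potential is in place.
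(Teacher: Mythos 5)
Your proposal is correct and takes essentially the same route as the paper, which simply observes that in a Type-II WEC every simple cycle has effect $0$ and hence the energy fluctuation is bounded by $\size{\states_B}\cdot R$; your potential function $\phi$ is just the explicit formalization of that observation, and the choice of $\zstrat_B$ as the witness from \Cref{def:WEC} matches the paper. The only cosmetic point is that in a directed graph two paths from $\s_0$ to $\s$ do not literally ``differ by a cycle''; well-definedness of $\phi$ is cleanest via appending a common return path from $\s$ to $\s_0$ and using that both resulting closed walks decompose into simple cycles of effect $0$.
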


We are now ready to prove \Cref{lem:BSCC_mem_bound}.
\bsccmemboundTthree*
\begin{proof}
We provide the bounds based on the type of the end component $\mdp\lrc{B}$. First observe that $B$ is a WEC as $\zstrat$ acts as a witness by satisfying 
the requirements of \Cref{def:WEC}. If $B$ is a WEC of Type-II, then by \Cref{lem:bound_ell_2} the minimal energy $j_q$ required to win from any state 
$q$ in $\states_{B}$ is $\leq \size{\states_B} \cdot R \leq \size{\states} \cdot R$ and the constant $b_B$ is bounded by 
$\Ocompl\lrc{\size{\states} \cdot R}$. Choose $\zstrat_B$ and $b_B$ be the strategy and the constant from \Cref{lem:bound_ell_2} in this case.
Otherwise, $B$ is a WEC of Type-I. Therefore, $j_q$ in this case would be the same as the minimal energy to satisfy $\Bailout$ which
by \Cref{lem:md_bailout_strategy} is $\leq 3  \size{\states_{B}}  R \leq  3  \size{\states}  R$. By choosing $\zstrat_B$ as 
$\optzstrat_{\mathtt{alt,Z_b,Z_g,b_B}}$, with $b_B$ from \Cref{lem:bound_ell_1}, we are done.
\end{proof}

\transientmemboundTthree*
\begin{proof}
It is clear that $\zstrat$ is also a witness for $\en_1\lrc{i_\s}\, \cap\, \eventually\,T$. 
But by \cite[Lemma 2]{CD2011}, this can be achieved with at most $2  \size{\states}  R$ fluctuation in energy. 
However, since we also need to ensure that the necessary minimal energy level, one can then simply encode the energy level into the state space
of $\mdp$ and enlarge $\mdp$ up to $\lrc{3+2}  \size{\states}  R$. So the states of this new MDP $\mdp^{\prime}$ will now be 
$\states \x [0,5  \size{\states}  R]$. 
Let $T^{\prime} = \bigcup_{ q \in \states_{B}} q \x [i^{B}_q, 5 \size{\states}  R]$. Then, it is not hard to see that when starting from $\tuple{\s,i_\s}$ in $\mdp^{\prime}$, one almost surely satisfies $\eventually\, T^{\prime}$. (Move according to $\zstrat$ until you hit the maximum energy level of 
$5 \size{\states} R$, at which point switch to one of the winning strategies for 
$\en_1\lrc{\cdot}\, \cap\, \eventually\, T$ which uses only $\Ocompl\lrc{\size{\states}\cdot R}$ memory modes.)
Therefore, one can reach a state $q$ in a BSCC with its safe energy level with a fluctuation of at most $5\cdot\size{\states}\cdot R$.
\end{proof}

\end{document}